\documentclass[11pt,letterpaper,english]{article}
\usepackage[utf8]{inputenc}			
\usepackage{makeidx}
\usepackage[pdftex]{graphicx}
\usepackage{mathtools,amsfonts}
\usepackage{array}
\usepackage{multirow, hhline}
\usepackage{paralist}

\usepackage[usenames,dvipsnames,table]{xcolor}
\usepackage{wrapfig}
\usepackage[english]{babel}		
\usepackage[margin=1in]{geometry}	
\usepackage{lipsum}			
\usepackage{comment}
\usepackage{chngpage}			
\usepackage{amsfonts,amssymb,amsmath, amsthm}
\usepackage{amsopn}
\usepackage{color}

\usepackage{pgfplots}
\usepackage{pgfgantt}
\usepackage{tikz}
\usetikzlibrary{arrows}
\usetikzlibrary{calc}
\usepackage{algpseudocode}
\usepackage[Algorithm]{algorithm}
\usepackage{subcaption}
\usepackage{tabto}
\usepackage{csquotes} 
\usepackage{multirow}
\usepackage{amsthm}
\usepackage[colorlinks,urlcolor=black,citecolor=black,linkcolor=black,menucolor=black]{hyperref}

\newtheorem{theorem}{Theorem}
\newtheorem*{theorem*}{Theorem}

\newtheorem{remark}[theorem]{Remark}

\newtheorem{lemma}[theorem]{Lemma}
\newtheorem{problem}[theorem]{Problem}
\newtheorem*{problem*}{Problem} 
\newtheorem{fact}[theorem]{Fact}
\newtheorem*{fact*}{Fact}

\newtheorem{observation}[theorem]{Observation}
\newtheorem{claim}[theorem]{Claim}

\newcounter{example}[section]
\newenvironment{example}[1][]{\refstepcounter{example}\par\medskip
\noindent \textbf{Example~\theexample. #1} \rmfamily}{\medskip}

\newcommand{\abs}[1]{| #1 |}

\newcommand{\R}{\mathbb{R}}

\newcommand{\C}{\mathbf{C}}

\newcommand{\n}{\mathbf{n}}
\newcommand{\X}{\mathbf{X}}
\newcommand{\w}{\mathbf{w}}
\newcommand{\M}{\mathbf{M}}
\newcommand{\A}{\overline{a}}
\newcommand{\B}{\mathit{aux}}

\newcommand{\MPB}{{\mathit{MPB}}}
\date{}

\title{Dividing Bads is Harder than Dividing Goods:\\ On the Complexity of Fair and Efficient Division of Chores}
\author{Bhaskar Ray Chaudhury\thanks{MPI for Informatics, Saarland Informatics Campus, Graduate School of Computer Science, Saarbr\"ucken, Germany}\\ \texttt{\small braycha@mpi-inf.mpg.de} \and Jugal Garg\thanks{University of Illinois at Urbana-Champaign. Supported by NSF Grant CCF-1942321 (CAREER)}\\ \texttt{\small jugal@illinois.edu}  \and Peter  McGlaughlin \thanks{University of Illinois at Urbana-Champaign}\\ \texttt{\small mcglghl2@illinois.edu} \and Ruta Mehta\thanks{University of Illinois at Urbana-Champaign. Supported by NSF Grant CCF-1750436 (CAREER)}\\ \texttt{\small rutameht@illinois.edu}}

\begin{document}
\maketitle
\begin{abstract}
We study the chore division problem where a set of agents needs to divide a set of chores (bads) among themselves fairly and efficiently. We assume that agents have linear disutility (cost) functions. Like for the case of goods, \emph{competitive division} is known to be arguably the best mechanism for the bads as well. However, unlike goods, there are multiple competitive divisions with very different disutility value profiles in bads. Although all competitive divisions satisfy the standard notions of fairness and efficiency, some divisions are significantly fairer and efficient than the others. This raises two important natural questions: Does there exist a competitive division in which no agent is assigned a chore that she hugely dislikes? Are there simple sufficient conditions for the existence and polynomial-time algorithms assuming them? 

We investigate both these questions in this paper. We show that the first problem is strongly NP-hard. Further, we derive a simple sufficient condition for the existence, and we show that finding a competitive division is PPAD-hard assuming the condition. These results are in sharp contrast to the case of goods where both problems are strongly polynomial-time solvable. To the best of our knowledge, these are the first hardness results for the chore division problem, and, in fact, for any economic model under \emph{linear} preferences. 
\end{abstract}

\section{Introduction}\label{intro}
We consider the chore division problem where a set of $n$ agents needs to divide $m$ divisible chores (bads) among themselves in a \emph{fair} and \emph{efficient} manner. We assume that agents have linear disutility (cost) functions, i.e., each agent $i$ has a disutility $d(i,j)>0$ for a unit amount of chore $j$, and her disutility for an allocation in which $X_{ij}$ amount of chore $j$ is assigned to her is $\sum_j d(i,j)X_{ij}$. 

A division based on \emph{competitive equilibrium} is arguably the best mechanism for this problem due to its remarkable fairness and efficiency guarantees; we refer to the seminal papers of Bogomolnaia et al.~\cite{BogomolnaiaMSY17,BogomolnaiaMSY19} for an elaborate discussion. Competitive equilibrium is a central solution concept in economics, where prices and allocation are such that demand meets supply when each agent gets her most preferred assignment. Two most ideal economic models to study competitive division of chores are Fisher (fixed incomes) and Exchange. An exchange model is like a barter system, where each agent comes with an initial endowment of chores and exchanges them with others to minimize her disutility function. Fisher is a special case of exchange model where each agent has a fixed proportion of each chore. Competitive Equilibrium with Equal Incomes (CEEI)~\cite{Varian74} is a further special case of Fisher where each agent has the same endowment. 

The vast majority of work in fair division focuses on the case of \emph{disposable} goods that agents enjoy or at least can throw away at no cost. The equilibrium computation problem is well-understood in the case of goods. For the exchange model, even though an equilibrium may not always exist, there is a simple (polynomial-time) necessary and sufficient condition for the existence~\cite{Gale76}, convex programming formulations~\cite{Jain07,DevanurGV16}, and polynomial-time algorithms~\cite{Jain07,Ye07,DuanM15,DuanGM16,GargV19}. On the other hand, for the Fisher model, an equilibrium always exists, and there are further specialized convex programs~\cite{EisenbergG59,Shmyrev09,ColeDGJMVY17} and algorithms~\cite{DevanurPSV08,Orlin10,Vegh12,Vegh14} in addition to the ones for the exchange model. 

However, many situations contain undesirable bads (e.g., house chores and job shifts). Clearly, chores are \emph{nondisposable} and must be allocated. Bogomolnaia et al.~\cite{BogomolnaiaMSY17,BogomolnaiaMSY19} showed that the chore division problem is quite different than the case of goods; namely, there are multiple disconnected sets of equilibria with different disutility value profiles. They also conjectured that the computational problem is likely to be more difficult in this case; however, neither polynomial-time algorithms nor hardness results have been obtained so far. Polynomial-time algorithms are known only for the special case of constantly many agents (or chores)~\cite{BranzeiS19,GargM20}. Although these algorithms are obtained for the Fisher model, they easily extend to the exchange as well. 

Let us now consider an example where two agents need to divide two chores with unit supply each, and agents have the same endowment. Assume that the disutility values are: $d(1,1) = d(2,2) = 1; d(1,2)=d(2,1) = L$, where $L\gg 1$. There are three competitive divisions in this case: 

\begin{equation}\nonumber
X^1=\left[\begin{array}{ll} 1 & 0 \\ 0 & 1\end{array}\right], \ \ \ \ \ 
X^2=\left[\begin{array}{ll} 1 & \tfrac{L-1}{2L} \\ 0 & \tfrac{L+1}{2L}\end{array}\right], \ \ \text{ and } \ \ 
X^3=\left[\begin{array}{ll} \tfrac{L+1}{2L} & 0 \\ \tfrac{L-1}{2L} & 1\end{array}\right] \enspace . 
\end{equation}

The disutility value profiles at these divisions are $(1, 1), (\tfrac{L+1}{2}, \tfrac{L+1}{2L})$, and $(\tfrac{L+1}{2L}, \tfrac{L+1}{2})$ respectively.  Observe that as $L$ approaches infinity, one agent's disutility value approaches infinity in the last two divisions ($X^2$ and $X^3$). Although they all satisfy the standard fairness and efficiency properties, namely \emph{no envy}, \emph{proportionality}, \emph{Pareto efficiency}, and so on, one would still want to use the first division that is clearly significantly fairer and efficient among the three. Hence, a important natural question is: 
\medskip

\noindent{\bf Question 1.} Does there exist a competitive division where no agent $i$ is allocated a chore $j$ that she \emph{hugely dislikes}, i.e., for which $d(i,j)$ is greater than equal to a given threshold $\tau$? 
\medskip

In the above example, for any $\tau>1$ the answer is ``yes'', otherwise ``no''. This leads to another natural question: 
\medskip

\noindent{\bf Question 2.} Are there simple sufficient conditions for the existence? And, can we compute a competitive division in polynomial-time under them? 
\medskip

In this paper, we investigate both these questions for the exchange model. We show that answering the first question is strongly NP-hard, even in the CEEI setting. Further, we derive a simple sufficient condition for the existence, and for them, we show that finding a competitive division is PPAD-hard. These results are in sharp contrast to the case of goods where both problems are strongly polynomial-time solvable.\footnote{While dividing goods, the analogous question is to not allocate a good to any agent who {\em likes it very little}. It can be achieved by setting all the utility values less than or equal to the threshold $\tau$ to zero. We can check existence as well as compute equilibrium in strongly polynomial time~\cite{Gale76,DevanurGV16,GargV19}.} To the best of our knowledge, these are the first hardness results for the chore division problem, and, in fact, for any economic model under \emph{linear} preferences.

\subsection{Model and Notation}\label{prelim}
In this section, we introduce the most general problem of \emph{Chore Division} and then certain special variants of it. In an instance of chore division, we have a set $A =\langle a_1,a_2, \dots, a_n \rangle$ of agents and a set $B = \langle b_1,b_2,\dots, b_m \rangle$ of \emph{divisible} chores. Each agent $a_i$  has some initial endowment of chores. Let $w(a_i,b_j)$ denote $a_i$'s endowment of chore $b_j$. Intuitively, agent $a_i$ wants $w(a_i,b_j)$ units of chores to be done (by herself or other agents in the instance). Also, each agent $a_i$ has a linear disutility function that captures the amount of disutility she has for all possible chores that can be allocated to her: $\sum_{j \in [m]} d(a_i,b_j) \cdot X_{ij}$, where $d(a_i,b_j)$ denote the $a_i$'s disutility for one unit of chore $b_j$ and $X_{ij}$ denote the amount of chore $b_j$ that agent $a_i$ does. For convenience, we also refer to the disutility function $d(\cdot,\cdot)$ and the endowment function $w(\cdot,\cdot)$ as the disutility matrix and endowment matrix respectively.

Given price vector $p = \langle p(b_1), p(b_2), \dots , p(b_m) \rangle \in \mathbb{R}^m_{\geq 0}$, where $p(b_j)$ denotes the price of chore $b_j$ (the amount of money that any agent doing one unit of the chore will be paid), agent $a_i$ needs to earn ($\sum_{j \in [m]} w(a_i,b_j) \cdot p(b_j)$) in order to pay to get her own chores done. She can earn this amount by doing other chores, while minimizing her disutility -- this defines her {\em optimal bundle} (or optimal chore set). 

\begin{itemize}
\item {\em Optimal bundle.} For agent $a_i\in A$, $\langle X_{i1}, X_{i2}, \dots, X_{im} \rangle$ is her optimal bundle if it minimizes $\sum_{j \in [m]}d(a_i,b_j) \cdot X_{ij}$ subject to the constraint $\sum_{j \in [m]} X_{ij} \cdot p(b_j) \geq \sum_{j \in [m]} w(a_i,b_j) \cdot p(b_j)$.
\end{itemize}

It is easy to see that in her optimal bundle agent $a_i$ gets assigned only those chores that minimizes her disutility per dollar (buck) earned. Formally, 
\[
\forall j\in [m],\  \ X_{ij}>0 \ \  \Rightarrow \  \ \tfrac{d(a_i,b_j)}{p(b_j)} \leq \tfrac{d(a_i,b_{j'})}{p(b_{j'})}\ \  \forall j' \in [m].
\]

Price vector $p$ is said to be at {\em competitive equilibrium} (CE) if all chores are completely assigned when every agent gets her optimal bundle, {\em i.e.,} $\sum_{i\in[n]} X_{ij} =\sum_{i\in[n]} w(a_i, b_j),\ \forall j\in [m]$. Our focus in this paper is on {\em efficient} competitive equilibrium where no agent is assigned a chore for which her disutility is greater than or equal to a threshold $\tau$. Formally, this problem can be defined as follows: 

\begin{problem}\textbf{(Chore Division)}\label{CD}

	\noindent\textbf{Given:} A set $A = \langle a_1,a_2, \dots , a_n \rangle$ of agents, set $B = \langle b_1,b_2, \dots , b_m \rangle$ of  bads/chores, disutility function $d:A \times B \rightarrow (0,\infty)$,  an endowment  function $w:A \times B \rightarrow \mathbb{R}_{\geq 0}$, and a threshold $\tau$.\\
	\textbf{Find:} Price vector $p = \langle p(b_1),p(b_2), \dots ,p(b_m) \rangle \in \mathbb{R}^m_{\geq 0}$ and allocation $X \in \mathbb{R}^{n \times m}_{\geq 0}$ are such that,
	\begin{enumerate}
		\item $X_{ij} > 0$ only if $d(a_i,b_j) < \tau$ and $\tfrac{d(a_i,b_j)}{p(b_j)} \leq \tfrac{d(a_i,b_{j'})}{p(b_{j'})}$ for all $j' \in [m]$,   
		\item $\sum_{j \in [m]} X_{ij} \cdot p(b_j) = \sum_{j \in [m]}w(a_i,b_j) \cdot p(b_j)$ for all $i \in [n]$, and 
		\item $\sum_{i \in [n]} X_{ij} = \sum_{i \in [n]} w(a_i,b_j)$ for all $j \in [m]$. 
	\end{enumerate}
   A \emph{competitive equilibrium} is a tuple $\langle p, X \rangle $ satisfying the above constraints. 
\end{problem}

Observe that the equilibrium prices are scale invariant: if $p$ is an equilibrium price vector then so is $\alpha \cdot p$ for any positive scalar $\alpha$. Furthermore, at equilibrium $p(b_j)>0$ for each chore $j$, otherwise no agent would be willing to do it. 
\medskip

\noindent{\bf Approximate Competitive Equilibrium.} For an $\epsilon>0$, a $(1-\epsilon)$-competitive equilibrium is the tuple $\langle p, X \rangle$ that satisfy the first two conditions above, and the third condition approximately: 
\[(1 - \epsilon) \cdot \sum_{i \in [n]} w(a_i,b_j) \leq \sum_{i \in [n]} X_{ij} \leq \frac{1}{(1-\epsilon)} \sum_{i \in [n]} w(a_i,b_j) \mbox{ for all } j \in [m].\]

A competitive equilibrium $\langle p, X \rangle$ has many desirable properties like envy-freeness and Pareto optimality in the chore division with equal earnings~\cite{BogomolnaiaMSY17}. We briefly mention the similar properties that a competitive equilibrium satisfies in the context of the more general problem of chore division.
\begin{itemize}
	\item \emph{Weighted envy freeness:}  An allocation $X$ is said to be \emph{envy free} is for every pair of agents $i$ and $i'$, agent $i$'s disutility for her own bundle $\sum_{j \in [m]} d(a_i,b_j) \cdot X_{ij}$ is not higher than her disutility for $a_{i'}$'s bundle: $\sum_{j \in [m]} d(a_i,b_j) \cdot X_{i'j}$: 
	\begin{align*}
	\sum_{j \in [m]} d(a_i,b_j) \cdot X_{ij} &\leq  \sum_{j \in [m]} d(a_i,b_j) \cdot X_{i'j}\enspace .
	\end{align*}
	Weighted envy freeness is a generalized notion of envy freeness. 
	For any two agents $a_i$ and $a_{i'}$, we have that agent $a_i$'s disutility for its own bundle $\langle X_{i1}, \dots, X_{im} \rangle$, scaled down by the cost of the chores in her initial endowment is at most as high as her disutility for $a_{i'}$'s optimal bundle $\langle X_{i'1}, \dots , X_{i'm} \rangle$ scaled down by the cost of the chores in $a_{i'}$'s initial endowment:
	\begin{align*}
	  \frac{\sum_{j \in [m]}d(a_i,b_j) \cdot X_{ij}}{\sum_{j \in [m]} w(a_i,b_j) \cdot p(j)} \leq \frac{\sum_{j \in [m]}d(a_i,b_j) \cdot X_{i'j}}{\sum_{j \in [m]} w(a_{i'},b_j) \cdot p(j)} \enspace . 
	\end{align*}
	Intuitively, a ``fair share of disutility" of any agent is the total cost of chores she brings (her total initial endowment): higher the cost of the chores she brings, larger is her share of disutility. A competitive equilibrium ensures that every agent $a_i$ feels that she receives her fair share of disutility compared to all other agents. 
	\item \emph{Pareto optimality:} There is no other allocation $Y \in \mathbb{R}^{n \times m}_{\geq 0}$ such that disutility of all the agents is at most as high as in $X$ and at least one agent has strictly less disutility in $Y$, {\em i.e.,} $\sum_{j \in [m]}d(a_i,b_j) \cdot Y_{ij} \leq \sum_{j \in [m]}d(a_i,b_j) \cdot X_{ij}$ for all $i \in [n]$ with at least one strict inequality. 
\end{itemize} 

Analogous to the notion of Fisher model in the scenario where we divide only goods (see~\cite{brainard2005fisher}), we define the notion of \emph{Chore Division with Fixed Earnings}, where every agent has to earn a fixed amount of money by doing chores with minimum disutility to price ratio. While in the chore division (Definition \ref{CD}), the amount an agent needs to earn depends on the prices of chores she owns. It is without loss of generality to assume that one unit of each chore has to be done by scaling the disutility values appropriately. 

\begin{problem}\textbf{(Chore Division with Fixed Earnings)}
	\label{CDFI}

	\noindent\textbf{Given:} A set $A = \langle a_1,a_2, \dots , a_n \rangle$ of agents, set $B = \langle b_1,b_2, \dots , b_m \rangle$ of  bads/chores, disutility function $d:A \times B \rightarrow  (0,\infty)$, and an earning  function $e:A  \rightarrow \mathbb{R}_{\geq 0}$, and a threshold $\tau > 0$.\\
	\textbf{Find:} Price vector $p = \langle p(b_1),p(b_2), \dots ,p(b_m) \rangle \in \mathbb{R}^m_{\geq 0}$ and allocation $X \in \mathbb{R}^{n \times m}_{\geq 0}$ such that,
	\begin{enumerate}
		\item $X_{ij} > 0$ only if $d(a_i,b_j) < \tau$ and $\tfrac{d(a_i,b_j)}{p(b_j)} \leq \tfrac{d(a_i,b_{j'})}{p(b_{j'})}$ for all $j' \in [m]$,  
		\item $\sum_{j \in [m]} X_{ij} \cdot p(b_j) = e(a_i)$ for all $i \in [n]$, and 
		\item $\sum_{i \in [n]} X_{ij} = 1$ 
		for all $j \in [m]$.
	\end{enumerate}
	Tuple $\langle p, X\rangle$ satisfying the above is a competitive equilibrium.
\end{problem}

Analogously a $(1-\epsilon)$-CE satisfies (1), (2) as is, and (3) approximately.

Observe that Problem~\ref{CDFI} can be modeled as a special case of Problem~\ref{CD}. Given an instance $I = \langle A,B,d(\cdot, \cdot), e( \cdot ) \rangle$ of Problem~\ref{CDFI}, we can construct an instance $I' = \langle A,B,d(\cdot, \cdot), w( \cdot, \cdot ) \rangle$ of Problem~\ref{CD} such that $w(a_i,b_j) = e(a_i)$ for all $b_j \in B$. Since the equilibrium price vector $p = \langle p(b_1),p(b_2), \dots ,p(b_m) \rangle$ is scale-invariant, we can assume without loss of generality that $\sum_{j \in [m]} p(b_j)$ $ = 1$. One can verify now that at every competitive equilibrium of $I'$ we have $\sum_{j \in [m]} X_{ij} \cdot p(b_j) = \sum_{j \in [m]} w(a_i,b_j) \cdot p(b_j) = e(a_i) \cdot \sum_{j \in [m]}p(b_j) =  e(a_i)$ for all $i \in [n]$. Since all the other conditions that need to be satisfied by the equilibrium price vector at $I'$ and $I$ are identical, we have that every competitive equilibrium in $I'$ is also a competitive equilibrium in $I$. \emph{Therefore all the hardness results that apply for chore division with fixed earnings (Problem~\ref{CDFI}), will automatically apply to chore division (Problem~\ref{CD}) as well.} 
\medskip

\noindent{\bf Chore Division with Equal Earnings.} A special case of Problem~\ref{CDFI} (and consequently Problem~\ref{CD}) is the \emph{Chore Division with Equal Earnings} where we have $e(a_i) = 1$ for all $i \in [n]$. Analogous notions to chore division with equal earnings, in the context of goods is \emph{Competitive Equilibrium with Equal Incomes} (CEEI)~\cite{Varian74}. CEEI has been extensively studied in the scenario of dividing goods in the fair division community.

\subsection{Technical Overview}\label{technical-overview}
In this section, we sketch the main techniques and ideas used to prove our results in later sections. 
\subsubsection{NP-Hardness of Checking Existence}\label{mainres1}
We first show that there are very simple instances in chore division that does not admit a competitive equilibrium. 

\begin{example}\label{eg}
 Consider a scenario with two agents $a_1$ and $a_2$, and two chores $b_1$ and $b_2$. We have $w(a_i,b_j) = 1$ for all $i,j \in [2]$, and $d(a_1,b_1) = d(a_2,b_1) = 1$, $d(a_1,b_2) = \tau \gg 2$ and $d(a_2,b_2) = 2$. Let $p(b_1)$ and $p(b_2)$ be the prices of the chores at a competitive equilibrium. Observe that since $d(a_1,b_2) = \tau$,  $a_1$ earns her entire money from $b_1$.  Therefore, $2 \cdot p(b_1) \geq p(b_1) + p(b_2)$, implying that $p(b_1) \geq p(b_2)$. In that case, observe that the disutility to price ratio of $b_1$ is strictly less than that of $b_2$ for $a_2$: $\tfrac{d(a_2,b_1)}{p(b_1)}  = \tfrac{1}{p(b_1)} < \tfrac{2}{p(b_1)} \leq \tfrac{2}{p(b_2)} = \tfrac{d(a_2,b_2)}{p(b_2)}$. Thus, none of the agents is willing to do chore $b_2$, and therefore it remains unassigned, a contradiction. 
\end{example}

Observe that this example is an instance of chore division with equal earnings. It is a natural question to ask whether an instance admits a competitive equilibrium. In the analogous context of dividing goods (linear exchange model), there exists polynomial-time verifiable necessary and sufficient conditions~\cite{DevanurGV16} for an instance to admit a competitive equilibrium. However, this seems not the case for chore division and surprisingly this problem turns out to be intractable. To this end, we show the first result of our paper,

\begin{theorem}\label{mainthm1}
    Determining whether an instance of chore division with fixed earnings admits a competitive equilibrium is strongly NP-hard.
\end{theorem}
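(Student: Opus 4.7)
The plan is to reduce from a strongly NP-hard combinatorial problem such as $3$-SAT (or an equivalent like exact cover by $3$-sets, $X3C$). Given a $3$-SAT formula $\phi$, I would construct in polynomial time an instance $I_\phi$ of chore division with fixed earnings, using disutility values whose magnitudes are polynomial in $|\phi|$, such that $I_\phi$ admits a competitive equilibrium if and only if $\phi$ is satisfiable. The strong hardness then follows because the reduction is polynomial and the numeric data are polynomially bounded.

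The core idea is to generalize the \emph{forcing pattern} that drives non-existence in Example~\ref{eg}: one agent is restricted by the threshold $\tau$ to do essentially one chore, and her earnings constraint forces a strict price ordering between two chores; that price ordering, in turn, pushes the minimum-disutility-per-buck ratio of some other agent away from a particular chore, leaving a chore completely unassignable. I would package this pattern into a \textbf{variable gadget}: for each boolean variable $x_i$ I would introduce two chores $b_i^T, b_i^F$ together with one or two auxiliary agents whose only chores below threshold are $b_i^T$ and $b_i^F$. Their earnings requirements together with the disutilities are chosen so that at any competitive equilibrium exactly one of the two relative orderings $p(b_i^T) \geq p(b_i^F)$ or $p(b_i^T) \leq p(b_i^F)$ is \emph{strict}, and each such ordering makes precisely one of $b_i^T, b_i^F$ affordable to agents outside the gadget. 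This encodes the truth value of $x_i$. For each clause $C_j = (\ell_{j,1} \vee \ell_{j,2} \vee \ell_{j,3})$ I would introduce a \textbf{clause agent} $a_j$ with threshold-restricted access only to the three chores $b_{\ell_{j,1}}, b_{\ell_{j,2}}, b_{\ell_{j,3}}$ corresponding to the literals; her earnings can be met on her optimal bundle exactly when at least one of her three literal-chores has been placed in the "true" price configuration by its variable gadget, so the whole instance clears chores iff every clause is satisfied.

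The main obstacle will be proving that no \emph{fractional} or \emph{mixed} equilibrium sneaks in: because prices and allocations are continuous, I must rule out configurations where a variable gadget sits at an "in-between" price ratio that accidentally satisfies both the gadget and the clause agents. The strict inequality $X_{ij}>0 \Rightarrow d(a_i,b_j)<\tau$ is the lever here — it cuts edges out of the minimum-bang-per-buck graph sharply, rather than just softly penalizing them — so by choosing the disutilities inside each variable gadget to be tight against $\tau$ on one side and comfortably below $\tau$ on the other, I can force the two gadget agents into a strict "either-or" at every CE. A parallel technical issue is keeping the disutility values polynomial in $|\phi|$, which rules out the standard trick of using geometrically growing weights across gadgets; I would instead make the gadgets locally self-contained (each variable/clause gadget only interacts with a bounded number of others) so that constant-sized numerical parameters suffice.

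With these gadgets in hand the equivalence proof splits as usual. For the \emph{if} direction, given a satisfying assignment $\sigma$ I explicitly produce prices (setting $p(b_i^T)>p(b_i^F)$ iff $\sigma(x_i)=1$, with all auxiliary prices determined from the gadget constraints) and an allocation in which each clause agent routes her work through one literal made true by $\sigma$; I then verify the three CE conditions of Problem~\ref{CDFI} directly. For the \emph{only if} direction, I take an arbitrary CE $\langle p, X\rangle$ of $I_\phi$, read off from each variable gadget the forced strict price inequality to define $\sigma(x_i)$, and use the clause agent's earnings condition (together with the threshold restriction on her bundle) to show that $\sigma$ satisfies every clause. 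This yields the desired equivalence and hence Theorem~\ref{mainthm1}.
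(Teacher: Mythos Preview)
Your high-level strategy matches the paper's: reduce from $3$-SAT, build for each variable a gadget whose competitive equilibria sit at exactly two isolated price configurations (encoding true/false), and add clause gadgets that become infeasible precisely when all three literals are false. The paper's variable gadget is essentially what you sketch: two agents and two chores with disutilities $\{1,3,\tau,1\}$ yielding the two isolated equilibria $(p_1,p_2)=(1,1)$ and $(1/2,3/2)$.

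The one architectural point where you diverge is worth flagging. You have each clause agent earn \emph{directly} from the three literal chores $b_{\ell_{j,1}},b_{\ell_{j,2}},b_{\ell_{j,3}}$. The paper instead introduces, for each clause $C_r$, three \emph{new} clause chores $m_i^r,m_j^r,m_k^r$ that only the clause agents can do, and links them to the variable gadgets indirectly: the clause agents also have sub-threshold disutility for one variable chore, so the MPB condition forces a lower bound on $p(m_\theta^r)$ whenever the corresponding literal is false. The infeasibility when all literals are false then comes from the clause agents' total earning being (by design, via a small deficit $\varepsilon'$) strictly less than the forced total price of $m_i^r+m_j^r+m_k^r$. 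This indirection buys something concrete: the variable chores are consumed entirely by the variable-gadget agents in every CE, so the two-equilibrium structure of each variable gadget is completely undisturbed by the clause layer, and no ``mixed'' equilibria can arise. In your direct design the clause agent competes with the variable agents for the same chores, which alters the supply/demand balance inside each variable gadget; you would need an additional argument (or a rebalancing of endowments) to show the discrete two-state structure survives. That is exactly the obstacle you name but do not yet resolve, and it is the place where the paper's extra layer of clause chores does the real work.
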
   

In fact, we can also show the same hardness for finding an $(\tfrac{11}{12} + \delta)$-competitive equilibrium for any $\delta > 0$ for chore division with equal earnings too (all the details of the reduction are presented in Section~\ref{nphardness}).

\subsubsection{Instances that Admit a Competitive Equilibrium}\label{mainres2} 
\paragraph{Sufficiency Conditions.}The hard instances constructed to prove Theorem~\ref{mainthm1} have similar structure in disutility matrix as {Example \ref{eg}}. 
In particular, the non-existence is primarily attributed to the fact that there is a classification of the agents into two groups: \emph{rigid} and \emph{flexible}. The rigid agents have small disutility towards some  chores (let us call them \emph{easy} chores) and have disutility of $\tau$ for some chores (we call them \emph{hard} chores) while the \emph{flexible} agents have small disutility towards the easy chores and high (\textit{but still $<\tau$}) disutility towards the hard chores. Since the flexible agents need to do the hard chores entirely, it may not be possible to have a competitive equilibrium, as the prices of the hard chores need to be high (for them to have small disutility to price ratio) and sometimes it may be higher than the stipulated/fixed earning\footnote{the sum of prices of the chores owned by flexible agents in the case of Chore Division (Problem~\ref{CD}).}of the flexible agents. Therefore, we look into instances that do not have such a structure in the disutility matrix; in particular let us define the \emph{disutility graph} of an instance as the bipartite graph $D = (A \cup B, E_D)$ containing the agents $A$ as one vertex set and the chores $B$ as the other, and there exists an edge from agent $a \in A$ to chore $b \in B$ if and only if $d(a,b) <\tau$. Our first sufficiency condition is that,
\begin{align*}
\text{\textbf{Condition 1:} $D$ is a disjoint union of complete bipartite graphs $D_1,D_2,\dots ,D_d$, for some $d\ge 1$.}
\end{align*}
 
Observe that if an instance satisfies Condition 1, then there exists no global partitioning of the agents into groups of rigid and flexible agents: for every pair of agents, the set of chores towards which they have a disutility of less than $\tau$ is either identical or disjoint. We remark that this itself is a sufficient condition to guarantee the existence of competitive equilibrium in the context of chore division with fixed earnings (Problem~\ref{CDFI}): We break the instance into $[d]$ sub-instances formed by the agents $A_i$ and the chores $B_i$ for all $i \in [d]$. Note that in any competitive equilibria agents in $A_i$ will not earn any money from the chores in $B_{i'}$ for all $i \neq i'$. Therefore, it suffices to show that there exists competitive equilibria in each sub instance formed by $A_i$ and $B_i$ for all $i \in [d]$.  Bogomolnaia et al.~\cite{BogomolnaiaMSY17} show that a competitive equilibria always exists in chore division with fixed earnings if the disutility graph of the instance is a complete bipartite graph. Unfortunately, this condition alone is not sufficient to guarantee the existence of a competitive equilibrium in the context of general chore division (Problem \ref{CD}). Consider the following simple example: 

\begin{example}
	We have two agents $a_1$ and $a_2$ and two chores $b_1$ and $b_2$. Let us assume $w({a_1,b_1}) = 1$, $w({a_1,b_2}) = \tfrac{1}{2}$, $w({a_2,b_1}) = 0$, and $w({a_2,b_2}) = \tfrac{1}{2}$. Also, suppose $d(a_1,b_1) = d(a_2,b_2) = 1$ and $d(a_1,b_2) = d(a_2,b_1) = \tau$. Note that the disutility graph $D$ is the union of two disjoint complete bipartite graphs, namely $D_1 = (\left\{ a_1 \right\} \cup \left\{ b_1 \right\}, \left\{ (a_1,b_1) \right\})$ and $D_2 (\left\{ a_2 \right\} \cup \left\{ b_2 \right\}, \left\{ (a_2,b_2) \right\})$. However, we do not have any competitive equilibrium as the earning of $a_2$ will be half of the price of $b_2$ and thus $a_2$ can never do more than half of the chore $b_2$. Also, $a_1$ will not do any of chore $b_2$ as she has  disutility of $\tau$ for the same. Therefore, $b_2$ can not be fully assigned.
\end{example}

 Thus, we need more conditions to guarantee the existence of a competitive equilibrium. To this end, we define the \emph{exchange graph} of an instance as a  graph $W = ([d], E_W)$. We have $(i,j) \in E_W$ if and only if for every chore $b$ in the component $D_j$ of the disutility matrix, there is an agent $a \in D_i$ such that $w({a,b}) > 0$ \footnote{Intuitively, agents in the component $D_j$ are willing to do chores brought by the agents in $D_i$.}. We now propose our second sufficiency condition.
\begin{align*}
\text{\textbf{Condition 2:} $W$ is strongly connected.}
\end{align*}

With this, we state the second main result of our paper.

\begin{theorem}
	\label{mainthm2}
	All instances of chore division that satisfy Conditions 1 and 2, admit a competitive equilibrium.
\end{theorem}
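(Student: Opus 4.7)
The plan is to combine Bogomolnaia et al.'s existence result~\cite{BogomolnaiaMSY17} for complete-bipartite chore Fisher markets with a fixed-point argument on the simplex of chore prices, using Condition~2 to rule out degenerate boundary fixed points. Under Condition~1 every agent $a \in A_i$ has disutility $\tau$ for every chore outside her component $B_i$, so at any efficient CE she performs work only in $B_i$. Writing $W_b := \sum_a w(a,b)$ and $T_{ij}(p) := \sum_{b \in B_j}\sum_{a \in A_i} w(a,b)\, p(b)$, market clearing inside $B_i$ gives the flow-conservation identity $\sum_j T_{ij}(p) = \sum_j T_{ji}(p)$ for each $i \in [d]$: the total endowment value of $A_i$ must equal the total market value $V_i(p) := \sum_{b \in B_i} p(b)\, W_b$ of the chores in $B_i$.

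I would then define a set-valued map $\Phi$ on the price simplex $\Delta := \{p \in \mathbb{R}^m_{\ge 0} : \sum_b p(b) = 1\}$ as follows. Given $p \in \Delta$, compute $e(a) := \sum_b w(a,b) p(b)$; for each component $i$ with $\sum_{a \in A_i} e(a) > 0$, invoke~\cite{BogomolnaiaMSY17} on the complete-bipartite sub-instance $(A_i, B_i)$ with earnings $e|_{A_i}$ and supplies $W|_{B_i}$ to obtain intra-component equilibrium prices $\tilde p_i$, rescaled so that $V_i(\tilde p_i) = \sum_{a \in A_i} e(a)$; concatenate over $i$ and renormalize to $\Delta$. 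Any fixed point $p^*$ of $\Phi$ automatically satisfies the budget-balance and market-clearing conditions of Problem~\ref{CD} together with the within-component bang-per-buck condition. The cross-component part of the first condition, $d(a,b)/p^*(b) \le \tau/p^*(b')$ for $a \in A_i$ doing $b \in B_i$ and $b' \notin B_i$, would be enforced by restricting $\Phi$ to the (nonempty, closed, convex) subset of $\Delta$ cut out by these linear inequalities and checking that $\Phi$ keeps this subset invariant. The polyhedral character of the intra-CE correspondence makes $\Phi$ upper hemicontinuous with nonempty compact convex values, so Kakutani's fixed-point theorem delivers a fixed point.

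The main obstacle is that such a fixed point could a priori lie on the boundary of $\Delta$ with $V_i(p^*) = 0$ for some $i$, in which case the sub-instance for component $i$ is degenerate and the piecing-together fails; this is precisely where Condition~2 is used. Suppose for contradiction $S := \{i : V_i(p^*) = 0\}$ is a nonempty proper subset of $[d]$. Strong connectivity of $W$ forces the existence of a directed edge $(i,i') \in E_W$ with $i \in S$ and $i' \notin S$. Since $i' \notin S$, the intra-CE for $B_{i'}$ has strictly positive prices on every chore (a zero-priced chore would have infinite bang-per-buck for everyone and therefore attract no supply, contradicting market clearing). By the defining property of $(i,i') \in E_W$, every $b \in B_{i'}$ is owned in positive amount by some agent in $A_i$, so $T_{ii'}(p^*) > 0$ and hence $\sum_{a \in A_i} e(a) \ge T_{ii'}(p^*) > 0$. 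But the rescaling step of $\Phi$ then forces $V_i(\Phi(p^*)) = \sum_{a \in A_i} e(a) > 0$, contradicting $p^* = \Phi(p^*)$ with $i \in S$. Therefore $S = \emptyset$, and $\langle p^*, X^*\rangle$—where $X^*$ is pieced together from the intra-component CEs—is the desired competitive equilibrium.
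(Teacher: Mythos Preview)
Your approach has a genuine gap at the step where you invoke Kakutani's theorem. You assert that ``the polyhedral character of the intra-CE correspondence makes $\Phi$ upper hemicontinuous with nonempty compact convex values,'' but for chore Fisher markets this is false. As the paper itself emphasizes in the introduction (and illustrates with the two-agent two-chore example), the set of competitive equilibria of a linear chore Fisher market is in general a \emph{finite disconnected} set: in that example the three equilibrium price vectors $(1,1)$, $\bigl(\tfrac{2}{L+1},\tfrac{2L}{L+1}\bigr)$, $\bigl(\tfrac{2L}{L+1},\tfrac{2}{L+1}\bigr)$ are three isolated points. Hence the set of intra-component CE prices you get from~\cite{BogomolnaiaMSY17} is not convex, so neither is $\Phi(p)$, and Kakutani does not apply. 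This is exactly the structural difference between goods and chores that makes the problem hard, and it cannot be patched by restricting the domain: the non-convexity occurs already in the interior of the simplex for a single complete-bipartite component. Upper hemicontinuity is equally unclear, since as $e|_{A_i}$ varies the discrete set of equilibria can jump.

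The paper sidesteps this by \emph{not} using sub-equilibria as a black box. Instead it works on a restricted price domain $P$ in which the per-component budget-balance constraints $\sum_{a\in A_k}\sum_j w_{a,j}p_j=\sum_{j\in B_k}p_j$ are baked in (Condition~2 guarantees $P$ is nonempty and has no component with all-zero prices), pairs prices with bounded allocations, and defines a t\^atonnement-style correspondence: the allocation part is the set $\X^p$ of optimal bundles at $p$ (a polytope, hence convex), and the new price is obtained by adding positive excess supply to each price and then solving a linear system to land back in $P$. Convexity and closed-graph of this correspondence are then checked by hand. The point is that the price-update map is essentially linear and the allocation correspondence is a linear program's solution set, so Kakutani goes through; your $\Phi$ hides a genuinely non-convex equilibrium selection inside it.

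A secondary issue: your proposed handling of the cross-component MPB inequalities---``restrict $\Phi$ to the subset of $\Delta$ cut out by these linear inequalities and check invariance''---is not well-defined, since which inequalities are active depends on which chores each agent is doing, which in turn depends on the allocation and not just on $p$. The paper avoids this by building the constraint $d(i,j)<\tau$ directly into the definition of the optimal-bundle set $\X^p$ and verifying its non-emptiness (Lemma~\ref{optimalbundlenotempty}) using the per-component price positivity that the restricted domain $P$ guarantees.
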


 \paragraph{Existence under Sufficiency Conditions.} We now briefly sketch the ideas behind our proof of existence (Theorem~\ref{mainthm2}) and also highlight some crucial barriers that we overcome. Like most equilibrium existence results~\cite{Nash51, ArrowD54}, our proof uses a fixed-point theorem. Given a function or a correspondence $\phi$ from $D$ to $D$, a fixed-point theorem ensures existence of $x\in D$ such that $f(x)=x$ or $x \in f(x)$ respectively, if $\phi$ satisfies continuity-like properties, and $D$ is convex and compact~\cite{brouwer1911, kakutani1941}.

 The known existence proofs in the context of dividing goods, say $g_1, \dots, g_m$,  defines fixed-point formulations that operate on a simplex as the domain for price vectors~\cite{ArrowD54, Maxfield97}. That is $\Delta_n=\{p\ |\ p(g_j) \geq 0$ for all $j \in [m]$, and $\sum_{j \in [m]} p(g_j) = 1\}$. Given prices, it computes optimal bundles of agents and adjust prices based on excess demand, i.e., demand minus supply. 

This approach immediately fails for the chore division problem due to the issue of {\em infeasible optimal bundle} under this price domain. Here is why: Recall at prices $p$, an optimal bundle of agent $a_i$ minimizes her disutility $\sum_{j \in [m]} d(a_i,b_j) \cdot X_{ij}$ subject to 
\begin{itemize}
	\item[$(a)$] she earns at least the payment she needs to make for her endowment, i.e., $\sum_{j \in [m]}X_{ij} \cdot p(b_j) \geq \sum_{j \in [m]} w(a_i,b_j) \cdot p(b_j)$, and 
	\item[$(b)$] $X_{ij} > 0$ only if $d(a_i,b_j) < \tau$. 
\end{itemize}

Consider a price vector $p\in \Delta_n$ such that for a particular component $D_k = (A_k \cup B_k, E_k)$ in the disutility graph we have the price of every chore in $B_k$ to be zero. And, there is an agent $a_i \in A_k$ such that $\sum_{j \in [m]}w(a_i,b_j) \cdot p(b_j) > 0$. Note that the optimal bundle set for $a_i$ is empty here since under condition $(b)$ above, condition $(a)$ becomes infeasible. That is, all the chores where $a_i$ has a disutility of less than $\tau$ have zero prices and thus $a_i$ will never be able to earn enough money to match $\sum_{j \in [m]} w(a_i,b_j) \cdot p(b_j)$. This issue does not arise in the goods only case because there condition $(a)$ becomes ``agent spends only as much as she earns from her endowment'' which essentially puts an {\em upper bound} on her spending reversing the inequality, and then all $X_{ij}$'s being $0$ is always feasible.

\emph{In order to avoid the issue of ``zero prices", we add more constraints to domain of the price vectors}. The domain of our fixed-point formulation is defined by the price vectors $p = \langle p(b_1), \dots , p(b_m) \rangle$ (where each $p(b_j)$ is the price of the chore $b_j$)  and allocations $X = \langle X_{11},X_{12}, \dots, X_{nm} \rangle$ (where each $X_{ij} \geq 0$ indicates the consumption of chore $b_j$ by agent $a_i$) such that,
 
 \begin{itemize}
 	\item $p(b_j) \geq 0$ for all $j \in [m]$, and $\sum_{j \in [m]} p(b_j) = 1$, and
 	\item $\sum_{a_i \in A_k} \sum_{j \in [m]} w(a_i,b_j) \cdot p(b_j) = \sum_{b_j \in B_k} p(b_j)$ for each connected component $D_k = (A_k \cup B_k , E_k )$ of the disutility graph, and    
 	\item for all $i,j$, we have $0 \leq X_{ij} \leq M $ for some sufficiently large $M$.
 \end{itemize} 

Let $P$ denote the set of all prices and $\X$ denote the set of all allocations that satisfy the above conditions. First observe that for any $p \in P$ we cannot have all prices of a component $D_k$ of the disutility graph to be zero: Without loss of generality, let us assume that the sum of prices of the chores in the components $D_{k}$ for all $k \in [\ell]$  are zero and the sum of prices of the chores in all other components ($D_{k}$ for all $k \in [d] \setminus [\ell]$)  are non-zero. Note that since $\sum_{j \in [m]} p(b_j) =1$, there is at least one component where the sum of prices of the chores is non-zero and thus $[d] \setminus [\ell] \neq \emptyset$. Also, note that for all $k \in [\ell]$ we have $\sum_{a_i \in A_k} \sum_{j \in [m]} w(a_i,b_j) \cdot p(b_j) = \sum_{b_j \in B_k} p(b_j) = 0$ (by the third constraint in the definition of $P$). Since our instance satisfies Condition 2, we have that there exists an edge in the exchange graph $W=([d],E_W)$ from a $k \in [\ell]$ to a $k' \in [d] \setminus [\ell]$. This implies that for each chore in $D_{k'}$ (in particular for the chore with a non-zero price), there exists an agent in $ D_k$ with some positive endowment of that chore, further implying that $\sum_{a_i \in A_k} \sum_{j \in [m]} w(a_i,b_j) \cdot p(b_j) > 0$, which is a contradiction. Therefore, we cannot have all the prices of any component $D_k$ in the disutility graph to be zero. This resolves the issue of ``zero prices'' and thereby ensures feasibility of the optimal bundle sets (Lemma \ref{optimalbundlenotempty}). Now the crucial task is to define a ``continuous'' map that maintains the extra conditions on the price space. 

Let $S = \cup_{p \in P} \cup_{X \in \X} \langle p, X \rangle $. In order to capture competitive equilibria at fixed-points we define a \emph{correspondence} $\phi: S \rightarrow 2^{S}$ that satisfies the following properties: it maps any $\langle p,X  \rangle$ to all $ \langle q,X^p \rangle $ where,
\begin{itemize}
	\item the price vector $q$ is such that $q \in P$ and for all components $D_k = (A_k \cup B_k,E_k)$ of the disutility graph, we have
	 \begin{align}
	 \label{qdef}
	  \frac{q(b_j)}{q(b_{j'})} &= \frac{p(b_j) + \mathit{max}(\sum_{i \in [n]}w(a_i,b_j) - \sum_{i \in [n]} X_{ij},0)}{p(b_{j'}) + \mathit{max}(\sum_{i \in [n]}w(a_i,b_{j'}) - \sum_{i \in [n]} X_{ij'},0)}, 
	 \end{align}
	 for all chores $b_j$ and $b_{j'}$ in $B_k$.
	\item $X^p \in \X$ is an optimal bundle set at the prices $p$: For every agent $a_i$, we have $ \langle X^p_{i1}, \dots ,X^p_{im} \rangle$ to be the optimal demand bundle for $a_i$ at prices $p$, if and only if, $\langle X^p_{i1}, \dots ,X^p_{im} \rangle$ minimizes the total disutility $\sum_{j \in [m]} d(a_i,b_j) \cdot X^p_{i,j}$ of agent $a_i$ subject to 
	\begin{itemize}
		\item $X^p_{ij} > 0$ only if $d(a_i,b_j) < \tau$, and 
		\item the earning constraint $\sum_{j \in [m]}X^p_{ij} \cdot p(b_j) \geq \sum_{j \in [m]} w(a_i,b_j) \cdot p(b_j)$.
	\end{itemize}
\end{itemize} 
 
At the first glance, it may not even be clear that why such a correspondence can be defined: In particular, finding the price vector $q$ that satisfies the properties as mentioned above. However, we do manage to construct such a correspondence $\phi$ and show that it has at least one fixed-point in Section \ref{sufficiency}. Before we sketch the construction, let us first discuss how the above properties are important to show that the fixed-points of $\phi$ correspond to competitive equilibrium. 

Consider any fixed point of $\phi$, i.e., $\langle p, X \rangle$ such that $\langle p, X \rangle  \in \phi(\langle p, X \rangle)$. Let $r_j(X)$ denote the amount of chore $b_j$ left undone under $X$, i.e.,  $r_j(X) =\mathit{max}(\sum_{i \in [n]}w(a_i,b_j) - \sum_{i \in [n]} X_{ij},0)$. We have $q = p$ and $X^p = X$, where $q$ and $X^p$ are constructed from $p$ and $X$ as above. Thus, every agent earns her money from the chores in her optimal bundle (as $X = X^p$) and if $X^p$ is not a competitive equilibrium, then the only reason is because some chore $b_j$ is not completed, i.e.,  $r_j(X) > 0$. This also implies that there is some chore $b_{j'}$ that is overdone ($\sum_{i \in [n]}w(a_i,b_{j'}) - \sum_{i \in [n]} X_{ij'} < 0$, implying $r_{j'}(X)=0$), as the total earning of the agents is at least the total prices of the chores (by the definition of $\X^p$).  Therefore, we have $\tfrac{q(b_j)}{q(b_{j'})} = \tfrac{p(b_j) + r_{j}(X)}{p(b_{j'}) + r_{j'}(X)} > \tfrac{p(b_j)}{p(b_{j'}) + r_{j'}(X)} = \tfrac{p(b_j)}{p(b_{j'})}$, implying that $q \neq p$, which is a contradiction. 

Now, the main challenge (and also the main bulk of our effort) is to show that such a correspondence $\phi$ can be defined. That is, to get the price vector $q \in \mathbb{R}^m_{\ge 0}$ that satisfy the ratio property as outlined in~(\ref{qdef}), and all the constraints that define the price space $P$, especially the constraint that for every component $D_k$ in the disutility graph, the total price of chores owned by the agents in $A_k$ equals the total price of the chores in $B_k$. For this, we first compute $q'(b_j)$'s that satisfy the ratio property, but may not be in price space $P$. Then, for each component taking its total price as a variable, we define parameterized $q$ from $q'$ that preserves price ratios of $q'$. Now the goal is to find values of the variables once we replace them in $q$ we get $q\in P$. In other words, $q$ satisfy all the component-wise price constraints of domain $P$.

This condition essentially gives a linear system of equations. Let $M \in \mathbb{R}^{d\times d}$ denote the matrix of this linear system. Then, our goal becomes to find a non-negative vector $v \in \mathbb{R}^d$ in the null space of $M$.  It is not obvious why such a vector should exist. Our high-level approach to show the same is as follows: We can equivalently express the linear system of equations $M \cdot v =0$ as $M' \cdot v = v$, where $M' = M + I$, where $I$ is the identity matrix. We show that if we define a function $f:\mathbb{R}^d \rightarrow \mathbb{R}^d$ as $ f(v) = M' \cdot v$, then $f$ maps the $d$-dimensional simplex $\Delta_d$ to itself. Restricting $f$ to only the simplex, we get a continuous map $f:\Delta_d\rightarrow \Delta_d$ and therefore it has a fixed-point by the Brouwer's fixed-point theorem. At every fixed-point $v$ we have $M' \cdot v= v$ implying $M \cdot v=0$. Since $v \in \Delta_d$, we get the vector we needed. 

Finally, to apply the Kakutani's fixed point theorem, we need to show that our correspondence has a closed graph (analogous to continuity for a function) and $\phi(\langle p, X \rangle)$ is non-empty and convex. Since defining the correspondence involves multiple steps, this does not follow trivially. Once we prove this in Lemma~\ref{P4} in Section~\ref{sufficiency}, we get the existence of a fixed-point of $\phi$, and thereby the existence of a competitive equilibrium. We refer the reader to Section~\ref{sufficiency} for a detailed formal discussion.     

\subsubsection{Hardness of Determining Equilibrium under Sufficiency Conditions}\label{mainres3}
Recall from Section~\ref{mainres2} that the NP-hardness of Chore division crucially uses the structure of rigid and flexible agents. However, this structure in the disutility matrix is completely removed if we concentrate on instances satisfying Conditions 1 and 2, mentioned in Section~\ref{mainres2}. We show that even under these sufficiency conditions, chore division is still intractable. We now mention our third main technical result:

\begin{theorem}\label{mainthm3}
	 Chore division, restricted to all instances that satisfy Conditions 1 and 2, is PPAD-hard. 
\end{theorem}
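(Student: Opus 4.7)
The plan is to establish PPAD-hardness by a polynomial-time reduction from a well-studied PPAD-hard market-equilibrium problem, most naturally an Arrow--Debreu exchange market with SPLC (or Leontief) utilities. The reason this choice is sensible is that chore division, as Bogomolnaia et al.\ already observed and as Theorem~\ref{mainthm1} formalizes, exhibits \emph{disconnected} equilibrium sets even in very small instances; this multiplicity-and-discreteness phenomenon is exactly what distinguishes PPAD-hard markets from the strongly polynomial linear Fisher/exchange case. The task then is to exploit this phenomenon while still forcing the disutility graph to decompose into complete bipartite components and the exchange graph to be strongly connected.

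The reduction I would carry out has three phases. First, translate each local piece of the source instance (one SPLC segment, or one Leontief bundle-ratio constraint) into a \emph{disutility gadget} $D_k=(A_k\cup B_k,E_k)$: a small complete bipartite subgraph of the disutility graph whose disutility entries are chosen so that any CE, restricted to $D_k$, must place the prices of the chores in $B_k$ in one of a constant number of discrete ratios. Because the disutility within $A_k\times B_k$ is linear and tolerable (below $\tau$) while disutilities to chores outside $B_k$ are set to $\tau$, Condition~1 is preserved by construction. Second, choose the endowment matrix $w(\cdot,\cdot)$ so that (i) for each pair of gadgets $(D_i,D_j)$ that should exchange price information in the source problem, some agent of $A_i$ owns a small positive amount of every chore in $B_j$, yielding a strongly connected exchange graph $W$ and hence Condition~2; and (ii) the sum of prices of chores owned by $A_k$ equals the required spending for $A_k$ at any source-equilibrium prices. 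Third, show the equilibrium correspondence: any equilibrium of the source instance lifts to a CE of the constructed chore instance via the gadget prices and allocations, and conversely any chore CE restricts, through the gadget price-ratio constraints, to a source equilibrium. Approximation should be preserved under the reduction so that the PPAD-hardness of an $\epsilon$-equilibrium for some inverse-polynomial $\epsilon$ transports to an approximate CE of the chore instance.

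The main obstacle is reconciling the restrictive combinatorics of Conditions~1 and~2 with the need to encode a PPAD-hard structure. Condition~1 forbids any agent from tolerating chores that straddle two gadgets, so all cross-gadget interaction must flow through endowments rather than through disutilities; yet endowments enter the CE conditions only via the budget $\sum_j w(a_i,b_j)\,p(b_j)$, which gives a much weaker lever than shared utility structure. The core technical work will therefore be the design of the coupling endowments: they must be small enough that, within each gadget, the linear program defining optimal bundles is dominated by the intra-gadget disutility pattern and hence forces the intended discrete price ratios; yet coordinated enough that, across gadgets, they propagate a consistent global price system that simulates the source market. A secondary obstacle is that the reverse direction of the correspondence must select, among the possibly many disconnected CE of the chore instance, those encoding genuine source equilibria; this likely requires the gadgets to be rigid enough that every CE, not just some CE, corresponds to a source equilibrium, perhaps by inflating the number of gadget-internal chores so that spurious component-wise scalings are ruled out by Condition~2.
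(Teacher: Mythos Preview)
Your high-level instincts are right: under Condition~1 all cross-gadget communication must go through endowments, and you correctly flag this as the crux. But two concrete gaps separate your plan from a working reduction.

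First, your assertion that a single complete-bipartite gadget with \emph{linear} disutilities can be designed so that ``any CE, restricted to $D_k$, must place the prices of the chores in $B_k$ in one of a constant number of discrete ratios'' is not substantiated and is in fact not how the paper's construction works. With linear disutilities inside a complete bipartite block, the optimal-bundle conditions only pin the price ratio to an \emph{interval} (what the paper calls \emph{price regulation}: $\tfrac{1-\alpha}{1+\alpha}\le p_{2i-1}/p_{2i}\le\tfrac{1+\alpha}{1-\alpha}$), not to a discrete set. The discreteness you need does not come from inside one gadget; it has to be manufactured by chaining gadgets together.

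Second, and relatedly, the paper's reduction is from the \emph{normalized polymatrix game} (Chen--Paparas--Yannakakis), not from SPLC/Leontief markets, and the central technical device is a \emph{reverse ratio-amplification} chain of length $K=\Theta(\log n)$: a sequence of gadgets $D^1_i,\dots,D^K_i$ wired so that if the price ratio in $D^k_i$ hits one endpoint of its allowed interval, the ratio in $D^{k+1}_i$ is forced to the \emph{opposite} endpoint of a geometrically wider interval ($\alpha_{k+1}=\tfrac32\alpha_k$). After $\Theta(\log n)$ steps a tiny $\alpha_1=n^{-\Theta(1)}$ bias is amplified to a constant-scale bias $\alpha_K$, which is then large enough to encode the polymatrix best-response condition $x^T M_{*,2i}>x^T M_{*,2i-1}+\tfrac1n\Rightarrow x_{2i-1}=0$ via the endowment-induced earnings of the agents in $D^1_i$. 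The other four properties the paper isolates (pairwise equal endowments, fixed earning, price equality, price regulation) are all in service of making this amplification chain and its decoding exact. Your proposal has no analogue of this amplification mechanism, and without it there is no apparent way to get linear-disutility gadgets to simulate the threshold (non-monotone) behavior that makes SPLC or polymatrix problems PPAD-hard. Reducing directly from SPLC by ``one gadget per segment'' would require each gadget to already exhibit a kink, which linear disutilities on a complete bipartite block do not provide.
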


We show the PPAD-hardness by reduction from normalized polymatrix game (defined below), which is known to be PPAD-hard~\cite{chen2017complexity}.
\begin{problem}\textbf{(Normalized Polymatrix Game)~\cite{chen2017complexity}}\\
\label{prob:ne}
	\textbf{Given}: A $2n \times 2n$ rational matrix $\M$ with every entry in $[0,1]$ and $\M_{i,2j-1} + \M_{i,2j} = 1$ for all $i \in [2n]$ and $j \in [n]$ .\\
	\textbf{Find}: Equilibrium strategy vector $x \in \mathbb{R}^{2n}_{\geq 0}$ such that $x_{2i-1} + x_{2i} = 1$ and 
	\begin{align*}
	& x^T \cdot \M_{*,{2i-1}} > x^T \cdot \M_{*,2i} + \tfrac{1}{n} \implies x_{2i} = 0.\\
	& x^T \cdot \M_{*,{2i}} > x^T \cdot \M_{*,2i-1} + \tfrac{1}{n} \implies x_{2i-1} = 0.
	\end{align*}
	where $M_{*,k}$ represents the $k^{\mathit{th}}$ column of the matrix $\M$.	
\end{problem}

Given an instance $I = \langle \M \rangle $ of the polymatrix game, we will now create an instance of chore division $E(I) = \langle A \cup  B, d( \cdot, \cdot) , w(\cdot , \cdot ) \rangle$ such that given any competitive equilibrium in $E(I)$, we can determine in polynomial time, an equilibrium strategy vector $x$ for $I$. We note that our reduction can be extended to work even in the restricted setting of $(1 - \tfrac{1}{\textup{poly}(n)})$-competitive equilibria. However, for simplicity we stick to the exact competitive equilibria in our reduction. Next, we give a brief sketch of the proof, while the formal proof with all the details is in Section \ref{ppadhardness}.

\paragraph{Sketch of the Reduction.} 
The key properties that our hard instance $E(I)$ exhibit are  \emph{pairwise equal endowments}, \emph{fixed earning}, \emph{price equality}, \emph{price regulation} and \emph{reverse ratio amplification} (we will give a precise definition of these properties shortly). These techniques (constructing hard instances exhibiting these properties) have been used earlier to prove PPAD-hardness for the exchange model with only goods when agents have constant elasticity of substitution (CES) utilities~\cite{chen2017complexity} and even for the Fisher model when agents have separable piecewise linear concave (SPLC) utilities~\cite{ChenT09}. \emph{However, in the context of chore division, we are able to obtain these gadgets (in particular the reverse ratio-amplification gadget) even when agents have linear disutility functions (dividing goods when agents have linear utility functions is tractable)}. 

We now give a brief overview of the reduction: Let us consider two sets of chores, namely $B_1 = \left\{ b^1_1,b^1_2, \dots ,b^1_{2n} \right\}$ and $B_2 = \left\{ b^2_1,b^2_2, \dots ,b^2_{2n} \right\}$. For all $i,j \in [2n]$ we define agents $a_{i,j}$ such that agent $a_{i,j}$ brings $\M_{i,j}$ units of chore $b^2_i$. The disutilities are as follows:
\begin{align*}
d(a_{i,2j-1},b^{1}_{2j-1}) &= (1- \alpha) &\text{and}&    & d(a_{i,2j-1},b^{1}_{2j}) &= (1+ \alpha)\\
d(a_{i,2j},b^{1}_{2j-1}) &= (1+ \alpha)   &\text{and}&    & d(a_{i,2j},b^{1}_{2j}) &= (1 - \alpha),
\end{align*}
for some infinitesimally small $\alpha > 0$. Agents $a_{i,2j-1}$ and $a_{i,2j}$ have disutility more than $\tau$ for all other chores. Our instance also has agents $a'_1,a'_2, \dots a'_{2n}$ such that for each  $i \in [2n]$, we have,
\begin{align*}
d(a'_{2i-1},b^{1}_{2i-1}) &= (1- \alpha) &\text{and}&    & d(a'_{2i-1},b^{1}_{2i}) &= (1+ \alpha)\\
d(a'_{2i},b^{1}_{2i-1}) &= (1+ \alpha)   &\text{and}&    & d(a'_{2i},b^{1}_{2i}) &= (1 - \alpha).
\end{align*}
Our instance $E(I)$ of chore division has additional agents and chores to ensure the following five crucial properties at equilibrium, however, none of the other agents in $E(I)$ have disutility less than $\tau$ towards the chores in $B_1$.
\begin{itemize}
	\item \emph{Pairwise equal endowments:} In $E(I)$, the total endowment of chore $b^{k}_{2i-1}$ equals the total endowment of chore $b^k_{2i}$ for all $k \in [2]$. Also the total endowments of each chore is $\mathcal{O}(n)$. 
	\item \emph{Fixed earning:} In any competitive equilibrium, for each $i \in [2n]$, we have the total earning of agent $a'_i$ is $(1- \beta) \cdot (2n - \sum_{j \in [2n]} \M_{j,i})$.  
	
	\item \emph{Price equality:} In any competitive equilibrium, the sum of prices of chores $b^k_{2i-1}$ and $b^k_{2i}$ is the same for all $i \in [n]$ and $k \in [2]$. Since the prices of chores at a competitive equilibrium is scale invariant, we can assume further, without loss of generality, that for all $i \in [n]$, we have $p(b^1_{2i-1}) + p(b^1_{2i}) = p(b^2_{2i-1}) + p(b^2_{2i}) = 2$.  
	\item \emph{Price regulation: } In any competitive equilibrium, for all $i \in [2n]$ we have 
	\begin{align*} 
	\frac{1 - \alpha}{1 + \alpha} \leq \frac{p(b^1_{2i-1})}{p(b^1_{2i})} \leq \frac{1 + \alpha}{1 - \alpha}& &\text{and}& &\frac{1 - \beta}{1 + \beta} \leq \frac{p(b^2_{2i-1})}{p(b^2_{2i})} \leq \frac{1 + \beta}{1 - \beta},
	\end{align*}
	for some infinitesimally small $\beta \gg {n^2 \cdot \alpha}$.
	\item \emph{Reverse ratio amplification:} In any competitive equilibrium, for all $i \in [2n]$, if $\tfrac{p(b^1_{2i-1})}{p(b^1_{2i})} = \tfrac{1 + \alpha}{1 - \alpha}$, then we have $\tfrac{p(b^2_{2i-1})}{p(b^2_{2i})} = \tfrac{1 - \beta}{1 + \beta}$ and similarly when  $\tfrac{p(b^1_{2i-1})}{p(b^1_{2i})} = \tfrac{1 - \alpha}{1 + \alpha}$, then we have $\tfrac{p(b^2_{2i-1})}{p(b^2_{2i})} = \tfrac{1 + \beta}{1 - \beta}$.									
\end{itemize}

Our instance $E(I)$ satisfies both Conditions 1 and 2 of Section~\ref{sufficiency}. As a result, it admits a competitive equilibrium. Given that a competitive equilibrium has to satisfy the above properties, we now show how we can use prices at the competitive equilibrium to obtain a Nash equilibrium strategy vector for the polymatrix game in polynomial time. Given a competitive equilibrium price vector $p$, construct vector $x$ as follows. 
\begin{align*}
\forall i\in [2n],\ \ \ x_i &= \frac{p(b^2_{i}) - (1-\beta)}{2 \cdot \beta}
\end{align*}

We will now show that $x$ is the desired Nash equilibrium strategy vector for instance $I$ of the polymatrix game, i.e., it satisfies conditions of Problem \ref{prob:ne}. To this end, first observe that since our instance satisfies price equality we have that $p(b^2_{2i-1}) + p(b^2_{2i}) = 2$ for all $i \in [n]$. Again, since our instance satisfies price regulation, we also have for all $i \in [n]$,  $\tfrac{1-\beta}{1+\beta} \leq \tfrac{p(b^2_{2i-1})}{p(b^2_{2i})} \leq \tfrac{1 + \beta}{1 - \beta}$, implying that $p(b^k_i) \geq 1- \beta$ for all $i \in [2n]$. Therefore, $x_i \geq 0$ for all $i \in [2n]$. 

Also note that we have $x_{2i-1} +x_{2i} =  \tfrac{p(b^2_{2i-1})  + p(b^2_{2i}) - 2(1-\beta)}{2 \cdot \beta} = \tfrac{2\beta}{2\beta} = 1$ (by the price equality property we have $p(b^2_{2i-1}) + p(b^2_{2i})=2$) for all $i \in [n]$. Now, we show that if $x^T \cdot \M_{*,{2i}} > x^T \cdot \M_{*,2i-1} + \tfrac{1}{n}$, then $ x_{2i-1} = 0$. The proof for the symmetric condition is similar. Let us assume that  $x^T \cdot \M_{*,{2i}} > x^T \cdot \M_{*,2i-1} + \tfrac{1}{n}$. Observe that the agents that have a disutility of $(1- \alpha)$ towards chore $b^1_{2i}$ are $\left\{ \cup_{j \in [2n]} a_{j,2i} \right\} \cup a'_{2i}$, and that their total earning is 
\begin{align*}
&=\sum_{j \in [2n]} \M_{j,2i} \cdot p(b^2_j) + (1- \beta) \cdot (2n - \sum_{j \in [2n]} \M_{j,2i}) &\text{(fixed earning)}\\
&=\sum_{j \in [2n]} \M_{j,2i} \cdot (2\beta \cdot x_{j} + (1-\beta)) +  (1- \beta) \cdot (2n - \sum_{j \in [2n]} \M_{j,2i}) &\text{(substituting $p(b^2_j)$)}\\ 
&=\sum_{j \in [2n]} 2\beta \cdot x_{j} \cdot \M_{j,2i} + (1-\beta) \cdot\sum_{j \in [2n]}  \M_{j,2i} +  (1- \beta) \cdot (2n - \sum_{j \in [2n]} \M_{j,2i})\\          
&= 2\beta x^T \cdot \M_{*,2i} +  2n \cdot (1-\beta).               
\end{align*}

Similarly, the total earning of the agents that have a disutility of $(1 - \alpha)$ towards $b^1_{2i-1}$ is  $2 \beta x^T \cdot \M_{*,2i-1} +  2n \cdot (1-\beta)$. Observe that the agents with disutility $(1-\alpha)$ towards $b^1_{2i}$ can earn all their money only from either $b^1_{2i}$ or $b^1_{2i-1}$ (as these are the only chores towards which they have finite disutility). Also note that the total endowment of both $b^1_{2i-1}$ and $b^1_{2i}$ is $cn$ for some $c \in \mathcal{O}(1)$ (as our instance also satisfies pairwise equal endowments). Now if, the agents $\left\{ \cup_{j \in [2n]} a_{j,2i} \right\} \cup a'_{2i}$ earn their money entirely from $b^1_{2i}$, then we will have $p(b^1_{2i}) \geq \tfrac{1}{cn} \cdot (2\beta x^T \cdot \M_{*,2i} +  2n \cdot (1-\beta))$ and $p(b^1_{2i-1}) \leq \tfrac{1}{cn} \cdot (2\beta x^T \cdot \M_{*,2i-1} +  2n \cdot (1-\beta) )$. Since,  $x^T \cdot \M_{*,{2i}} > x^T \cdot \M_{*,2i-1} + \tfrac{1}{n}$. we have $p(b^1_{2i}) > p(b^1_{2i-1}) + \tfrac{1}{n} \cdot \tfrac{2}{cn}\beta = \tfrac{2\beta}{cn^2}$. Again, since $\beta \gg n^2 \cdot \alpha $, we have that $\tfrac{p(b^1_{2i})}{p(b^1_{2i-1})} > \tfrac{1 + \alpha} {1 - \alpha}$, which is a contradiction, as our construction satisfies the price regulation property. Therefore, the agents that have a disutility of $(1-\alpha)$ towards $b^1_{2i}$ should also earn some positive units of  money from $b^1_{2i-1}$. But this is only possible if $\tfrac{p(b^1_{2i})}{p(b^1_{2i-1})} = \tfrac{1 - \alpha}{1 + \alpha}$. Since our instance also satisfies the reverse ratio amplification property, we have that $\tfrac{p(b^2_{2i})}{p(b^2_{2i-1})} = \tfrac{1 + \beta}{1 - \beta}$. Since $p(b^2_{2i}) + p(b^2_{2i-1}) = 2$ by price equality property, we have that $p(b^2_{2i-1}) = 1 - \beta$. Therefore we have,
\begin{align*}
x_{2i-1} &=\frac{(1- \beta) - (1-\beta)}{2 \cdot \beta}\\
&=0.
\end{align*}

A very similar argument shows that when $x^T \cdot \M_{*,{2i-1}} > x^T \cdot \M_{*,2i} + \tfrac{1}{n}$, then $ x_{2i} = 0$. A more elaborate discussion of the proof, where we give complete construction and prove that the instance exhibits all the properties, is in Section~\ref{ppadhardness}.  

\subsection{Further Related Work}
The fair division literature is too vast to survey here, so we refer to the excellent books~\cite{BramsT96,RobertsonW98,Moulin03} and a recent survey article~\cite{Moulin19}, and restrict attention to previous work that appears most relevant. 

Most of the work in fair division is focused on allocating goods with a few exceptions of chores~\cite{Su99,AzrieliS14, BramsT96, RobertsonW98}. Recent seminal papers of Bogomolnaia et al.~\cite{BogomolnaiaMSY17,BogomolnaiaMSY19} consider the case of mixed manna that contains both goods and bads. For the goods case, competitive equilibrium maximizes the Nash welfare, i.e., geometric mean of agents' utilities. In case of chores (or mixed manna),~\cite{BogomolnaiaMSY17} shows that critical points of the geometric mean of agents' disutilities on the (Pareto) efficiency frontier are the competitive equilibrium profiles. 

The fair allocation of \emph{indivisible} items is also an intensely studied problem for the case when all items are goods with a few recent exceptions~\cite{AzizRSW17,AzizCL19,HuangL19,AzizCL19a,AzizCIW19,SandomirskiyH19}. Since the standard notions of fairness such as envy-freeness are not applicable, alternate notions have been defined and studied for this case; see~\cite{LiptonMMS04,Budish11,CaragiannisKMPSW16, PlautR18, ChaudhuryGM20,GhodsiHSSY17,GargKK20} for a subset of notable work and references therein. The Nash welfare continues to serve as a major focal point in this case as well, for which approximation algorithms have been obtained under several classes of utility functions including linear~\cite{ColeG15,ColeDGJMVY17,AnariGSS17,AnariMGV18,BarmanKV18,GargHM18,ChaudhuryCGGHM18}.
\medskip

\paragraph{Organization of the Paper.} We present our three main results in the upcoming sections. Section~\ref{nphardness} contains the NP-hardness of chore division with fixed earnings (and equal earnings). Section~\ref{sufficiency} contains the sufficiency conditions under which a competitive equilibrium always exists and the proof of existence. Finally, Section~\ref{ppadhardness} contains the PPAD-hardness of finding a competitive allocation even under the sufficiency conditions. 

\section{Complexity of Determining Existence of Efficient Equilibrium}\label{nphardness}
In this section, we show that finding a $(\tfrac{11}{12} + \delta)$-competitive equilibrium for any $\delta>0$ in chore division with fixed earnings (Problem~\ref{CDFI}) and consequently chore division (Problem~\ref{CD}) is strongly NP-hard. Later, in the section we also show how to modify our instance to obtain NP-hardness for finding a $(\tfrac{11}{12} + \delta)$-competitive equilibrium even in chore division with equal earnings.  We show that any polynomial time algorithm that determines whether an instance admits a $(\tfrac{11}{12} + \delta)$-competitive equilibrium in chore division with fixed earnings  implies that 3-SAT is solvable in polynomial time. To this end, we recall the 3-SAT problem:

\begin{problem}\textbf{(3-SAT)}\\
	\textbf{Given}: A set of variables $X =  \left\{ x_1,x_2, \dots ,x_n \right\}$ and a set of clauses $\C = \left\{C_1,C_2,\dots, C_m\right\}$ where each clause is a disjunction of exactly three \emph{literals}\footnote{A literal is a variable or the negation of a variable}.\\
	\textbf{Find}: An assignment $A: X \rightarrow \left\{T,F\right\}$ such that all the clauses are \emph{satisfied}\footnote{A clause $C_{r} = \ell_1 \vee \ell_2 \vee \ell_3$ ($\ell_i$ is a literal) is satisfied if and only if $A(\ell_i) = T$ for at least one $i \in [3]$.} or output that no such assignment exists.
\end{problem} 

Given any instance $I = \langle X,\C \rangle$ of 3-SAT, we will create an instance $E(I)$ of chore division such that there exists an efficient competitive equilibrium in $E(I)$ if and only if there exists an assignment $A$ that satisfies all the clauses in $\C$ in $I$. We first briefly sketch the intuition, before we move to the construction of the gadgets required for our reduction.

\paragraph{Several Disconnected Equilibria.} We sketch a very simple scenario that could arise in chore division with fixed earnings: Consider an instance with two agents $a_1$ and $a_2$ with a fixed earning of one unit each. The disutility values are given below where $a_1$ has a disutility of $1$ for $b_1$ and $3$ for $b_2$, while $a_2$ has a disutility of $\tau$ for $b_1$ and $1$ for $b_2$. 
\begin{center}
	\begin{minipage}[b]{0.3\linewidth}
		\centering
		\begin{eqnarray*}
			\setlength{\arraycolsep}{0.5ex}\setlength{\extrarowheight}{0.25ex}
			\begin{array}{@{\hspace{1ex}}c@{\hspace{1ex}}||@{\hspace{1ex}}c@{\hspace{1ex}}|@{\hspace{1ex}}c@{\hspace{1ex}}|@{\hspace{1ex}}c@{\hspace{1ex}}|}
			     	\  & b_1 \ & b_2  \ \\[.5ex] \hline
				 a_1 \ & 1 \ & 3 \  \\[.5ex] \hline
				 a_2 \ & \tau \ & 1 \  \\[.5ex] 
			\end{array}
		\end{eqnarray*}
	\end{minipage}
\end{center}

Let $p = \langle p(b_1), p(b_2) \rangle$ be an equilibrium price vector. Also, throughout this section we use the notation $\MPB_a$  to denote the \emph{minimum pain per buck} bundle for agent $a$ at the prices $p$: a chore $b \in \MPB_a$ if and only if $\tfrac{d(a,b)}{p(b)} \leq \tfrac{d(a,b')}{p(b')}$ for all other chores $b'$ in the instance. Observe that this small instance exhibits exactly two competitive equilibria: 
\begin{itemize}
	\item The first competitive equilibrium is when both $p(b_1)$ and $p(b_2)$ are set to 1. Note that only $\MPB_{a_1} = \left\{b_1\right\}$ and $ \MPB_{a_2} = \left\{b_2\right\}$. Thus $a_1$ earns her entire one unit of money from $g_1$ and $a_2$ earns her entire one unit of money from $g_2$. 
	\item The second competitive equilibrium is when $a_1$ earns from both $b_1$ and $b_2$. For this we set $p(b_1)$ to  $1/2$ and $p(b_2)$ to $3/2$. Note that $MPB_{a_1} = \left\{b_1,b_2\right\}$ and $\MPB_{a_2} = \left\{b_2\right\}$. Under these prices, $a_2$ earns her entire money by doing $2/3$ of $b_2$, and  $a_1$ earns her money by doing all of $b_1$ and $1/3$ of $b_2$.  
\end{itemize}

Also, observe that there exists no competitive equilibrium at any other set of prices (in particular all price vectors that can be expressed as a convex combination of (1,1) and (1/2,3/2)). This is a striking difference to the scenario with only goods to divide, where all competitive equilibrium exists at a unique price vector. Now, let us introduce another agent $a_3$ and another chore $b_3$ in the instance. Let us say that $a_3$ has a fixed earning of one unit, and both agents $a_1$ and $a_2$ have a disutility of $\tau$ towards $b_3$. We now discuss two scenarios that may arise depending on $a_3$'s disutility towards the chores  
\begin{enumerate}
	\item $a_3$ has a disutility of $1$ towards $b_3$ and $b_2$ and $\tau$ towards $b_1$.
	\item $a_3$ has a disutility of $1$ towards $b_3$, $\tfrac{1}{2}$ towards $b_1$ and $\tau$ towards $b_2$.
\end{enumerate}

We will now show that, at a competitive equilibrium, in the first scenario $b_2 \notin \MPB_{a_1}$ and in the second scenario $b_2 \in \MPB_{a_1}$, suggesting that we need to carefully choose the \emph{local} equilibrium among the agents $a_1$, $a_2$ and chores $b_1$ and $b_2$. Let $p(b_1)$, $p(b_2)$ and $p(b_3)$ denote the prices of chores at an equilibrium. Note that since both $a_1$ and $a_2$ have a disutility of $\tau$ for $b_3$, they only earn money from $b_1$ and $b_2$. Thus $p(b_1) + p(b_2) \geq 2$. Note that in both scenarios $b_3$ should be in $\MPB_{a_3}$ as $a_3$ is the only agent with disutility less than $\tau$ towards it. Now,

\begin{itemize}
	\item  In the first scenario, we need to have $\tfrac{d(a_3,b_3)}{p(b_3)} \leq \tfrac{d(a_3,b_3)}{p(b_3)}$ or equivalently $\tfrac{1}{p(b_3)} \leq \tfrac{1}{p(b_2)}$, implying that $p(b_3) \geq p(b_2)$. This in turn implies that 
	    \begin{align*}
	     p(b_2) + 2   &\leq p(b_2) + (p(b_1) + p(b_2)) &\text{(as $p(b_1) + p(b_2)\geq 2$)}\\
	               &\leq p(b_1) + p(b_2) + p(b_3) &\text{(as $p(b_2) \leq p(b_3)$)}\\ 
	               &= 3.
	    \end{align*}
	     Thus we have $p(b_2) \leq 1$, implying that $p(b_1) \geq 1$. Therefore, we can conclude that $b_2 \notin \MPB_{a_1}$ as the disutility to price ratio of $b_1$ is strictly less than that of $b_2$ for agent $a_1$.   
	\item In the second scenario, we need to have $\tfrac{d(a_3,b_3)}{p(b_3)} \leq \tfrac{d(a_3,b_1)}{p(b_1)}$, we have  $\tfrac{1}{p(b_3)} \leq \tfrac{1}{2p(b_1)}$, implying that $p(b_3) \geq 2p(b_1)$. This in turn implies that 
	   \begin{align*}
	     2p(b_1) + 2   &\leq 2p(b_1) + (p(b_1) + p(b_2)) &\text{(as $p(b_1) + p(b_2) \geq 2$)}\\
	               &\leq p(b_1) + p(b_2) + p(b_3) &\text{(as $2p(b_1) \leq p(b_3)$)}\\ 
	                &= 3.
	   \end{align*} 
	    Thus we have $p(b_1) \leq \tfrac{1}{2}$, implying that $p(b_2) \geq \tfrac{3}{2}$. Therefore, we conclude that $b_2 \in \MPB_{a_1}$.
\end{itemize}

Thus, as mentioned earlier, depending on the valuations of the agents outside the local sub-instance, a careful selection of the local equilibrium (among the two disjoint local equilibria) among the agents $a_1$, $a_2$ and chores $b_1$ and $b_2$ is necessary. We will now show that when there are $n$ such local sub-instances (resulting in $2^n$ disjoint equilibria), choosing the correct combination of the local equilibria becomes intractable.    

\subsection{Variable Gadgets}
For each variable $x_i$, we introduce two agents $a_1^i$ and $a_2^i$ and two chores $b_1^i$ and $b_2^i$. We set
\begin{align*}
 d(a_1^i,b_1^i) &= 1,  &d(a_1^i,b_2^i) &= 3,\\
 d(a_2^i,b_1^i) &= \tau, &d(a_2^i,b_2^i) &= 1.
\end{align*} 
See Figure~\ref{nphardfig} for an illustration. We set the earnings of both $a_1^i$ and $a_2^i$ to be one, i.e., $e(a_1^i)= e(a_2^i)=1$. Also, for all $i \in [n]$ agents $a^i_1$ and $a^i_2$ have a disutility of $\tau$ for all other goods in the instance (that have been introduced and will be introduced by clause gadgets in the next section).

\subsection{Clause Gadgets}
For each clause $C_{r} = (\ell_i \vee \ell_j \vee \ell_k )$, where $\ell_i$ is either the variable $x_i$ or its negation $\neg x_i$, we introduce four agents $n_{i}^r$, $n_{j}^{r}$, $n_{k}^{r}$ and $\n^r$, and three chores $m_{i}^{r}$, $m_{j}^{r}$, and  $m_{k}^{r}$. We define the disutility of the agents as follows: For each literal $\ell_i$, if
\begin{itemize}
	\item $\ell_i = x_i$, then,
	             \begin{align*}
	                d(n_i^r,b_2^i) &=1 &\text{and}&  & d(n_i^r,m_i^r) &=\varepsilon\\
	                d(\n^r,b_2^i) &= 1 &\text{and}&  & d(\n^r,m_i^r) &= \varepsilon.      
	             \end{align*} 
	            for some $0 < \varepsilon \ll 1 $, but $\tfrac{1}{\varepsilon} \in \mathcal{O}(1)$.  
	 \item $\ell_i = \neg x_i$, then,
				 \begin{align*}
					d(n_i^r,b_1^i) &=\tfrac{2}{3} &\text{and}& & d(n_i^r,m_i^r) &=\tfrac{4\varepsilon}{3}\\
					d(\n^r,b_1^i) &= \tfrac{2}{3} &\text{and}& & d(\n^r,m_i^r) &= \tfrac{4\varepsilon}{3}.
				\end{align*}             
\end{itemize} 

For all other agents and chores pair, the disutility is $\tau$. See Figure~\ref{nphardfig} for an illustration. We set $e(n_i^r) = e(n_j^r) = e(n_k^r) = \varepsilon$ and $e(\n^r) = \#(C_r) \cdot (\tfrac{\varepsilon}{2}) + \overline{\#}(C_r) \cdot (\varepsilon) - \varepsilon'$, where $\#(C_r)$ is the number of literals in $C_r$ that are not negations of variables and $\overline{\#}(C_r)$ is the number of literals in $C_r$ that are negations of variables \footnote{This implies that $\#(C_r) + \overline{\#}(C_r) = 3$}, and $\varepsilon' < \tfrac{\varepsilon}{2}$ (the exact value of $\varepsilon'$ will depend on $\delta$ \footnote{Reminder to what $\delta$ is: recall that we are trying to show the hardness of determining whether an instance admits a $(\tfrac{11}{12}+ \delta)$-competitive equilibrium or not.} and will be made clear in the proof of Lemma~\ref{negative}). We make a small claim about the total earning requirements for the agents $n_i^r$, $n_j^r$, $n_k^r$ and $\n^r$.\

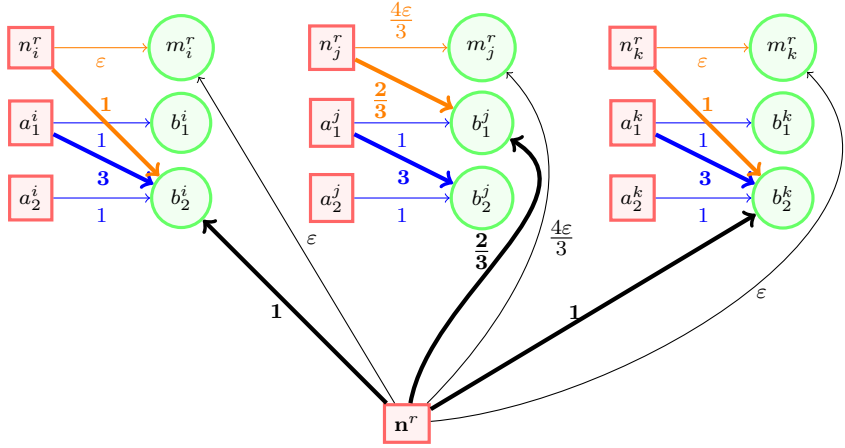
\begin{figure}
	\centering 
 \begin{tikzpicture}[
 roundnode/.style={circle, draw=green!60, fill=green!5, very thick, minimum size=7mm},
 squarednode/.style={rectangle, draw=red!60, fill=red!5, very thick, minimum size=5mm},
 ]
 \node[squarednode]      (ai1)      at (0,3)                       {$\scriptstyle{a^i_{1}}$};
 \node[squarednode]      (ai2)      at (0,2)                       {$\scriptstyle{a^i_{2}}$};
 \node[roundnode]      (bi1)      at (2,3)                       {$\scriptstyle{b^i_{1}}$};
 \node[roundnode]      (bi2)      at (2,2)                       {$\scriptstyle{b^i_{2}}$};		
 	
 \draw[blue,->] (ai1)--node[below]{$\scriptstyle{1}$}(bi1); 
 \draw[blue,->] (ai2)--node[below]{$\scriptstyle{1}$}(bi2);
 \draw[blue,ultra thick, ->] (ai1)--node[below]{$\mathbf{\scriptstyle{3}}$}(bi2);

 \node[squarednode]      (aj1)      at (0+4,3)                       {$\scriptstyle{a^j_{1}}$};
 	\node[squarednode]      (aj2)      at (0+4,2)                       {$\scriptstyle{a^j_{2}}$};
 		\node[roundnode]      (bj1)      at (2+4,3)                       {$\scriptstyle{b^j_{1}}$};
 			\node[roundnode]      (bj2)      at (2+4,2)                       {$\scriptstyle{b^j_{2}}$};		
 				
 				\draw[blue,->] (aj1)--node[below]{$\scriptstyle{1}$}(bj1); 
 				\draw[blue,->] (aj2)--node[below]{$\scriptstyle{1}$}(bj2);
 				\draw[blue,ultra thick, ->] (aj1)--node[below]{$\mathbf{\scriptstyle{3}}$}(bj2);

 \node[squarednode]      (ak1)      at (0+8,3)                       {$\scriptstyle{a^k_{1}}$};
 	\node[squarednode]      (ak2)      at (0+8,2)                       {$\scriptstyle{a^k_{2}}$};
 		\node[roundnode]      (bk1)      at (2+8,3)                       {$\scriptstyle{b^k_{1}}$};
 			\node[roundnode]      (bk2)      at (2+8,2)                       {$\scriptstyle{b^k_{2}}$};		
 				
 				\draw[blue,->] (ak1)--node[below]{$\scriptstyle{1}$}(bk1); 
 				\draw[blue,->] (ak2)--node[below]{$\scriptstyle{1}$}(bk2);
 				\draw[blue,ultra thick, ->] (ak1)--node[below]{$\mathbf{\scriptstyle{3}}$}(bk2);

\node[squarednode]      (ni)      at (0,4)                       {$\scriptstyle{n^r_{i}}$};
\node[squarednode]      (nj)      at (0+4,4)                       {$\scriptstyle{n^r_{j}}$};
\node[squarednode]      (nk)      at (0+8,4)                       {$\scriptstyle{n^r_{k}}$};

\node[roundnode]      (mi)      at (0+2,4)                       {$\scriptstyle{m^r_{i}}$};
\node[roundnode]      (mj)      at (0+4+2,4)                       {$\scriptstyle{m^r_{j}}$};
\node[roundnode]      (mk)      at (0+8+2,4)                       {$\scriptstyle{m^r_{k}}$};

\draw[orange,->] (ni)--node[below]{$\mathbf{\scriptstyle{\varepsilon}}$}(mi); 
\draw[orange,->] (nj)--node[above]{$\scriptstyle{\tfrac{4\varepsilon}{3}}$}(mj); 
\draw[orange,->] (nk)--node[below]{$\scriptstyle{\varepsilon}$}(mk);

\draw[orange,->, ultra thick] (ni)--node[above]{$\mathbf{\scriptstyle{1}}$}(bi2);  
\draw[orange,->, ultra thick] (nj)--(bj1);
\node  at (4.65,3.3) {\textcolor{orange}{$\mathbf{\scriptstyle{\tfrac{2}{3}}}$}};
\draw[orange,->, ultra thick] (nk)--node[above]{$\mathbf{\scriptstyle{1}}$}(bk2);

\node[squarednode]      (nr)      at (5,-1)                       {$\scriptstyle{\n^r}$};

\draw[black,->] (nr)--node[right]{${\scriptstyle{\varepsilon}}$} (mi);
\path[black,->, out=45,in=-45]    (nr) edge node[right]{${\scriptstyle{\tfrac{4\varepsilon}{3}}}$} (mj);
\path[black,->, out=5,in=-45]    (nr) edge node[right]{${\scriptstyle{\varepsilon}}$} (mk);

\draw[black,ultra thick,->] (nr)--node[left]{$\mathbf{\scriptstyle{1}}$}(bi2);
\path[black,ultra thick, ->, out=80,in=-30]    (nr) edge node[left]{$\mathbf{\scriptstyle{\tfrac{2}{3}}}$} (bj1);
\draw[black,ultra thick,->] (nr)--node[left]{$\mathbf{\scriptstyle{1}}$}(bk2);

 \end{tikzpicture}
 \caption{Illustration of the variable gadgets corresponding to $x_i$, $x_j$ and $x_k$, and the clause gadget $C_r = (x_i \vee \neg x_j \vee x_k)$. The red squared nodes represent the agents and the green circle nodes represent the chores. Only disutility values less than $\tau$ have been indicated. The disutility edges from agents in the variable gadgets are outlined by blue edges. The disutility edges from agents $n^r_{\ell}$ for $\ell \in \left\{i,j,k\right\}$ are outlined by orange edges and the disutility edges from agent $\n^r$  are outlined by black edges. Thicker disutility edges have a higher disutility than the thinner disutility edges of the same color.}
 \label{nphardfig}
\end{figure}

\begin{claim}
	\label{sumofbudgets}
	For each clause $C_r = \ell_i \vee \ell_j \vee \ell_k$ in $I$, we have $e(n_i^r) + e(n_j^r) + e(n_k^r) + e(\n^r) = \#(C_r) \cdot (\tfrac{3\varepsilon}{2}) + \overline{\#}(C_r) \cdot (2 \varepsilon) - \varepsilon'$ 
\end{claim}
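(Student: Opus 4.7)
The claim is a direct arithmetic identity, so my plan is simply to substitute the definitions of the four earnings from the clause gadget and collapse the result using the identity $\#(C_r) + \overline{\#}(C_r) = 3$.

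First, I would recall that by definition $e(n_i^r) = e(n_j^r) = e(n_k^r) = \varepsilon$, so their contribution to the sum is $3\varepsilon$, while $e(\n^r) = \#(C_r)\cdot(\tfrac{\varepsilon}{2}) + \overline{\#}(C_r)\cdot\varepsilon - \varepsilon'$. Adding these gives
\[
e(n_i^r)+e(n_j^r)+e(n_k^r)+e(\n^r) \;=\; 3\varepsilon + \#(C_r)\cdot\tfrac{\varepsilon}{2} + \overline{\#}(C_r)\cdot\varepsilon - \varepsilon'.
\]

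Next, the key step is to rewrite $3\varepsilon$ using the fact that every literal in $C_r$ is either a variable or a negated variable, so $\#(C_r)+\overline{\#}(C_r)=3$; hence $3\varepsilon = \#(C_r)\cdot\varepsilon + \overline{\#}(C_r)\cdot\varepsilon$. Substituting and regrouping the coefficients of $\#(C_r)$ and $\overline{\#}(C_r)$ yields
\[
\#(C_r)\cdot\bigl(\varepsilon + \tfrac{\varepsilon}{2}\bigr) + \overline{\#}(C_r)\cdot\bigl(\varepsilon + \varepsilon\bigr) - \varepsilon' \;=\; \#(C_r)\cdot\tfrac{3\varepsilon}{2} + \overline{\#}(C_r)\cdot 2\varepsilon - \varepsilon',
\]
which is exactly the claimed expression. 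There is no real obstacle here; this is a bookkeeping claim whose only purpose is to record, once and for all, the total earnings of the four agents introduced by a single clause gadget so that it can be cited later (presumably to derive aggregate spending/earning equations in the proof that $E(I)$ admits an efficient competitive equilibrium iff $I$ is satisfiable).
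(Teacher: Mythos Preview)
Your proof is correct and essentially identical to the paper's: both substitute the definitions to get $3\varepsilon + \#(C_r)\cdot\tfrac{\varepsilon}{2} + \overline{\#}(C_r)\cdot\varepsilon - \varepsilon'$, then rewrite $3\varepsilon = (\#(C_r)+\overline{\#}(C_r))\varepsilon$ and regroup.
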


\begin{proof}
	We have,
	\begin{align*}
	 e(n_i^r) + e(n_j^r) + e(n_k^r) + e(\n^r) &= 3 \varepsilon + \#(C_r) \cdot (\tfrac{\varepsilon}{2}) + \overline{\#}(C_r) \cdot (\varepsilon) - \varepsilon'\\ 
	                                          &= (\#(C_r) + \overline{\#}(C_r)) \varepsilon + \#(C_r) \cdot (\tfrac{\varepsilon}{2}) + \overline{\#}(C_r) \cdot (\varepsilon) - \varepsilon'\\
	                                          & = \#(C_r) \cdot (\tfrac{3\varepsilon}{2}) + \overline{\#}(C_r) \cdot (2\varepsilon) - \varepsilon' \qedhere
	\end{align*}
\end{proof}

We now show how to map any allocation in $E(I)$ to an assignment of variables in $I$. Consider any \emph{earning} $f$ under some prices $p$ in $E(I)$. If agent $i$ does $X_{ij}$ amount of chore $j$, then $f(i,j) = X_{ij}\cdot p(j)$. 
 \begin{center}
 	\emph{If agent $a_1^i$ does some of chore $b_2^i$, i.e., $f({a_1^i, b_2^i}) > 0$, then we set $x_i$ to $F$ and if $f({a_1^i, b_2^i}) = 0$, then we set $x_i$ to $T$.}
 \end{center}
 We now make some basic observations.

\begin{observation}\label{technical}
	Let $p$ be the prices of chores and $f$ the the money allocation corresponding to a competitive equilibrium in $E(I)$. Consider any clause $C_{r} = (\ell_i \vee \ell_j \vee \ell_k )$. Then, 
	\begin{enumerate}
		\item if $\ell_i = x_i$ and $f({a_1^i, b_2^i}) > 0$  then $p({m_i^r}) \geq \tfrac{3\varepsilon}{2}$, and
		\item if $\ell_i = \neg x_i$ and $f({a_1^i, b_2^i}) = 0$, then $p({m_i^r}) \geq 2\varepsilon$.
	\end{enumerate}
\end{observation}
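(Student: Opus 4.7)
The plan is to obtain each bound by combining two applications of the minimum-pain-per-buck (MPB) condition at a competitive equilibrium: first on the variable-gadget agents $a_1^i, a_2^i$ to pin down the price of the appropriate chore $b_k^i$, and then on whichever clause-gadget agent consumes a positive amount of $m_i^r$ to link $p(m_i^r)$ back to that price. A common observation drives both parts: since only $n_i^r$ and $\n^r$ have disutility strictly less than $\tau$ on $m_i^r$ and the supply of $m_i^r$ equals one unit, at least one of these two agents must consume a positive amount of $m_i^r$ at equilibrium, so $m_i^r$ must lie in that agent's MPB bundle.

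For case (1) ($\ell_i=x_i$ and $f(a_1^i,b_2^i)>0$), I would first invoke MPB for $a_1^i$: because $b_2^i$ lies in $a_1^i$'s consumed bundle and $a_1^i$ has disutility $<\tau$ only on $\{b_1^i,b_2^i\}$, we get $3/p(b_2^i)\le 1/p(b_1^i)$, i.e.\ $p(b_2^i)\ge 3p(b_1^i)$. Next I would use an earning--supply balance for $\{a_1^i,a_2^i\}$: both have disutility $\tau$ outside $\{b_1^i,b_2^i\}$, so their combined earning $e(a_1^i)+e(a_2^i)=2$ is drawn entirely from these two unit-supply chores, giving $p(b_1^i)+p(b_2^i)\ge 2$. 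Combining the two inequalities yields $4p(b_2^i)/3\ge 2$, hence $p(b_2^i)\ge 3/2$. Finally, the MPB of whichever agent in $\{n_i^r,\n^r\}$ consumes $m_i^r$ uses disutilities $\varepsilon$ on $m_i^r$ and $1$ on $b_2^i$, so $\varepsilon/p(m_i^r)\le 1/p(b_2^i)$, giving $p(m_i^r)\ge \varepsilon\,p(b_2^i)\ge 3\varepsilon/2$.

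For case (2) ($\ell_i=\neg x_i$ and $f(a_1^i,b_2^i)=0$) the variable-gadget half is more direct: $a_1^i$ puts no weight on $b_2^i$ and has disutility $\tau$ outside $\{b_1^i,b_2^i\}$, so its entire earning of $1$ comes from $b_1^i$; equating $X_{a_1^i,b_1^i}\cdot p(b_1^i)=1$ with the unit-supply bound $X_{a_1^i,b_1^i}\le 1$ forces $p(b_1^i)\ge 1$. The clause-gadget half mirrors case (1): whichever agent in $\{n_i^r,\n^r\}$ consumes $m_i^r$ now has disutilities $4\varepsilon/3$ on $m_i^r$ and $2/3$ on $b_1^i$, so MPB yields $(4\varepsilon/3)/p(m_i^r)\le (2/3)/p(b_1^i)$, i.e.\ $p(m_i^r)\ge 2\varepsilon\,p(b_1^i)\ge 2\varepsilon$.

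The one point worth flagging is that clause agents arising from \emph{other} clauses $C_{r'}$ containing $x_i$ (possibly with the opposite sign) also have finite disutility on $b_1^i$ or $b_2^i$ and may consume these chores; however, this does not weaken either of the variable-gadget inequalities above, since $a_1^i$ and $a_2^i$ still individually have to earn exactly $1$ and still have finite disutility only on $\{b_1^i,b_2^i\}$, so $p(b_1^i)+p(b_2^i)\ge 2$ in case (1) and $p(b_1^i)\ge 1$ in case (2) remain valid no matter how many additional agents also earn from these two chores. Once this is observed, everything reduces to routine use of the MPB and supply conditions.
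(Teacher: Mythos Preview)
Your proof is correct and follows essentially the same approach as the paper: first bound $p(b_2^i)\ge 3/2$ (resp.\ $p(b_1^i)\ge 1$) using the MPB condition for $a_1^i$ together with the earning constraint of the variable-gadget agents, and then transfer this to a lower bound on $p(m_i^r)$ via the MPB condition of whichever of $n_i^r,\n^r$ is assigned a positive amount of $m_i^r$. Your closing remark about other clause agents touching $b_1^i,b_2^i$ is a correct and helpful sanity check that the paper leaves implicit.
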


\begin{proof}
	We first prove part 1. If $f({a_1^i, b_2^i}) > 0$, then $b_2^i \in MPB_{a_1^i}$, implying that $\tfrac{d(a_1^i,b_2^i)}{p(b_2^i)} \leq \tfrac{d(a_1^i,b_1^i)}{p(b_1^i)}$. Therefore, we have that $p(b_2^i) \geq \tfrac{d(a^i_1,b^i_2)}{d(a^i_1,b^i_1)} \cdot p(b^i_1) = 3p(b_1^i)$. Also, note that since agents $a_1^i$ and $a_2^i$ have disutility less than $\tau$ only for chores $b_1^i$ and $b_2^i$, they will only earn from chores $b_1^i$ and $b_2^i$. This implies that $p(b_1^i)+ p(b_2^i) \geq e(a_1^i) + e(a_2^i) = 2$. Also, since $p(b_2^i) \geq 3p(b_1^i)$ we have that $p(b_2^i) \geq \tfrac{3}{2}$. Now, observe that the only agents who have disutility less than $\tau$ towards $m_i^r$ are the agents $n_i^r$ and $\n^r$. Since $\ell_i = x_i$, both $n_i^r$ and $\n^r$ have a disutility of 1 towards $b_2^i$ and $\varepsilon$ towards $m_i^r$. Therefore, for $m_i^r$ to be in either $\MPB_{n_i^r}$ or $\MPB_{\n^r}$, we need $\tfrac{\varepsilon}{p(m_i^r)} \leq \tfrac{1}{p(b_2^i)} \leq \tfrac{2}{3}$. This implies that $p(m_i^r) \geq \tfrac{3\varepsilon}{2}$. 
	
	The proof of part 2 is very similar. Note that agent $a_1^i$ has disutility less than $\tau$ for only chores $b_1^i$ and $b_2^i$. If $f(a_1^i,b_2^i)=0$, then she only earns by doing chore $b_1^i$, implying that $p(b_1^i) \geq e(a_1^i) = 1$. Similar to the proof in part 1, observe that the only agents who have disutility less than $\tau$ towards $m_i^r$ are the agents $n_i^r$ and $\n^r$. Since $\ell_i = \neg x_i$, both $n_i^r$ and $\n^r$ have a disutility of $\tfrac{2}{3}$ towards $b_1^i$ and $\tfrac{4\varepsilon}{3}$ towards $m_i^r$. Therefore, for $m_i^r$ to be in either $\MPB_{n_i^r}$ or $\MPB_{\n^r}$, we need $\tfrac{4\varepsilon}{3p(m_i^r)} \leq \tfrac{2}{3p(b_1^i)} \leq \tfrac{2}{3}$ (as $p(b^1_i) \geq 1$). This implies that $p(m_i^r) \geq 2\varepsilon$.\qedhere  
\end{proof}

\begin{lemma}
	\label{negative}
	If there is no satisfying assignment to the instance $I = \langle X, \C \rangle $ of 3-SAT, then $E(I)$ does not admit any $(\tfrac{11}{12} + \delta)$-competitive equilibrium for any $\delta>0$.
\end{lemma}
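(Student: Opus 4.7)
The plan is to prove the contrapositive: assume $E(I)$ admits some $(\tfrac{11}{12}+\delta)$-competitive equilibrium $\langle p, X\rangle$ with money flow $f(a,b):=X_{ab}\cdot p(b)$, and derive a satisfying assignment $A$ for $I$. I will use the assignment rule stated just before Observation~\ref{technical} ($x_i:=F$ iff $f(a_1^i,b_2^i)>0$, else $x_i:=T$), and suppose for contradiction that $A$ leaves some clause $C_r=\ell_i\vee\ell_j\vee\ell_k$ unsatisfied.

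The first key step is to turn the failure of each literal into a price lower bound via Observation~\ref{technical}. A false positive literal $\ell_s=x_s$ means $x_s=F$, hence $f(a_1^s,b_2^s)>0$, giving $p(m_s^r)\geq\tfrac{3\varepsilon}{2}$; a false negative literal $\ell_s=\neg x_s$ means $x_s=T$, hence $f(a_1^s,b_2^s)=0$, giving $p(m_s^r)\geq 2\varepsilon$. Summing over the three literals of $C_r$,
\[ p(m_i^r)+p(m_j^r)+p(m_k^r) \;\geq\; \#(C_r)\cdot\tfrac{3\varepsilon}{2} + \overline{\#}(C_r)\cdot 2\varepsilon \;=:\; L, \]
while by Claim~\ref{sumofbudgets} the matching total of required earnings satisfies $e(n_i^r)+e(n_j^r)+e(n_k^r)+e(\n^r)=L-\varepsilon'$.

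The second key step is a supply/demand accounting on the three chores $\{m_i^r,m_j^r,m_k^r\}$. By construction of the clause gadget, only the four agents $n_i^r,n_j^r,n_k^r,\n^r$ have disutility below $\tau$ for any of these chores, so the total money any feasible allocation routes through them is bounded above by the total earnings $L-\varepsilon'$. On the other hand, an $(\tfrac{11}{12}+\delta)$-competitive equilibrium forces each $m_\ell^r$ to be at least $(\tfrac{11}{12}+\delta)$-completed, so the total money paid for these chores is at least $(\tfrac{11}{12}+\delta)(p(m_i^r)+p(m_j^r)+p(m_k^r))\geq(\tfrac{11}{12}+\delta)L$. Combining yields the necessary inequality $\varepsilon'\leq(\tfrac{1}{12}-\delta)L$.

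The final step is the calibration of $\varepsilon'$ so this inequality is violated for every clause. Since $L\leq 6\varepsilon$ always (the extremum being $\overline{\#}(C_r)=3$), it suffices to pick $\varepsilon'$ in the nonempty window $(\tfrac{\varepsilon}{2}-6\delta\varepsilon,\,\tfrac{\varepsilon}{2})$ --- e.g.\ $\varepsilon':=\tfrac{\varepsilon}{2}-3\delta\varepsilon$ --- which simultaneously respects the gadget constraint $\varepsilon'<\tfrac{\varepsilon}{2}$ and strictly violates $\varepsilon'\leq(\tfrac{1}{12}-\delta)L$, producing the desired contradiction. The main subtle point is the monopoly bound in the second step: the four clause agents also have sub-$\tau$ disutility for some $b$ chores in the variable gadgets, so one might worry their earnings get diverted elsewhere; however any such diversion only \emph{decreases} the money that reaches $\{m_i^r,m_j^r,m_k^r\}$, which tightens rather than weakens the upper bound $L-\varepsilon'$, so the argument goes through unchanged.
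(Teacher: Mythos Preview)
Your proof is correct and follows essentially the same route as the paper: derive the price lower bounds on $m_i^r,m_j^r,m_k^r$ from Observation~\ref{technical}, compare against the fixed total earnings $L-\varepsilon'$ from Claim~\ref{sumofbudgets}, and calibrate $\varepsilon'$ just below $\tfrac{\varepsilon}{2}$ to force a contradiction. The only stylistic difference is in the accounting step: the paper pigeonholes the total shortfall $\varepsilon'+\sum_h\delta(m_h^r)$ onto a single chore $m_{h'}^r$ and bounds its undone fraction by $\tfrac{\varepsilon'}{6\varepsilon}$, whereas you sum over all three chores and compare $(\tfrac{11}{12}+\delta)L$ directly to $L-\varepsilon'$. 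Both arguments reduce to the identical requirement $\varepsilon'>\tfrac{\varepsilon}{2}-6\delta\varepsilon$, so your aggregate version is a perfectly valid (and slightly cleaner) substitute for the paper's per-chore pigeonhole.
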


\begin{proof}
 We prove by contradiction. Assume otherwise and let $p$ be the equilibrium prices of chores and $f$ be the corresponding money allocation. Recall the mapping from an equilibrium allocation to the assignment of variables: For each $i \in  [n]$ if $f(a_1^i,b_2^i)>0$, then we set $x_i$ to $F$ and if $f(a_1^i,b_2^i) = 0$, then we set $x_i$ to $T$. Since $I$ admits no satisfying assignment, there exists a clause $C_r = \ell_i \vee \ell_j \vee \ell_k$ which is unsatisfied. For every literal $\ell_i \in C_r$ such that $\ell_i = x_i$, note that $x_i$ is $F$. Therefore, we have that $f(a_1^i,b_2^i)>0$. This implies that $p(m_i^r) \geq \tfrac{3 \varepsilon}{2}$ (by Observation~\ref{technical}). Similarly for every literal $\ell_i$ in $C_r$ such that $\ell_i = \neg x_i$, note that $x_i$ is $T$. Therefore, we have that $f(a_1^i,b_2^i) = 0$, implying that $p(m_i^r) \geq 2\varepsilon$ (by Observation~\ref{technical}). We write the price of chore $m^r_i$, $p(m^r_i)$ as $\tfrac{3\varepsilon}{2} + \delta(m^r_i)$ if $\ell_i = x_i$ and $2\varepsilon + \delta(m^r_i)$ if $\ell_i = \neg x_i$, where $\delta(m^r_i)$ is the deviation of the price of $m^r_i$ from its lower bound. Therefore, we have $p(m_i^r) + p(m_j^r) + p(m_k^r) = \#(C_r) \cdot (\tfrac{3\varepsilon}{2}) + \overline{\#}(C_r) \cdot (2 \varepsilon) + \delta(m^r_i) + \delta(m^r_j) + \delta(m^r_k)$. Note that the only agents who have disutility less than $\tau$ for chores $m_i^r$, $m_j^r$ and $m_k^r$ are the agents $n_i^r$, $n_j^r$, $n_k^r$ and $\n^r$. However, by Claim~\ref{sumofbudgets}, we have that $e(n_i^r) + e(n_j^r) + e(n_k^r) + e(\n^r) = \#(C_r) \cdot (\tfrac{3\varepsilon}{2}) + \overline{\#}(C_r) \cdot (2 \varepsilon) - \varepsilon' $ which is strictly less that the sum of prices of chores $m_i^r$, $m_j^r$ and $m_k^r$. In particular we have, $\sum_{h \in \left\{i,j,k \right\}} p(m^r_h) - (\sum_{h \in \left\{i,j,k \right\}} e(n^r_h) + e(\n^r)) = \varepsilon' + \sum_{h \in \left\{i,j,k \right\}} \delta(m^r_h)$. Therefore, there exists at least one chore $m^r_{h'}$ such that the difference between the total price of the chore and the total money earned from the chore by the agents is $\tfrac{\varepsilon' + \sum_{h \in \left\{i,j,k \right\}} \delta(m^r_h)}{3} \geq \tfrac{\varepsilon' +  \delta(m^r_{h'})}{3}$. Thus, the portion of chore $m^r_{h'}$ left undone is at least,
 \begin{align*}
  &=\tfrac{\varepsilon' +  \delta(m^r_{h'})}{3 \cdot p(m^r_{h'})}\\
  &\geq \tfrac{\varepsilon' +  \delta(m^r_{h'})}{3 \cdot (2 \varepsilon + \delta(m^r_{h'}))} &\text{(as $p(m^r_{h'})$ is either $\tfrac{3 \varepsilon}{2} + \delta(m^r_{h'})$ or $2\varepsilon + \delta(m^r_{h'})$)}\\
  &\geq \tfrac{\varepsilon'}{3 \cdot (2 \varepsilon)} &\text{(as $\varepsilon' < \frac{\varepsilon}{2}$)}.
 \end{align*} 
  Since our reduction works for any choice of $\varepsilon' < \tfrac{\varepsilon}{2}$, we can choose an $\varepsilon'$ such that $\tfrac{\varepsilon'}{ (6 \varepsilon)} > \tfrac{1}{12} - \delta$, implying that we do not have a $(\tfrac{11}{12} + \delta)$-competitive equilibrium, which is a contradiction.
\end{proof}

\begin{lemma}
	\label{positive}
	If there exists a satisfying assignment to the instance $I = \langle X, \C \rangle $ of 3-SAT, then $E(I)$ admits a competitive equilibrium.
\end{lemma}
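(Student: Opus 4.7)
The plan is to construct a competitive equilibrium $\langle p, X\rangle$ for $E(I)$ explicitly from the satisfying assignment $A$: place each variable gadget in the local equilibrium dictated by $A$ without any perturbation, and complete each clause gadget by choosing $m$-chore prices so that the sum of $m$-prices equals the total clause-agent budget, the positive- and negative-literal $m$-prices stand in ratio $3:4$ (matching $\n^r$'s pos/neg disutility ratio $\varepsilon : 4\varepsilon/3$), and the satisfied-literal $m$-chores are each fully absorbed by their own $n^r_i$, while the unsatisfied-literal $m$-chores are split between $n^r_i$ and $\n^r$.

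Concretely, for each variable $x_i$: if $A(x_i)=T$ set $p(b^i_1)=p(b^i_2)=1$ and allocate $a^i_1$ all of $b^i_1$, $a^i_2$ all of $b^i_2$; if $A(x_i)=F$ set $p(b^i_1)=\tfrac12$, $p(b^i_2)=\tfrac32$ and allocate $a^i_1$ all of $b^i_1$ and $\tfrac13$ of $b^i_2$, $a^i_2$ the remaining $\tfrac23$ of $b^i_2$; in either case no clause agent touches any $b^i_?$. For each clause $C_r$, partition its literals into satisfied $s=s^+\cup s^-$ and unsatisfied $u=u^+\cup u^-$, set $p(m^r_i)=\varepsilon$ for every $\ell_i\in s$, and set $p(m^r_i)=p^+$ for every $\ell_i\in u^+$ and $p(m^r_i)=p^-=\tfrac{4}{3}p^+$ for every $\ell_i\in u^-$, where $p^+$ is chosen so that $|s|\varepsilon + |u^+|p^+ + |u^-|\tfrac{4p^+}{3} = \#(C_r)\tfrac{3\varepsilon}{2}+\overline{\#}(C_r)\,2\varepsilon-\varepsilon'$. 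A direct algebraic check shows that the resulting $p^+$ satisfies $p^+\ge \tfrac{3\varepsilon}{2}$ (hence $p^-\ge 2\varepsilon$) exactly because $|s^+|\tfrac{\varepsilon}{2}+|s^-|\varepsilon-\varepsilon'>0$ when $C_r$ has a satisfied literal (using $\varepsilon'<\tfrac{\varepsilon}{2}$); the all-satisfied case $|u|=0$ is handled analogously by parametrizing the \emph{satisfied}-literal prices instead. The allocation then is: each satisfied $m^r_i$ is entirely consumed by $n^r_i$ (one unit at price $\varepsilon$, earning her full budget); each unsatisfied $m^r_i$ is consumed by $n^r_i$ up to $\varepsilon/p(m^r_i)$ units and by $\n^r$ for the remaining $1-\varepsilon/p(m^r_i)$ units.

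Verification reduces to three checks: earnings match budgets (by the choice of $p^+$ together with the fact that $\n^r$'s total take $\sum_{\ell\in u}\bigl(p(m^r_\ell)-\varepsilon\bigr)$ equals $e(\n^r)$), every chore is fully allocated, and every consumed chore lies in the consuming agent's $\MPB$. The last requires: (a) for each satisfied $\ell_i$, the price $p(m^r_i)=\varepsilon$ ties $m^r_i$ with the corresponding $b^i_?$ in $\MPB_{n^r_i}$, so $n^r_i$ is free to consume only $m^r_i$; (b) for each unsatisfied $\ell_i$, the price $p^\pm$ strictly exceeds the MPB-equality threshold, making $m^r_i$ strictly preferred over $b^i_?$ for $n^r_i$ (who consequently cannot touch the already-full variable gadget); and (c) the ratio $p^-=\tfrac43 p^+$ equates the disutility-to-price ratio of all unsatisfied $m^r_i$'s for $\n^r$, and this common ratio is strictly smaller than every $b^i_?$ ratio in $\n^r$'s disutility graph (since $p^+>\tfrac{3\varepsilon}{2}$), placing exactly the unsatisfied $m^r_i$'s into $\MPB_{\n^r}$. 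The main obstacle is coordinating the prices so that all three MPB conditions hold simultaneously; the crucial ingredient is the $p^-=\tfrac43 p^+$ ratio, which precisely matches $\n^r$'s pos/neg disutility ratio, and the positivity of $|s^+|\tfrac{\varepsilon}{2}+|s^-|\varepsilon-\varepsilon'$, which fails exactly in the direction of Lemma~\ref{negative} when no literal of $C_r$ is satisfied.
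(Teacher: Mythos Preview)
Your proposal is correct and follows essentially the same approach as the paper. Your parameters $p^+$ and $p^-=\tfrac{4}{3}p^+$ are exactly the paper's $\alpha_r\cdot\tfrac{3\varepsilon}{2}$ and $\alpha_r\cdot 2\varepsilon$, your budget-balancing equation is equivalent to the paper's defining equation for $\alpha_r$, and your key inequality $|s^+|\tfrac{\varepsilon}{2}+|s^-|\varepsilon-\varepsilon'>0$ is precisely the paper's Observation~\ref{alphaislarge} rewritten; the paper also treats the $U_r=\emptyset$ case separately with an explicit price formula where you only sketch it.
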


\begin{proof}
	Consider any satisfying assignment in $I$. We now show how to construct the prices $p$ and the money allocation $f$ corresponding to a competitive allocation. We will ensure that only the agents in the variable gadgets earn from the chores in the variable gadgets and  the agents in the clause gadgets earn only from the chores in the clause gadgets. 
	
	\paragraph{Prices and Allocation of Chores in Variable Gadgets.}  For each variable $x_i$, 
	\begin{itemize}
		\item If $x_i = T$, then we set  $p(b_1^i)=1$ and $p(b_2^i) = 1$.		              
		\item If $x_i = F$, then we set $p(b_1^i)=\tfrac{1}{2}$ and $p(b_2^i) = \tfrac{3}{2}$.
	\end{itemize}
	Since the agents in the variable gadgets have disutility less than $\tau$ only for some goods in the variable gadgets (and have disutility of $\tau$ for every good in the clause gadget) we can already define their optimal bundles ($\MPB$ bundles). If $x_i = T$, then observe that  $\MPB_{a_1^i} = \left\{b_1^i\right\}$ and $\MPB_{a_2^i} = \left\{b_2^i\right\}$. Thus agent $a_1^i$ earns $1$ unit of money from doing chore $b_1^i$ entirely and agent $a_2^i$ earns $1$ unit of money by doing chore $b_2^i$ entirely. When $x_i = F$, then observe that  $\MPB_{a_1^i} = \left\{b_1^i,b_2^i\right\}$ and $\MPB_{a_2^i} = \left\{b_2^i\right\}$. Thus agent $a_1^i$ earns $1$ unit of money from doing chore $b_1^i$ entirely and $b_2^i$ partly and agent $a_2^i$ earns $1$ unit of money by doing chore $b_2^i$ partly. Now we make an immediate, simple  observation:
	
	\begin{observation}
		When $x_i=T$, then $f(a_1^i,b_2^i) = 0$ and when $x_i = F$, we have $f(a_1^i,b_2^i) > 0$.
	\end{observation}  
	Observe that all the local sub-instances corresponding to the variable gadgets have cleared. It suffices to show that there exists a competitive equilibrium for local sub-instances corresponding to the clause gadgets. We now look into the agents and chores in the clause gadget.
	\paragraph{Prices and Allocation of Chores in Clause Gadgets.} Consider a clause $C_r = \ell_i \vee \ell_j \vee \ell_k$. Therefore, let $S_r \subseteq \left\{\ell_i,\ell_j,\ell_k \right\}$ be the literals that evaluate to $T$ \footnote{A literal $\ell_i = x_i$ evaluates to $T$ if $x_i$ is set to $T$ and the literal $\ell_i = \neg x_i$ evaluates to $T$ when $x_i$ is set to $F$.} and $U_r \subseteq \left\{\ell_i,\ell_j,\ell_k \right\}$ be the set of literals that evaluate to $F$ under the assignment $X$. Since $X$ is a satisfying assignment, at least one of the literals will evaluate to $T$ and thus $\lvert S_r \rvert  \geq 1$ and $\lvert U_r \rvert \leq 2$. Let $\#(S_r)$ and $\#(U_r)$ be the number of literals in $S_r$ and $U_r$ respectively that are not negations of variables and similarly let $\overline{\#}(S_r)$ and $\overline{\#}(U_r)$ be the number of literals that are negations of variables in $S_r$ and $U_r$ respectively. Let $\alpha_r$ be a scalar such that 
	\begin{align}
	\label{alphadef}
	 \alpha_r \cdot ( \#(U_r) \cdot \tfrac{3\varepsilon}{2} + \overline{\#}(U_r) \cdot (2\varepsilon) ) = \lvert U_r \rvert \cdot \varepsilon + e(\n^r) 
	\end{align} 
	We now set the prices of the chores in the clause gadgets. Consider any clause $C_r = \ell_i \vee \ell_j \vee \ell_k$ in $I$ (with $S_r$ and $U_r$ defined appropriately). For every literal $\ell_{\theta} \in S_r$, set,
	
	\[ p(m_{\theta}^r) =   \left\{
	\begin{array}{ll}
	\varepsilon & \text{ if } \ell_{\theta} = \neg x_{\theta},  \\
	\varepsilon & \text{ if } \ell_{\theta} = x_{\theta} \text{ and } U_r \neq \emptyset,  \\
	\varepsilon + \frac{e(n^r)}{\#(S_r)}  &\text{ if } \ell_{\theta} =  x_{\theta} \text{ and } U_r = \emptyset. \\
	\end{array} 
	\right. \]
	
	For every $\ell_{\theta} \in U_r$, set 
	
	\[ p(m_{\theta}^r) =   \left\{
	\begin{array}{ll}
	\alpha_r \cdot (\frac{3\varepsilon}{2}) & \text{ if } \ell_{\theta} = x_{\theta}  \\
	\alpha_r \cdot(2 \varepsilon)  &\text{ if } \ell_{\theta} = \neg x_{\theta}. \\
	\end{array} 
	\right. \]

	We will now show that under the above prices for the chores in the clause gadgets, we can determine a money flow where all the clause agents earn all of their money from their optimal bundles and all the clause chores will be completed. We distinguish two cases, depending on whether $U_r = \emptyset$ or not,
	
	\paragraph{Case $U_r \neq \emptyset$:} In this case, we first observe that $\alpha_r$ is strictly larger than $1$:
	\begin{observation}
		\label{alphaislarge}
		We have well defined scalar $\alpha_r > 1$.
	\end{observation}
	 \begin{proof}
	 	Since we are in the case where $U_r \neq 0$, we have $\#(U_r) \cdot \tfrac{3\varepsilon}{2} + \overline{\#}(U_r) \cdot (2\varepsilon) > 0 $, thus $\alpha_r$ is well defined. For the claim of the lemma, it suffices to show that $\lvert U_r \rvert \cdot \varepsilon + e(\n^r) > \#(U_r) \cdot \tfrac{3\varepsilon}{2} + \overline{\#}(U_r) \cdot (2\varepsilon)$. To this end,
	 	\begin{align}\label{alphalargeineq}  
	 	  \lvert U_r \rvert \cdot \varepsilon + e(\n^r) &= (\#(U_r) + \overline{\#}(U_r)) \cdot \varepsilon + e(\n^r) \nonumber \\
	 	                                                &= (\#(U_r) + \overline{\#}(U_r)) \cdot \varepsilon + \#(C_r) \cdot \tfrac{\varepsilon}{2} + \overline{\#}(C_r) \cdot (\varepsilon) - \varepsilon' \enspace . 
	 	 \end{align}                                               
	 	  Since the literals that are not negations of variables in $U_r$ are also not negations of variables in $C_r$ we have $\#(U_r) \leq \#(C_r)$. By a similar argument we also have $\overline{\#}(U_r) \leq \overline{\#}(C_r)$. Since $\lvert U_r \rvert \leq 2$ we also have $\#(U_r) + \overline{\#}(U_r) < \#(C_r) + \overline{\#}(C_r) $, implying that either $\#(U_r) < \#(C_r)$ or $\overline{\#}(U_r) < \overline{\#}(C_r)$. Therefore, we have that $\#(C_r) \cdot \tfrac{\varepsilon}{2} + \overline{\#}(C_r) \cdot (\varepsilon) \geq \#(U_r) \cdot \tfrac{\varepsilon}{2} + \overline{\#}(U_r) \cdot (\varepsilon) + \tfrac{\varepsilon}{2}$. Plugging this inequality in~(\ref{alphalargeineq}), we have                                               
	 	  \begin{align*}                                              
		 	   \lvert U_r \rvert \cdot \varepsilon + e(\n^r) &\geq (\#(U_r) + \overline{\#}(U_r)) \cdot \varepsilon + \#(U_r) \cdot \tfrac{\varepsilon}{2} + \overline{\#}(U_r) \cdot (\varepsilon) + \tfrac{\varepsilon}{2} - \varepsilon'\\
	 	                                                &> (\#(U_r) + \overline{\#}(U_r)) \cdot \varepsilon + \#(U_r) \cdot \tfrac{\varepsilon}{2} + \overline{\#}(U_r) \cdot (\varepsilon) &(\text{as  $\varepsilon' < \tfrac{\varepsilon}{2}$})\\
	 	                                                &= \#(U_r) \cdot \tfrac{3\varepsilon}{2} + \overline{\#}(U_r) \cdot (2\varepsilon) \enspace . \qedhere
	 	\end{align*}
	 \end{proof}

	We will now characterize the optimal bundles ($\MPB$ chores) for each agents under the set prices. 
		
	\begin{observation}
		\label{MPB_for_S_r}
		For each literal $\ell_{\theta} \in S_r$, we have $m_{\theta}^r \in \MPB_{n_{\theta}^r}$.
	\end{observation}
	\begin{proof}
		We consider the cases, whether the $\ell_{\theta} = x_{\theta}$ or $\ell_{\theta} = \neg x_{\theta}$.
		\begin{itemize}
			\item $\ell_{\theta} = x_{\theta}$: Note that the only other chore (other than $m_{\theta}^r$) for which agent $n_{\theta}^r$ has disutility less than $\tau$ is chore $b_2^{\theta}$. Since $\ell_{\theta} \in S_r$, this means that $x_{\theta} = T$ and therefore we have $p(b_2^{\theta}) = 1$. Now observe that,
			\begin{align*}
			  \frac{d(n_{\theta}^r,m_{\theta}^r)}{p(m_{\theta}^r)} &= \frac{\varepsilon}{\varepsilon}\\
			                                                        &= 1\\
			                                                        &=\frac{d(n_{\theta}^r,b_2^{\theta})}{p(b_2^{\theta})} \enspace .  
			\end{align*}
			Therefore $m_{\theta}^r \in \MPB_{n_{\theta}^r}$.
			\item $\ell_{\theta}  = \neg x_{\theta}$: Note that the only other chore (other than $m_{\theta}^r$) for which agent $n_{\theta}^r$ has disutility less than $\tau$ is chore $b_1^{\theta}$. Since $\ell_{\theta} \in S_r$, this means that $x_{\theta} = F$ and therefore we have $p(b_1^{\theta}) = \tfrac{1}{2}$. Now observe that,
			\begin{align*}
			\frac{d(n_{\theta}^r,m_{\theta}^r)}{p(m_{\theta}^r)} &= \frac{4\varepsilon}{3\varepsilon}\\
																 &= \frac{4}{3}\\
																 &= \frac{2}{3 \cdot \tfrac{1}{2}}\\
																 &=\frac{d(n_{\theta}^r,b_1^{\theta})}{p(b_1^{\theta})}  \enspace .
			\end{align*} 
			Therefore, $m_{\theta}^r \in \MPB_{n_{\theta}^r}$. \qedhere
		\end{itemize}
	\end{proof}
	This implies that for all literals $\ell_{\theta}$ in $S_r$, the agent $n_{\theta}^r$ will earn her entire money of $\varepsilon$ by doing the chore $\ell_{\theta}$ entirely. Therefore, now we only need to look at the agents $n_{\theta}^r$ and chores $m_{\theta}^r$ where $\ell_{\theta} \in U_r$. To this end we observe that,
	
	\begin{observation}
		\label{MPB_for_U_r}
		For each literal $\ell_{\theta} \in U_r$, we have $m_{\theta}^r \in \MPB_{n_{\theta}^r}$ and $m_{\theta}^r \in \MPB_{\n^r}$.
	\end{observation}

	\begin{proof}
     We first show that $m_{\theta}^r \in \MPB_{n_{\theta}^r}$. We make a distinction based on whether $\ell_{\theta} = x_{\theta}$ or $\ell_{\theta} = \neg x_{\theta}$. 
       
      \begin{itemize}
      	\item $\ell_{\theta} = x_{\theta}$: In this case we have $p(m_{\theta}^r) = \alpha_r \cdot (\tfrac{3\varepsilon}{2})$. Note that the only other chore (other than $m_{\theta}^r$) for which agent $n_{\theta}^r$ has disutility less than $\tau$ is chore $b_2^{\theta}$. Since $\ell_{\theta} \in U_r$, this means that $x_{\theta} = F$ and therefore we have $p(b_2^{\theta}) = \tfrac{3}{2}$. Now observe that,

      	 \begin{align}
      	  \frac{d(n_{\theta}^r,m_{\theta}^r)}{p(m_{\theta}^r)} &= \frac{1}{\alpha_r} \cdot \frac{\varepsilon}{\tfrac{3\varepsilon}{2}} \nonumber\\
														       &= \frac{1}{\alpha_r} \cdot \frac{2}{3} \label{hard1}\\
														       &= \frac{1}{\alpha_r} \cdot \frac{d(n_{\theta}^r,b_2^{\theta})}{p(b_2^{\theta})} \nonumber\\
														       &<  \frac{d(n_{\theta}^r,b_2^{\theta})}{p(b_2^{\theta})}\enspace . &(\text{as $\alpha_r >1$ by Observation~\ref{alphaislarge}}) \nonumber
      	 \end{align}

      	\item $\ell_{\theta} = \neg x_{\theta}$: In this case we have $p(m_{\theta}^r) = \alpha_r \cdot (2\varepsilon)$. Note that the only other chore (other than $m_{\theta}^r$) for which agent $n_{\theta}^r$ has disutility less than $\tau$ is chore $b_1^{\theta}$. Since $\ell_{\theta} \in U_r$, this means that $x_{\theta} = T$ and therefore we have $p(b_1^{\theta}) = 1$. Now observe that,
      	\begin{align}
      	\frac{d(n_{\theta}^r,m_{\theta}^r)}{p(m_{\theta}^r)} &= \frac{1}{\alpha_r} \cdot \frac{4\varepsilon}{3 \cdot 2\varepsilon} \nonumber\\
														     &= \frac{1}{\alpha_r} \cdot \frac{2}{3} \label{hard2} \\
														     &= \frac{1}{\alpha_r} \cdot \frac{d(n_{\theta}^r,b_1^{\theta})}{p(b_1^{\theta})} \nonumber\\
														     &<  \frac{d(n_{\theta}^r,b_1^{\theta})}{p(b_1^{\theta})}\enspace . &(\text{as $\alpha_r >1$ by Observation~\ref{alphaislarge}}) \nonumber
      	\end{align}
     
      \end{itemize}  
      Thus in both cases we  have $m_{\theta}^r \in \MPB_{n_{\theta}^r}$.
      
      We will now show that $m_{\theta}^r \in \MPB_{\n^r}$ as well. We do this by showing that the disutility to price ratio of the chores $m_{\theta}^r$, when $\ell_{\theta} \in U_r$, is minimum for the agent $\n^r$. \emph{To this end, first crucially observe that from~\eqref{hard1} and~\eqref{hard2}, irrespective of whether $\ell_{\theta} = x_{\theta}$ or $\ell_{\theta} = \neg x_{\theta}$, we have $\frac{d(n_{\theta}^r,m_{\theta}^r)}{p(m_{\theta}^r)} = \tfrac{1}{\alpha_r} \cdot \tfrac{2}{3}$}.  Also, note that the disutility profile agent $\n^r$ has for chore $m_{\theta}^r$ and the chores in the variable gadget of $x_{\theta}$ ($b^{\theta}_1$ and $b^{\theta}_2$) is identical to the disutility profile of agent $n_{\theta}^r$ for the same set of chores. Therefore, for all $\ell_{\theta} \in U_r$ we have $\frac{d(\n^r,m_{\theta}^r)}{p(m_{\theta}^r)} = \frac{1}{\alpha_r} \cdot \frac{2}{3}$ (irrespective of whether $\ell_{\theta} = x_{\theta}$ or $\ell_{\theta} = \neg x_{\theta}$) which is also strictly less than both $\frac{d(\n^r,b_2^{\theta})}{p(b_2^{\theta})}$ and $\frac{d(\n^r,b_1^{\theta})}{p(b_1^{\theta})}$. 
      We now look at disutility to price ratio that agent $\n^r$ has for chores in $S_r$. Observe that for all $\ell_{\beta} \in S_r$ we have $p(m_{\beta}^r) = \varepsilon$ and $d(\n^r,m_{\beta}^r) \geq \varepsilon$ (as the disutility is $\varepsilon$ if $\ell_{\beta} = x_{\beta}$ and is $\tfrac{4\varepsilon}{3}$ if $\ell_{\beta} = \neg x_{\beta}$ ). This implies that for all $\ell_{\beta} \in S_r$ we have $\frac{d(\n^r,m_{\beta}^r)}{p(m_{\beta}^r)} \geq 1 > \tfrac{2}{3} > \tfrac{1}{\alpha_r} \cdot \tfrac{2}{3}$ (as $\alpha_r >1$ by Observation~\ref{alphaislarge}). Therefore, the disutility to price ratio of the chores $m^r_{\theta}$, when $\ell_{\theta} \in U_r$,  for agent $\n^r$ is $\tfrac{1}{\alpha_r} \cdot \tfrac{2}{3}$ which is at most the disutility to price ratio of all the chores for which $\n^r$ has a disutility of less than $\tau$. Therefore, we have $ \bigcup_{\ell_{\theta} \in U_r} m_{\theta}^r \subseteq \MPB_{\n^r}$. 
	\end{proof}
	 Now that we have identified the $\MPB$ chores for all the agents in the clause gadgets, we are ready to show the money flow allocation. We set 
	 \begin{align*}
	   f(n_{\theta}^r,m_{\theta}^r) &= \varepsilon           &\text{ (for all $\ell_{\theta} \in S_r$) }\\
	   f(\n^r,m_{\theta}^r) &= p(m_{\theta}^r) - \varepsilon \enspace . &\text{ (for all $\ell_{\theta} \in U_r$) } 
	 \end{align*} 
    All agents spend on their corresponding $\MPB$ chores. Observe that for all $\ell_{\theta} \in S_r$, the agents $n_{\theta}^r$ earn their money of $\varepsilon$ by doing chore $m_{\theta}^r$ completely. Now, for all $\ell_{\theta} \in U_r$, the agents $n_{\theta}^r$ earn their money of $\varepsilon$ by doing chore $m_{\theta}^r$ partially. The agent $\n^r$ earns her entire money by completing whatever is left of the chores in $\bigcup_{\ell_{\theta} \in U_r} m_{\theta}^r$. It only suffices to show that agent $\n^r$ earns exactly $e(\n^r)$. To this end, we observe that the total money earned by $\n^r$ is 
    
    \begin{align*}
     \sum_{\ell_{\theta} \in U_r} f(\n^r,m_{\theta}^r) &= \sum_{\ell_{\theta} \in U_r}(p(m_{\theta}^r) -\varepsilon)\\      
                                                       &= \alpha_r \cdot ( \#(U_r) \cdot \tfrac{3\varepsilon}{2} + \overline{\#}(U_r) \cdot (2\varepsilon) ) - \lvert U_r \rvert \cdot \varepsilon \\
                                                       &= e(\n^r)  \enspace . &(\text{by~\eqref{alphadef})}
                                                    \end{align*}
       
   Therefore, we have an allocation where the agents in the corresponding variable gadgets earn their money by completing the chores in the variable gadgets and the agents in the clause gadget earn their entire money by completing the chores in the clause gadgets. This concludes the proof for the case $U_r \neq \emptyset$.
  
  \paragraph{Case $U_r = \emptyset$:}  In this case we have that all the literals in the clause $C_r$ belongs to the set $S_r$. Therefore, for all the literals $\ell_{\theta}$ occurring in $C_r$, we have,
  \[ p(m_{\theta}^r) =   \left\{
  \begin{array}{ll}
  \varepsilon & \text{ if } \ell_{\theta} = \neg x_{\theta},  \\
  \varepsilon + \frac{e(n^r)}{\#(S_r)}  &\text{ if } \ell_{\theta} =  x_{\theta} \\
  \end{array} 
  \right. \]
  Like earlier, we will identify the $\MPB$ chores for all the clause gadget agents and then will outline a money flow allocation where every agent earns all her money and all the chores are completed. We first look into the agents $n^r_{\theta}$. Very similar to Observation~\ref{MPB_for_S_r}, we can claim that $m^r_{\theta} \in \MPB_{n^r_{\theta}}$ with a very similar argument as the one used in the proof of Observation~\ref{MPB_for_S_r}: The agent $n^r_{\theta}$ has disutility less than $\tau$ only for chores $m^r_{\theta}$, $b^{\theta}_{2}$ if $\ell_{\theta} = x_{\theta}$, and only for chores $m^r_{\theta}$ and $b^{\theta}_{1}$ if $\ell_{\theta} = \neg x_{\theta}$,  and the price of the chore $p(m^r_{\theta})$ is at least $\varepsilon$ (it is more if $\ell_{\theta} = x_{\theta}$), while the prices of chores $b^1_{\theta}$ and $b^2_{\theta}$ are the same as in Observation~\ref{MPB_for_S_r}. 
  
  Now we look into the agent $\n^r$. Since the disutility profile of agent $\n^r$ is identical to that of $n^r_{\theta}$ when restricted to chores $b^{\theta}_1$, $b^{\theta}_2$ and $m^r_{\theta}$, we can conclude that the disutility to price ratio of $m^r_{\theta}$ for $\n^r$  is at most that of chores $b^{\theta}_1$ and $b^{\theta}_2$. Now observe that the disutility to price ratio of all chores $m^r_{\theta}$ for $\n^r$ where $\ell_{\theta} = x_{\theta}$ is $\tfrac{d(\n^r,m^r_{\theta})}{p(m^r_{\theta})} = \tfrac{\varepsilon}{p(m^r_{\theta})} \leq 1$ (as $p(m^r_{\theta}) = \varepsilon + \tfrac{e(\n^r)}{\#(S_r)}$), while the disutility to price ratio all chores $m^r_{\theta}$ for $\n^r$ where $\ell_{\theta} = \neg x_{\theta}$ is $\tfrac{d(\n^r,m^r_{\theta})}{p(m^r_{\theta})} = \tfrac{4\varepsilon}{3p(m^r_{\theta})} > 1$ (as $p(m^r_{\theta}) = \varepsilon $). Since $\n^r$ has a disutility of less than $\tau$ only for the chores in the clause gadget of $C_r$ and the chores in the corresponding variable gadgets, we can claim that $ \bigcup_{\left\{ \theta \mid \ell_{\theta} = x_{\theta} \right\}} m^r_{\theta} \subseteq \MPB_{\n^r}$. Now, that we have identified the $\MPB$ chores for the agents in the clause gadget, we outline a money flow,
  
   \begin{align*}
   f(n_{\theta}^r,m_{\theta}^r) &= \varepsilon           &\text{ (for all $\ell_{\theta}$) }\\
   f(\n^r,m_{\theta}^r) &= p(m_{\theta}^r) - \varepsilon \enspace . &\text{ (for all $\ell_{\theta} = x_{\theta}$) }
   \end{align*} 
   All the agents spend on their corresponding $\MPB$ chores. Observe that for all $\ell_{\theta} $, the agents $n_{\theta}^r$ earn their entire money of $\varepsilon$ by doing chore $m_{\theta}^r$ (partially if $\ell_{\theta} = x_{\theta}$ and completely when $\ell_{\theta} = \neg x_{\theta}$). The agent $\n^r$ earns her entire money by completing whatever is left of the chores in $\bigcup_{\left\{ \theta \mid \ell_{\theta} = x_{\theta} \right\}} m_{\theta}^r$. It only suffices to show that agent $\n^r$ earns exactly $e(\n^r)$. To this end, we observe that the total money earned by $\n^r$ is 
   \begin{align*}
   \sum_{\left\{ \theta \mid \ell_{\theta} = x_{\theta} \right\}} f(\n^r,m_{\theta}^r) &= \sum_{\left\{ \theta \mid \ell_{\theta} = x_{\theta} \right\}}(p(m_{\theta}^r) -\varepsilon)\\      
																					  &= \#(S_r) \cdot \bigg(\varepsilon + \frac{e(\n^r)}{\#(S_r)} - \varepsilon \bigg)\\
																					  &=  e(\n^r).		
   \end{align*}
   Therefore, we have an allocation where the agents in the variable gadgets earn their money by completing the chores in the variable gadgets and the agents in the clause gadgets earn their entire money by completing the chores in the clause gadgets. This concludes the proof for the case $U_r = \emptyset$. 
\end{proof}

This brings us to the main result of this section.

\begin{theorem}
	Determining an $(\tfrac{11}{12} + \delta)$-competitive equilibrium, for any $\delta>0$,  in chore division with fixed earnings is strongly NP-hard.
\end{theorem}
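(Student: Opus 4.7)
The plan is to complete the reduction from 3-SAT by combining Lemmas~\ref{negative} and~\ref{positive}. Given any instance $I = \langle X, \C \rangle$ of 3-SAT, I would first argue that the construction of the chore division instance $E(I)$ described in the variable and clause gadgets is carried out in polynomial time: there are $2n$ agents and $2n$ chores from the variable gadgets plus $4m$ agents and $3m$ chores from the clause gadgets, and each disutility entry is one of a constant number of values (namely $1$, $2/3$, $3$, $\varepsilon$, $4\varepsilon/3$, or $\tau$). Since $\varepsilon$, $\varepsilon'$, and $\tau$ can all be chosen as fixed constants (depending only on $\delta$), every numerical entry is bounded by a constant, which is precisely what is required for \emph{strong} NP-hardness.

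Next I would invoke the two lemmas as the two directions of correctness. If $I$ admits a satisfying assignment, Lemma~\ref{positive} produces an exact competitive equilibrium in $E(I)$, which is in particular a $(\tfrac{11}{12}+\delta)$-competitive equilibrium. Conversely, Lemma~\ref{negative} shows that if $I$ has no satisfying assignment, then $E(I)$ admits no $(\tfrac{11}{12}+\delta)$-competitive equilibrium (where the constant $\varepsilon'$ is fixed so that $\varepsilon'/(6\varepsilon) > \tfrac{1}{12} - \delta$, a choice that depends only on $\delta$). Thus a polynomial-time algorithm that decides the existence of a $(\tfrac{11}{12}+\delta)$-competitive equilibrium in an instance of chore division with fixed earnings would immediately decide 3-SAT, establishing NP-hardness.

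The one subtle step I would need to state carefully is the scaling of parameters: I must pin down $\varepsilon'$ and $\varepsilon$ as explicit constants (independent of $n$ and $m$) so that both the threshold condition $\varepsilon'/(6\varepsilon) > 1/12 - \delta$ used in Lemma~\ref{negative} and the inequality $\varepsilon' < \varepsilon/2$ used throughout the positive direction are simultaneously satisfied, and so that $\tau$ can be chosen as a fixed constant strictly larger than all other finite disutility values. Once these constants are fixed, the input size of $E(I)$ is $\mathrm{poly}(n,m)$ with all entries of constant magnitude, so the reduction is a strong polynomial reduction.

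Finally, I would conclude the theorem: since 3-SAT is strongly NP-hard and the reduction $I \mapsto E(I)$ is a polynomial-time, constant-magnitude reduction such that $I$ is satisfiable if and only if $E(I)$ admits a $(\tfrac{11}{12}+\delta)$-competitive equilibrium, determining the existence of such an approximate equilibrium in chore division with fixed earnings is strongly NP-hard. The main obstacle I anticipate is purely expository rather than mathematical: there is no new combinatorial argument beyond Lemmas~\ref{positive} and~\ref{negative}, but one must be explicit about the choice of constants and about why strong (rather than weak) NP-hardness follows, since otherwise a reader might worry that $\varepsilon$ needs to shrink with $n$.
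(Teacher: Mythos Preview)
Your proposal is correct and follows essentially the same approach as the paper: combine Lemmas~\ref{negative} and~\ref{positive} to get both directions of the reduction from 3-SAT, and note that all numerical entries are constants so the reduction yields strong NP-hardness. If anything, your write-up is more explicit than the paper's about pinning down $\varepsilon,\varepsilon',\tau$ as fixed constants depending only on $\delta$, which is exactly the expository care the argument needs.
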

\begin{proof}
	Given any instance $I =\langle X,\C \rangle$ of 3-SAT, in polynomial time we can construct an instance $E(I)$ of chore division comprising of all variable gadgets and clause gadgets. Also, observe all the entries in the disutility matrix $d(\cdot,\cdot )$ and the money vector $e(\cdot)$ are constants (Thus all input parameters can be expressed with polynomial bit size in unary notation). Lemma~\ref{negative} implies that we have a $(\tfrac{11}{12} + \delta)$-competitive equilibrium only if $I$ is satisfiable and  Lemma~\ref{positive} implies that if $I$ is satisfiable, then  $E(I)$ admits a competitive equilibrium (and thus also a $(\tfrac{11}{12} + \delta)$-competitive equilibrium).    
\end{proof}

\begin{remark}
Note that every instance of chore division with fixed earnings $\langle A,B,d(\cdot,\cdot),e(\cdot) \rangle$, where $e(a)$ is an integer for all $a \in A$,  can be transformed into an instance $I' = \langle A',B,d'(\cdot,\cdot) \rangle$ of chore division with equal incomes (where $e(a) =1$ for all $a \in A'$) by creating $e(a)$ many identical copies (having the exact same disutility profile) of the agent $a \in A$ (the good set remains unchanged): Every $\alpha$-competitive equilibrium in $I'$ will also be an $\alpha$-competitive equilibrium in $I$. Observe that in our instance $E(I)$, we can scale the earning functions of all the agents by some large scalar $\gamma(\varepsilon,\varepsilon')$ to make the earnings of the agents integral. Again, since  $e(a) \in \mathcal{O}(1)$ and  $\tfrac{1}{\varepsilon}, \tfrac{1}{\varepsilon'} \in \mathcal{O}(1)$, we have $\lvert A' \rvert = \mathcal{O}(\lvert A \rvert)$ and  all the input parameters of $A'$ (all entries in the disutility matrix $d'(\cdot,\cdot)$) can be expressed with polynomial bit size in unary notation. \emph{Therefore, finding an $(\tfrac{11}{12} + \delta)$-competitive equilibrium, for any $\delta>0$,  in chore division with equal incomes is also strongly NP-hard.} 
\end{remark}

\section{Sufficiency Conditions for the Existence of Equilibrium}\label{sufficiency}
In this section, we formulate certain conditions that if any instance of chore division satisfies, will admit a  competitive equilibrium. The reader is encouraged to read Section~\ref{mainres2} to get an overall picture of the results, ideas and techniques used in this section. 

Recall that in all the instances that we construct to show non-existence of  competitive  equilibrium, we crucially use the following structure in the disutility matrix: there are agent sets $A_1$ and $A_2$ and sets of chores $B_1$ and $B_2$ such that agents in $A_1$ have disutility of less than $\tau$  for all chores in $B_1$, but disutility of more than $\tau$   for all chores in $B_2$; However, agents in $A_2$ have disutility of less than $\tau$  for some chores in $B_1$ and all chores in $B_2$. Therefore, we now look into instances which do not have such structure in the disutility matrix. To this end, we define the \emph{disutility graph} of an instance as the bipartite graph $D = (A \cup B, E_D)$ containing the agents $A$ as one vertex set and the chores $B$ as the other and there exists an edge from agent $a \in A$ to chore $b \in B$ if and only if $d(a,b) < \tau $. Our first sufficiency condition is that,
\begin{align*}
 \text{\textbf{Condition 1:} $D$ is a disjoint union of complete bipartite graphs $D_1,D_2,\dots ,D_d$ for some $d\ge 1$.}
\end{align*}

Observe that the above condition is violated by all the examples in Section~\ref{mainres1} and the instance $E(I)$ in Section~\ref{nphardness}.  Recall that we showed in Section~\ref{mainres2} that Condition 1 alone is not sufficient for chore division (it is sufficient for chore division with fixed earnings (Problem~\ref{CDFI})). To this end, we define the \emph{exchange graph} of an instance as a  graph $W = ([d], E_W)$. We have $(i,j) \in E_W$ if and only if for every chore $b$ in the component $D_j$ of the disutility matrix, there is an agent $a \in D_i$ such that $w_{a,b} > 0$. We now propose our second sufficiency condition.
\begin{align*}
\text{\textbf{Condition 2:} $W$ is strongly connected.}
\end{align*}

Let $\mathcal{I}$ denote all the instances of chore division that satisfy Condition 1 and Condition 2. We now show that all instances in $\mathcal{I}$ admit a competitive  equilibrium. Consider any instance $I = \langle D,W \rangle \in \mathcal{I}$ such that $D = \cup_{i \in [d]} D_i$ where each $D_i = (A_i \cup B_i, E_{D_i})$ is a complete bipartite graph, disjoint from $D_{i'}$ ($i' \neq i$). For ease of notation:
\begin{itemize}
	\item We represent our set $A$ of $n$ agents as $[n]$ and the set $B$ of $m$ chores as $[m]$.
	\item We also write $p_j$ to denote the price of chore $j$ and $w_{i,j}$ to represent the agent $i$'s initial endowment of chore $j$.
	\item Lastly, we also assume without loss of generality that the total endowment of each chore is one: $\sum_{i \in [n]}w_{i,j} =1$.
\end{itemize}
    Now, we briefly introduce some basic definitions and concepts required to prove the existence of   competitive equilibrium.

\paragraph{Normalized Prices and Bounded Allocations.} A price vector $p = \langle p_1,p_2, \dots ,p_m \rangle$ is called a \emph{normalized price vector} if
 \begin{itemize}
 	\item $p_j \geq 0$ for all $j \in [m]$,
 	\item $\sum_{j \in [m]} p_j =1$, and 
 	\item $\sum_{i \in A_k} \sum_{j \in [m]} w_{i,j} \cdot p_j = \sum_{j \in B_k}p_j$ for each component $D_k$ in the disutility graph (sum of prices of chores in  $D_k$ equals the sum of total money of the agents in $D_k$). 
 \end{itemize}
   Let $P$ be the set of all normalized price vectors. We first show that the set $P$ is non-empty.
   
   \begin{observation}
   	\label{Pisnonempty}
   	   We have $P \neq \emptyset$.
   \end{observation}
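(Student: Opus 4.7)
The plan is to parametrize price vectors by per-component total prices and reduce the balance condition to a linear fixed-point equation on a $d$-dimensional simplex. Fix strictly positive intra-component weights, e.g.\ $q_j := 1/|B_{k(j)}|$, where $k(j) \in [d]$ denotes the (unique) component of $D$ containing chore $j$. For any $S = (S_1, \dots, S_d) \in \Delta_d := \{S \in \mathbb{R}^d_{\geq 0} : \sum_k S_k = 1\}$, set $p_j := S_{k(j)} \cdot q_j$. Then immediately $p_j \geq 0$, $\sum_{j \in [m]} p_j = \sum_k S_k = 1$, and $\sum_{j \in B_k} p_j = S_k$, so the first two defining conditions of $P$ hold automatically and the right-hand side of the third collapses to $S_k$.

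With this parametrization, the third condition becomes the linear system
\begin{equation*}
\sum_{k' \in [d]} M_{k, k'}\, S_{k'} \;=\; S_k \quad \text{for all } k \in [d], \qquad \text{where } M_{k, k'} := \sum_{i \in A_k} \sum_{j \in B_{k'}} w_{i, j}\, q_j .
\end{equation*}
The key observation is that $M$ is column-stochastic. Indeed, since the $A_k$'s partition $[n]$ (Condition~1) and $\sum_{i \in [n]} w_{i, j} = 1$ by the unit-supply normalization,
\begin{equation*}
\sum_{k \in [d]} M_{k, k'} \;=\; \sum_{j \in B_{k'}} q_j \sum_{k \in [d]} \sum_{i \in A_k} w_{i, j} \;=\; \sum_{j \in B_{k'}} q_j \cdot 1 \;=\; 1 .
\end{equation*}

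Consequently, the linear map $S \mapsto M S$ sends $\Delta_d$ continuously into itself, and Brouwer's fixed-point theorem (or, equivalently, Perron-Frobenius, since $M$ is a non-negative matrix of spectral radius $1$) yields some $S^* \in \Delta_d$ with $M S^* = S^*$. Setting $p^*_j := S^*_{k(j)}\, q_j$ then produces an element of $P$, establishing $P \neq \emptyset$.

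The main (mild) obstacle is simply spotting the right parametrization so that the third, genuinely non-trivial condition turns into a linear self-map of a simplex; once that is done, the column-stochastic calculation is immediate from $\sum_i w_{i, j} = 1$. Observe that this argument uses only Condition~1 (to partition $A$ and $B$ into $A_k$'s and $B_k$'s); Condition~2 (strong connectivity of $W$) is \emph{not} needed for the mere non-emptiness of $P$, and will instead be invoked later to rule out $S^*$ having a zero coordinate, i.e.\ to prevent an entire component $D_k$ from having all its chore prices equal to zero.
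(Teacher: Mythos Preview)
Your proof is correct and follows essentially the same approach as the paper: both parametrize prices by per-component totals, reduce the balance condition to a column-stochastic linear self-map on $\Delta_d$, and invoke Brouwer. The only cosmetic difference is that the paper concentrates each component's weight on a single representative chore (setting the remaining prices in that component to zero) rather than your uniform weights $q_j = 1/|B_{k(j)}|$, and it packages the Brouwer step as a standalone fact about matrices with non-negative off-diagonals and zero column sums; your observation that Condition~2 is not needed here is also correct.
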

   
   \begin{proof}
   	Here we will make use of a general fact that will be useful for a proof later as well.
   	
   	\begin{fact}
   		\label{stochasticfixedpoint}
   		Let $Z \in \mathbb{R}^{n \times n}$ be a square matrix such that $Z_{ij} \geq 0$ for all $j \neq i$ (all the non-diagonal entries of $Z$ are non-negative) and $\sum_{i \in [n]} Z_{ij} = 0$ for all $j \in [n]$ (column sums are zero), then there exists a vector $t \in \mathbb{R}^n_{\geq 0}$ such that $\sum_{i \in [n]} t_i =1$  and $Z \cdot t = 0$.
   	\end{fact}
    The proof of this fact can be found at the end of this section. Using this fact, we will outline a proof that $P$ is non-empty. For each component $D_k$ of the disutility matrix we pick a chore $b_k \in B_k$ and we set $p_j = 0$ for all $j \in B_k \setminus \left\{b_k \right\}$. Note that to show that $P$ is non-empty, it suffices to show that there exists a vector $p' = \langle p'_1,p'_2, \dots ,p'_d \rangle$ (intuitively each $p'_k$ corresponds to the price of chore $b_k \in B_k$, i.e., $p_{b_k}$) such that $p'_{k} \geq 0$ for all $k \in [d]$, $\sum_{k \in [d]} p'_{k} =1$ and we have,
    
    \begin{align}
     \label{Wmatrixeqns}
       \sum_{i \in A_{k}} \sum_{{k'} \in [d]} w_{i,b_{k'}} \cdot p'_{k'} - p'_{k}&=0  &\text{for all $k \in [d]$}
    \end{align}
    Let $W$ be the coefficient matrix of the system of equations in~\eqref{Wmatrixeqns}, i.e., $W \cdot p' = 0$ represents the system of equations in~\eqref{Wmatrixeqns}. Observe that $W_{kk'} = \sum_{i \in A_{k}}w_{i,b_{k'}}$ if $k \neq k'$ and $W_{kk} = \sum_{i \in A_{k}}w_{i,b_k} - 1$. Therefore the non-diagonal entries of $W$ are non-negative and also note that the column sum is zero:
    \begin{align*}
     \sum_{k \in [d]} M_{kk'} &= \sum_{k \in [d]} \sum_{i \in A_{k}} w_{i,b_{k'}} - 1\\
                               &=\sum_{i \in [n]} w_{i,b_{k'}} -1\\
                               &=1-1 &\text{(total endowment is one)}\\
                               &=0.
    \end{align*}
    Therefore $W$ satisfies all the conditions in Fact~\ref{stochasticfixedpoint}. Therefore, by Fact~\ref{stochasticfixedpoint} there exists a $p' \in \mathbb{R}^d_{\geq 0}$, such that $\sum_{k \in [d]} p'_d =1$ and $W \cdot p' = 0$. Therefore, $P$ is non-empty.    	
   \end{proof}
   
   Since $P$ is defined by a set of linear equalities and inequalities, $P$ is closed and convex too. Additionally, since $p \in \mathbb{R}^m_{\geq 0}$ and  $\sum_{j \in [m]} p_j =1$ for all $p \in P$, $P$ is compact. We now make a small observation about any $p \in P$ that will be useful later.
   
   \begin{observation}
   	\label{nozeroprice}
   	Consider any $p \in P$. For all $k \in [d]$, we have that the total price of chores in a component is strictly larger than zero: $\sum_{j \in B_k} p_j > 0$. 
   \end{observation}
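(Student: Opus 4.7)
The plan is to argue by contradiction, using the third normalization constraint together with Condition 2 (strong connectivity of the exchange graph $W$). Suppose some component $D_k$ has total chore price zero. Let
\[
S = \{k \in [d] : \sum_{j \in B_k} p_j = 0\}
\]
be the (assumed non-empty) set of ``dead'' components. Since $\sum_{j \in [m]} p_j = 1$, the complement $[d]\setminus S$ is also non-empty, so $S$ is a proper non-empty subset of $[d]$.

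The first key step is to translate zero component-price into zero endowment value. For any $k \in S$, the third defining constraint of $P$ yields
\[
\sum_{i \in A_k} \sum_{j \in [m]} w_{i,j}\, p_j \;=\; \sum_{j \in B_k} p_j \;=\; 0.
\]
Since every summand is non-negative, this forces $w_{i,j}\, p_j = 0$ for every $i \in A_k$ and every $j \in [m]$. In particular, whenever $p_j > 0$, no agent in $A_k$ owns any positive share of chore $j$.

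The second key step is to use strong connectivity of $W$ to produce a contradiction. Because $S$ and $[d]\setminus S$ are both non-empty and $W$ is strongly connected, there must exist a directed edge $(k, k') \in E_W$ with $k \in S$ and $k' \in [d]\setminus S$. By definition of $E_W$, this means for every chore $b \in B_{k'}$ some agent in $A_k$ has positive endowment of $b$. Now since $k' \notin S$, we have $\sum_{j \in B_{k'}} p_j > 0$, so some chore $j^* \in B_{k'}$ satisfies $p_{j^*} > 0$. The edge $(k,k')$ then furnishes an agent $i^* \in A_k$ with $w_{i^*, j^*} > 0$, contradicting the conclusion of the first step.

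I expect the proof to be entirely straightforward once the right set $S$ is identified; the only subtle point is noticing that Condition 2 must be used to cross from $S$ to its complement via some edge of $W$, which is exactly what strong connectivity guarantees (in fact, mere reachability of $[d]\setminus S$ from $S$ would suffice, but strong connectivity certainly implies it). No estimates are needed; the argument is purely combinatorial on top of non-negativity.
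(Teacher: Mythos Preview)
Your proof is correct and follows essentially the same approach as the paper: both argue by contradiction using the third normalization constraint to force $w_{i,j}p_j = 0$ for agents in a zero-price component, and then invoke Condition~2 to produce a chore with positive price and positive endowment from such an agent. The only cosmetic difference is that the paper propagates the zero-price property along outgoing edges of $W$ until every component is zero (contradicting $\sum_j p_j = 1$), whereas you define the set $S$ of zero-price components upfront and use strong connectivity to find a single edge crossing from $S$ to its complement; both arguments are equivalent and equally valid.
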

   \begin{proof}
   	We prove this by contradiction. Assume otherwise and let $\ell \in [d]$ be such that $\sum_{j \in B_{\ell}} p_j = 0$. Since $p \in P$, we have $\sum_{i \in A_{\ell}} \sum_{j \in [m]} w_{i,j} \cdot p_j = 0$ as well. Consider any $\ell' \in [d]$ such that $(\ell,\ell') \in E_W$ (recall that $E_W$ is the edge set of the exchange graph $W$). By Condition 2, we have that for each chore $b \in B_{\ell'}$, there is an agent $a \in A_{\ell}$ such that $w_{a,b} > 0$. Therefore, the only way $\sum_{i \in A_{\ell}} \sum_{ j \in [m]} w_{i,j} \cdot p_j =0$ is if all the prices of the chores in $B_{\ell'}$ is $0$. This implies that if the sum of prices of the chores in any component $D_{\ell}$ is zero, then the sum of prices of the chores in all the components $D_{\ell'}$ such that $(\ell,\ell') \in E_W$ is also zero. Since the exchange graph $W$ is strongly connected, a repeated application of this argument will imply that sum of prices of all chores is zero for every component, which is a contradiction as we have $\sum_{j \in [m]} p_j =1$.  
   \end{proof}

  An allocation $X \in \mathbb{R}_{\geq 0} ^{n \times m}$, is called a \emph{bounded allocation} if each $X_{ij}$ (quantifies the amount of chore $j$ allocated to agent $i$) is non-negative and is at most $m \cdot \tfrac{d_{\mathit{max}}}{d_{\mathit{min}}}$, where $d_{\mathit{max}}$ and $d_{\mathit{min}}$ refer to the largest and smallest entry less than $\tau$ in the disutility matrix. Let $\X$ be the set of all bounded allocations. Observe that the set $\X$ is non-empty, convex and compact. Also, we have that $P$ is non-empty, convex and compact.  We define a compact, convex and non-empty subset of $\mathbb{R}^{(m + nm)}$,  $S = \bigcup_{p \in P} \bigcup_{X \in \X} \langle p,X \rangle$ \footnote{We abuse notation slightly here: $\langle p, X \rangle$ refers to the $(m+nm)$-dimensional vector $\langle p_1,p_2, \dots, p_m, X_{11}, X_{12}, \dots ,X_{nm} \rangle $.}.

\paragraph{Correspondence $\phi$.} Our goal is to define a \emph{correspondence} or equivalently a \emph{set valued function} $\phi: S \rightarrow 2^S$, such that $\phi$ has at least one fixed point and any fixed point of $\phi$ will correspond to  competitive  equilibrium. We will first show some properties that if satisfied by $\phi$, then $\phi$ will have at least one fixed point and any fixed point of $\phi$ will correspond to a competitive  equilibrium. Then, we will define a $\phi$ that satisfies these properties.

\paragraph{Properties.} We first make some basic definitions that will help us to state the properties. We call a bounded allocation $Y \in \X$ an \emph{optimal allocation at the price vector  $p$} if and only if, 
\begin{itemize}
	\item $Y_{ij} > 0$ only if $d(i,j) < \tau$, $\tfrac{d(i,j)}{p_j} \leq \tfrac{d(i, \ell)}{p_{\ell}}$ for all $\ell \in [m]$, and 
	\item $\sum_{j \in [m]} Y_{ij} \cdot p_j = \sum_{j \in [m]} w_{i,j} \cdot p_j$ for all $i \in [n]$.
\end{itemize}
Let $\X^p\subseteq \X$ denote the set of all optimal allocations at the price vector $p$. Right now, it may not be immediately clear that $\X^p$ is non-empty. However, we show that this is indeed the case, as agents are allowed to consume goods to a significant extent ($Y_{ij}$ is allowed to be as large as $m \cdot \tfrac{d_{\mathit{max}}}{d_{\mathit{min}}}$).

\begin{lemma}
	\label{optimalbundlenotempty}
	For all $p \in P$, we have  $\X^p \subseteq \X$ and $\X^p \neq \emptyset$.
\end{lemma}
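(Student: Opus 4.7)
The plan is to prove the two inclusions separately, with the key leverage coming from Observation~\ref{nozeroprice} (which rules out zero-price components) and the MPB characterization of optimal allocations.

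First, for non-emptiness of $\X^p$, I would fix any $p \in P$ and construct an explicit optimal allocation. For an agent $i$ in component $A_k$, the only chores with $d(i,j) < \tau$ are those in $B_k$, so all consumption must come from $B_k$. By Observation~\ref{nozeroprice}, $\sum_{j \in B_k} p_j > 0$, so the set $\{j \in B_k : p_j > 0\}$ is non-empty, and hence the minimum-pain-per-buck value $\lambda_i := \min_{j \in B_k,\, p_j > 0} d(i,j)/p_j$ is attained at some chore $j^*(i)$. Setting $Y_{i,j^*(i)} = e_i/p_{j^*(i)}$ (where $e_i = \sum_j w_{i,j} p_j$) and $Y_{ij} = 0$ for $j \neq j^*(i)$ satisfies the MPB condition and the earning equation; doing this for every agent yields a candidate $Y$ which is an optimal allocation once we verify the bound (handled in the second step).

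Second, for $\X^p \subseteq \X$, I would show that every optimal allocation $Y$ automatically satisfies the bound $Y_{ij} \le m \cdot d_{\max}/d_{\min}$. The key two-step estimate is: (a) because $Y_{ij} > 0$ forces $d(i,j)/p_j = \lambda_i$, summing the earning equation gives $\sum_j Y_{ij} d(i,j) = \lambda_i e_i$, hence $Y_{ij} \le \lambda_i e_i / d_{\min}$; (b) we then bound $\lambda_i e_i$. For the agent $i \in A_k$, $\lambda_i \le d(i,\ell)/p_\ell \le d_{\max}/p_\ell$ for any $\ell \in B_k$ with $p_\ell > 0$, so choosing $\ell$ to attain $\max_{\ell \in B_k} p_\ell$ yields $\lambda_i \le d_{\max}/\max_{\ell \in B_k} p_\ell \le d_{\max} \cdot |B_k| / \sum_{\ell \in B_k} p_\ell$. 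The price-balance condition defining $P$ gives $\sum_{i' \in A_k} e_{i'} = \sum_{\ell \in B_k} p_\ell$, hence in particular $e_i \le \sum_{\ell \in B_k} p_\ell$. Multiplying these two bounds the denominator $\sum_{\ell \in B_k} p_\ell$ cancels, leaving $\lambda_i e_i \le d_{\max} \cdot |B_k| \le m \cdot d_{\max}$, so $Y_{ij} \le m \cdot d_{\max}/d_{\min}$ as required.

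The main subtlety, and what I would flag as the step needing the most care, is the cancellation argument in (b): one must simultaneously exploit the global price condition of $P$ (to upper-bound $e_i$ by the total price mass of the component) and Observation~\ref{nozeroprice} (to guarantee that mass is positive, so that $\lambda_i$ is finite and the ratio is well-defined). Without Condition~2, the zero-price obstruction discussed in Section~\ref{mainres2} would prevent both the MPB set from being non-empty and the quotient bound from being meaningful; it is exactly the structure imposed on $P$ via Conditions~1 and~2 that makes both parts of the lemma go through. Once the lemma is established, the allocation constructed in the first paragraph lies in $\X$, confirming $\X^p \neq \emptyset$.
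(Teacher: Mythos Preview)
Your argument is correct and follows essentially the same route as the paper. One minor expository point: in the paper, $\X^p$ is \emph{defined} as the set of bounded allocations $Y\in\X$ satisfying the MPB and earning conditions, so the containment $\X^p\subseteq\X$ holds by definition and needs no proof. What actually has content is verifying that your constructed single-chore allocation $Y$ lies in $\X$ (i.e., obeys the $m\cdot d_{\max}/d_{\min}$ bound), and your ``second step'' does exactly this---indeed it proves the stronger fact that \emph{every} allocation satisfying the MPB and earning conditions is automatically bounded. The paper instead bounds only the particular $Y$ it constructs, via the same pigeonhole-and-cancellation idea: from $\sum_{j\in B_k}p_j\ge e_i$ it extracts a chore $b\in B_k$ with $p_b\ge e_i/m$, then uses the MPB inequality $p_{b'}\ge (d_{\min}/d_{\max})\,p_b$ to conclude $Y_{ib'}=e_i/p_{b'}\le m\,d_{\max}/d_{\min}$. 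Your version routes the same estimate through $\lambda_i e_i$ and is a touch more general, but the underlying mechanism (component price mass bounds $e_i$; MPB compares prices within the component; the two combine and cancel) is identical.
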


\begin{proof}
	By definition $\X^p \subseteq \X$. Therefore, it suffices to show that it is non-empty. Consider any $p \in P$. Consider any agent $a \in A_k$ (recall that $A_k$ is the set of agents that belong to the component $D_k$ of the disutility graph). Let $\w(a) = \sum_{j \in [m]} w_{a,j} \cdot p_j$. If $\w(a) = 0$, then  we set $Y_{aj} = 0$ for all $j \in [m]$ and we trivially have $\sum_{j \in [m]} Y_{aj} \cdot p_j = \sum_{j \in [m]} w_{a,j} \cdot p_j = 0$ and $\langle Y_{a1}, \dots, Y_{am} \rangle$ is an optimal bundle for agent $a$ at $p$. So assume that $\w(a) > 0$. Since $p \in P$, we have that the sum of prices of the chores in $D_k$, $\sum_{j \in B_k}p_j = \sum_{i \in A_k} \sum_{j \in [m]} w_{ i,j} \cdot p_j \geq \sum_{j \in [m]} w_{ aj} \cdot p_j = \w(a)$. This implies that there is at least one chore $b \in B_k$ such that $p_b \geq \tfrac{\w(a)}{m}$. Let $b'$ be a chore such that $d(a,b') < \tau $,  $\tfrac{d(a,b')}{p_{b'}} \leq \tfrac{d(a, \ell)}{p_{\ell}}$ for all $\ell \in [m]$. This implies that $\tfrac{d(a,b')}{p_{b'}} \leq \tfrac{d(a,b)}{p_{b}}$. Therefore, we have that 
	\begin{align*}
	 p_{b'} &\geq \frac{d(a,b')}{d(a,b)} \cdot p_b \\
	        &\geq \frac{d_{\mathit{min}}}{d_{\mathit{max}}} \cdot p_b\\
	        &\geq  \frac{d_{\mathit{min}}}{md_{\mathit{max}}} \cdot \w(a).
	\end{align*}
	We set $Y_{ab'} = \tfrac{\w(a)}{p_{b'}}$. Observe that $Y_{ab'} \leq m \cdot \tfrac{d_{\mathit{max}}}{d_{\mathit{min}}}$. Therefore, $Y$ is a bounded allocation. Also, note that each agent $a$ earns her entire money of $\w(a)$ by doing  $\tfrac{\w(a)}{p_{b'}}$ amount of chore $b'$ such that $d(a,b') < \tau $, $\tfrac{d(a,b')}{p_{b'}} \leq \tfrac{d(a, \ell)}{p_{\ell}}$ for all $\ell \in [m]$. Thus, $Y$ is an optimal bundle also. Therefore, $\X^p \neq \emptyset$. 
\end{proof}

 We are now ready to define the properties of $\phi$. For any point $ \langle p,X \rangle \in S$, consider any point $\langle p', X' \rangle \in \phi(\langle p, X \rangle)$. Then,    

\begin{itemize}
	\item Property $\mathbf{P}_1$: $X' \in \X^p$ and $p' \in P$.
	\item Property $\mathbf{P}_2$: For any two agents $i$ and $j$ that belong to the same component of the disutility graph $D$ (say $i$, $j \in A_k$), and $p_j \neq 0$,  we have 
	           \begin{align*}\frac{p'_i}{p'_j} &= \frac{p_i + \mathit{max}(1 - \sum_{\ell \in [n]}X_{\ell i},0)}{p_j + \mathit{max}(1 - \sum_{\ell \in [n]}X_{\ell j},0)}. \end{align*}. 
	\item Property $\mathbf{P}_3$: $\phi(\langle p,X \rangle )$ is non-empty and convex.
	\item Property $\mathbf{P}_4$: $\phi$ has a \emph{closed graph}\footnote{A correspondence $\phi: X \rightarrow 2^Y$ has a \emph{closed graph} if for all sequences $\left\{x_n \right\}_{n \in \mathbb{N}}$ and $\left\{y_n\right\}_{n \in \mathbb{N}}$, with $x_n \in X$ and $y_n \in \phi(x_n)$ for all $n$, such that $x_n \rightarrow x$ and $y_n \rightarrow y$, we have $y \in \phi(x)$.}. 
\end{itemize}

We will now show that any correspondence $\phi$ that satisfies $\mathbf{P}_1$, $\mathbf{P}_2$, $\mathbf{P}_3$ and $\mathbf{P}_4$ will have at least one fixed point and any fixed point will correspond to   competitive equilibrium. We first show that $\phi$ has a fixed point.

\begin{lemma}
	\label{existsfixedpoint}
	Consider any correspondence $\phi$ that satisfies properties $\mathbf{P}_1$, $\mathbf{P}_2$, $\mathbf{P}_3$ and $\mathbf{P}_4$. $\phi$ has a fixed point.
\end{lemma}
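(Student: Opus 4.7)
The plan is to prove Lemma~\ref{existsfixedpoint} by a direct invocation of Kakutani's fixed point theorem, which is essentially what the four properties $\mathbf{P}_1$--$\mathbf{P}_4$ were engineered to enable. Recall that Kakutani's theorem requires four ingredients: (a) the domain is a non-empty, (b) compact, (c) convex subset of some Euclidean space, and (d) the correspondence is non-empty, convex-valued, and has a closed graph, while mapping the domain into itself. Each of these is already in hand.

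First I would verify that the domain $S = \bigcup_{p \in P} \bigcup_{X \in \X}\langle p, X\rangle \subseteq \mathbb{R}^{m+nm}$ satisfies (a)--(c). Non-emptiness of $S$ follows from Observation~\ref{Pisnonempty} (which gives $P \neq \emptyset$) together with the fact that $\X \neq \emptyset$ (the all-zeros allocation is bounded). Convexity of $S$ is immediate once we observe that $S$ is the Cartesian product $P \times \X$: $P$ is cut out by finitely many linear equalities and inequalities, and $\X$ is the finite-dimensional box $[0, m \cdot d_{\max}/d_{\min}]^{nm}$, so both factors are convex. Compactness also follows from this product description: $P$ sits inside the unit simplex in $\mathbb{R}^m$ and is closed by its linear defining constraints, and $\X$ is a closed and bounded box; hence $S$ is closed and bounded in $\mathbb{R}^{m+nm}$.

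Next I would check that $\phi$ genuinely maps $S$ into $2^S$. For any $\langle p, X\rangle \in S$ and any $\langle p', X'\rangle \in \phi(\langle p, X\rangle)$, property $\mathbf{P}_1$ gives $p' \in P$ and $X' \in \X^p$. By Lemma~\ref{optimalbundlenotempty}, $\X^p \subseteq \X$, so $\langle p', X'\rangle \in P \times \X = S$ as required. (Incidentally, Lemma~\ref{optimalbundlenotempty} combined with Observation~\ref{Pisnonempty} already shows that $\phi$ is defined on a non-trivial domain, so the statement is not vacuous.)

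Finally I would pull the remaining Kakutani hypotheses straight off of $\mathbf{P}_3$ and $\mathbf{P}_4$: property $\mathbf{P}_3$ says that $\phi(\langle p, X\rangle)$ is non-empty and convex for every $\langle p, X\rangle \in S$, and property $\mathbf{P}_4$ says $\phi$ has a closed graph. With all hypotheses of Kakutani's theorem verified, we conclude the existence of a point $\langle p^*, X^*\rangle \in S$ with $\langle p^*, X^*\rangle \in \phi(\langle p^*, X^*\rangle)$, which is the desired fixed point. The main technical obstacle in this whole program is not this lemma itself but rather the construction of a correspondence $\phi$ that actually satisfies $\mathbf{P}_1$--$\mathbf{P}_4$ simultaneously (in particular producing the price ratios mandated by $\mathbf{P}_2$ while keeping the image inside $P$); that construction is deferred to the subsequent lemmas in Section~\ref{sufficiency}.
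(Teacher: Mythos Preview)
Your proposal is correct and follows essentially the same approach as the paper: verify that $S$ is non-empty, compact, and convex, use $\mathbf{P}_1$ to ensure $\phi$ maps $S$ into $2^S$, and then invoke Kakutani's theorem via $\mathbf{P}_3$ and $\mathbf{P}_4$. If anything, you spell out the compactness and convexity of $S = P \times \X$ in more detail than the paper does.
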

\begin{proof}
	By property $\mathbf{P}_1$ we have that if $\langle p',X' \rangle \in \phi(\langle p, X \rangle )$, then $\langle p', X' \rangle \in S$ (as $p' \in P$ and $X' \in \X^p \subseteq \X$). Therefore, $\phi:S \rightarrow 2^S$. The set $S$ is non-empty, compact and convex. Furthermore, by properties $\mathbf{P}_3$ and $\mathbf{P}_4$, we have that $\phi(\langle p,X \rangle )$ is non-empty and convex, and $\phi$ has a {closed graph}. Therefore, by {Kakutani's fixed point theorem}, $\phi$ has a fixed point. 
\end{proof}

Now we show that any fixed point of a correspondence $\phi$ that satisfies properties $\mathbf{P}_1$, $\mathbf{P}_2$, $\mathbf{P}_3$ and $\mathbf{P}_4$ gives a competitive equilibrium.

\begin{lemma}
	\label{fixedpointisequilibrium}
	 Consider any correspondence $\phi$ that satisfies properties $\mathbf{P}_1$, $\mathbf{P}_2$, $\mathbf{P}_3$ and $\mathbf{P}_4$. Consider any fixed point $\langle p,X \rangle$ of $\phi$. Then $\langle p, X \rangle$ is a competitive equilibrium.
\end{lemma}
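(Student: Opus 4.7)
The plan is to use the fixed-point conditions directly. Let $\langle p, X\rangle$ be a fixed point of $\phi$, so $\langle p, X\rangle \in \phi(\langle p, X\rangle)$. Property $\mathbf{P}_1$ gives $p \in P$ and $X \in \X^p$. The latter immediately supplies the first two competitive-equilibrium conditions: the minimum-pain-per-buck bundling and the individual budget equality $\sum_{j}X_{ij}p_j = \sum_{j}w_{i,j}p_j$. The only remaining task is to establish market clearing, $\sigma_j := \sum_i X_{ij} = 1$ for every $j$.

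Since Condition~1 decomposes the disutility graph into complete bipartite components $D_k = (A_k\cup B_k, E_{D_k})$ and $X \in \X^p$ forces $X_{ij}=0$ whenever $d(i,j)\ge\tau$, no allocation crosses components: $X_{ij}=0$ for $i\in A_k$ and $j\notin B_k$, and vice versa. Hence market clearing can be verified in each component separately. Summing the individual budget equality over $i \in A_k$ and invoking the definition of $P$ yields
$$\sum_{j \in B_k} p_j\sigma_j \;=\; \sum_{i \in A_k}\sum_{j \in [m]}w_{i,j}p_j \;=\; \sum_{j\in B_k}p_j,$$
equivalently $\sum_{j\in B_k} p_j(\sigma_j - 1) = 0$.

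Next I would show that $\sigma_j \ge 1$ for every $j \in B_k$. Suppose for contradiction that $\sigma_{j^*} < 1$ for some $j^* \in B_k$. Observation~\ref{nozeroprice} furnishes some $j'\in B_k$ with $p_{j'} > 0$. Since $\langle p, X\rangle$ is a fixed point, property $\mathbf{P}_2$ applied to the pair $(j^*, j')$ gives
$$\frac{p_{j^*}}{p_{j'}} \;=\; \frac{p_{j^*} + (1-\sigma_{j^*})}{p_{j'} + \max(1-\sigma_{j'}, 0)}.$$
If $p_{j^*}=0$, the left-hand side vanishes while the right-hand side is strictly positive (the numerator is $1-\sigma_{j^*}>0$ and the denominator is positive and finite), a contradiction. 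If $p_{j^*}>0$, cross-multiplying simplifies to $p_{j^*}\cdot \max(1-\sigma_{j'},0) = p_{j'}(1-\sigma_{j^*}) > 0$, forcing $\sigma_{j'} < 1$. Applying this to every $j' \in B_k$ with $p_{j'}>0$ and noting that zero-price chores contribute nothing to either side of the budget identity, one gets $\sum_{j\in B_k}p_j\sigma_j < \sum_{j\in B_k}p_j$, contradicting the identity of the previous paragraph.

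With $\sigma_j \ge 1$ established, I would finally rule out zero prices. If $p_j = 0$ for some $j\in B_k$, then for every $i \in A_k$ the ratio $d(i,j)/p_j$ is infinite and hence strictly exceeds $d(i, j')/p_{j'}$ for the positive-price chore $j'\in B_k$ supplied by Observation~\ref{nozeroprice} (note $d(i,j')<\tau$ by completeness of $D_k$); optimality of $X$ therefore yields $X_{ij}=0$, and since agents outside $A_k$ already avoid $j$ we get $\sigma_j = 0 < 1$, a contradiction. Thus $p_j > 0$ for every $j \in B_k$, and then $\sum_{j\in B_k} p_j(\sigma_j - 1) = 0$ with non-negative terms forces $\sigma_j = 1$ for each $j \in B_k$, completing market clearing. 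The main obstacle, addressed in the last two paragraphs, is that $\mathbf{P}_2$ by itself only yields information about \emph{ratios} of prices, so chores with $p_j = 0$ must be ruled out by combining $\mathbf{P}_2$ with optimality of $X$ and Observation~\ref{nozeroprice}.
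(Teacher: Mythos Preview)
Your proof is correct and uses the same ingredients as the paper's proof---property $\mathbf{P}_2$ at the fixed point, optimality of $X$, the component budget identity coming from $p\in P$, and Observation~\ref{nozeroprice}---just arranged in a different order: you first establish $\sigma_j\ge 1$, then positivity of prices, then $\sigma_j=1$, whereas the paper first proves $p_j>0$ for all $j$ and then derives $\sigma_j=1$ directly. Both routes are equally short and rely on the same contradiction via $\mathbf{P}_2$.
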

\begin{proof}
	Consider any fixed point $\langle p,X \rangle  \in \phi(\langle p, X \rangle)$. By property $\mathbf{P}_1$ it follows that $X \in \X^p$.  Now, it suffices to show that we have $p_j >0$ for all $j \in [m]$ and  $\sum_{i \in [n]} X_{ij} =1$ for all chores $j \in [m]$. We first show that $p_j > 0$ for all $j \in [m]$. We prove this by contradiction. Let us assume that there is some chore $b$ in some component $D_{\ell}$ of the disutility graph such that $p_b = 0$. Note that by Observation~\ref{nozeroprice}, there is at least one chore $b' \in B_{\ell}$ such that $p_{b'} > 0$. This implies that for all agents $i \in A_{\ell}$,  we have $\tfrac{d(i,b')}{p_{b'}} < \tfrac{d(i,b)}{p_b}$. Therefore, we have that $X_{ib} = 0$ for all $i \in A_{\ell}$ and also for all $i \in [n]$ (as $X_{i\ell} > 0$ only if $d(i,\ell) < \tau$ and for all agents in $[n] \setminus A_k$ we have $d(i,b) \geq \tau$), implying $\sum_{\ell \in [n]} X_{ \ell b} = 0$. Now observe that, 
	\begin{align*}
	 \frac{p'_b}{p'_{b'}} &=\frac{p_b + \mathit{max}(1 - \sum_{\ell \in [n]}X_{\ell b},0)}{p_{b'} + \mathit{max}(1 - \sum_{\ell \in [n]}X_{\ell b'},0)} &\text{(by property $\mathbf{{P}_2}$)}\\
	                      &= \frac{0 + 1}{p_{b'} + \mathit{max}(1 - \sum_{\ell \in [n]}X_{\ell b'},0)}\\
	                      & \neq 0\\
	                      &= \frac{p_b}{p_{b'}}.
	\end{align*}	
	This implies that $p' \neq p$, which is a contradiction to $\langle p, X \rangle$ being a fixed point.
	
	Therefore, we have $p_j > 0$ for all $j \in [m]$. We now show that $\sum_{i \in [n]} X_{ij} =1$ for all $j \in [m]$. We prove this also by contradiction. So assume otherwise and for some chore $b \in B_k$ we have $\sum_{i \in [n]} X_{ib} > 1$ (or $\sum_{i \in [n]} X_{ib} < 1$). Note that, since $p \in P$,  for the component $D_k$ of the disutility graph, we have,
	\begin{align}
	  \label{eq1}
	  \sum_{j \in B_k}p_j &= \sum_{i \in A_k} \sum_{j \in [m]} w_{i,j} \cdot p_j.  
	\end{align}
	Also, since $X \in \X^p$,  for every agent $i \in A_k$, we have $\sum_{j \in [m]} w_{i,j} \cdot p_j = \sum_{j \in [m]} X_{ij} \cdot p_j = \sum_{j \in B_k} X_{ij} \cdot p_j$. Substituting $\sum_{j \in [m]} w_{i,j} \cdot p_j$ as $\sum_{j \in B_k} X_{ij} \cdot p_j$ in~\eqref{eq1} we have,
	\begin{align*}
	 \sum_{j \in B_k}p_j &= \sum_{i \in A_k} \sum_{j \in B_k} X_{ij} \cdot p_j\\       
	                     &= \sum_{i \in [n]} \sum_{j \in B_k} X_{ij} \cdot p_j\\
	                     &= \sum_{j \in B_k} p_j \cdot ( \sum_{i \in [n]} X_{ij}) \enspace .
	\end{align*}    
	Therefore, if  $\sum_{i \in [n]} X_{ib} >1$ (or $\sum_{i \in [n]} X_{ib} <1$) for some $b \in B_k$, then there exists a $b' \in B_k$ such that $\sum_{i \in [n]} X_{ib'} <1$ (or $\sum_{i \in [n]} X_{ib'} >1$). This would imply that $\frac{p_b + \mathit{max}(1 - \sum_{\ell \in [n]}X_{\ell b},0)}{p_{b'} + \mathit{max}(1 - \sum_{\ell \in [n]}X_{\ell b'},0)} < \tfrac{p_b}{p_{b'}} $ when $\sum_{i \in [n]} X_{ib} >1$  and $\frac{p_b + \mathit{max}(1 - \sum_{\ell \in [n]}X_{\ell b},0)}{p_{b'} + \mathit{max}(1 - \sum_{\ell \in [n]}X_{\ell b'},0)} > \tfrac{p_b}{p_{b'}} $ when $\sum_{i \in [n]} X_{ib} <1$, which is a contradiction (as $\frac{p_b + \mathit{max}(1 - \sum_{\ell \in [n]}X_{\ell b},0)}{p_{b'} + \mathit{max}(1 - \sum_{\ell \in [n]}X_{\ell b'},0)} = \tfrac{p_b}{p_{b'}} $ if $\langle p, X \rangle$ is a fixed point by property $\mathbf{P}_2$).
\end{proof}

Now, it suffices to show that there exists a correspondence $\phi$ that satisfies all the four properties to show the existence of   competitive equilibrium for every instance $I \in \mathcal{I}$. To this end, we first define a correspondence $\phi$ and show that it satisfies all the four properties.

\paragraph{Finding a Correspondence $\phi$ that Satisfies all the Properties.} Given a $p \in P$ and $X \in \X$, we define the vector $q = \langle q_1,q_2, \dots, q_m \rangle$ such that 
\begin{align}
\label{qdefinition}
 q_j = p_j + \mathit{max}(1 - \sum_{i \in [n]}X_{ij}, 0)\enspace .
\end{align}

We will now outline a system of linear equations that needs to be satisfied by a vector $\tilde{p} = \langle \tilde{p}_1, \tilde{p}_2, \dots, \tilde{p}_d \rangle$ (recall that $d$ is the number of components in the disutility graph $D$). So think of each $\tilde{p}_i$ as a variable now. As of now, let us think of each $\tilde{p}_k$ as the sum of prices of the chores in the component $D_k$ and $\tfrac{q_j}{Q_k} \cdot \tilde{p}_k$ as the price of each chore  $j \in B_k$, where $Q_k = \sum_{j \in B_k} q_j$. We make a small observation about $Q_k$, which will be useful later.
\begin{claim}
	\label{Q_kpositive}
	For all $k \in [d]$, we have $Q_k > 0$. 
\end{claim}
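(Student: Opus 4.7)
The plan is to invoke Observation~\ref{nozeroprice} directly. Recall that for any $p \in P$ and any component $D_k = (A_k \cup B_k, E_{D_k})$ of the disutility graph, Observation~\ref{nozeroprice} gives $\sum_{j \in B_k} p_j > 0$. Hence there exists at least one chore $j^\star \in B_k$ with $p_{j^\star} > 0$.

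Next, I would observe that by the definition of $q$ in \eqref{qdefinition}, for every chore $j$,
\[
q_j \;=\; p_j + \max\!\Bigl(1 - \sum_{i \in [n]} X_{ij},\, 0\Bigr) \;\geq\; p_j \;\geq\; 0,
\]
since the second term is non-negative. Applying this with $j = j^\star$ yields $q_{j^\star} \geq p_{j^\star} > 0$, and therefore
\[
Q_k \;=\; \sum_{j \in B_k} q_j \;\geq\; q_{j^\star} \;>\; 0,
\]
which is the desired conclusion. Note that this argument crucially uses Observation~\ref{nozeroprice} (and hence Condition~2, strong connectedness of the exchange graph $W$): without it, one could have a component $D_k$ with $p_j = 0$ for all $j \in B_k$ and simultaneously $\sum_i X_{ij} \geq 1$ for all $j \in B_k$, which would make $Q_k = 0$.

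Since the claim follows in a few lines from machinery already established, I do not anticipate a genuine obstacle here; the only subtlety is making sure one appeals to Observation~\ref{nozeroprice} rather than trying to argue positivity of $Q_k$ from $X$ alone, which is not possible in general.
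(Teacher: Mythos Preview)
Your proof is correct and follows essentially the same approach as the paper: both use that $q_j \geq p_j$ for every $j$, then appeal to Observation~\ref{nozeroprice} to conclude $Q_k \geq \sum_{j \in B_k} p_j > 0$. The paper just sums directly rather than singling out a particular $j^\star$, but this is a purely stylistic difference.
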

\begin{proof}
  Each $q_j$ is at least as large as $p_j$, implying $Q_k = \sum_{j \in B_k} q_j \geq \sum_{j \in B_k} p_j$ and by Observation~\ref{nozeroprice} we have $\sum_{j \in B_k} p_j > 0$.	
\end{proof}
 With these prices ($\tilde{p}_1, \tilde{p}_2, \dots , \tilde{p}_d$) in mind, for each component $D_k$ of $D$, we write the  equation (variables being $\bigcup_{k \in [d]} \tilde{p}_k$) that represents the total money the agents earn in the component equals the total prices of the chores in the same component.
\begin{align}
\label{eq2}
 \sum_{i \in A_k} \sum_{k' \in [d]} \sum_{j \in B_{k'}} w_{i,j} \cdot \frac{q_j}{Q_{k'}} \cdot \tilde{p}_{k'} -  \tilde{p}_{k} &=0  \enspace .
\end{align} 

We represent the system of equations in~\eqref{eq2} as 
\begin{align}
\label{Mdefinition}
 M \cdot \tilde{p} = \mathbf{0} \enspace . 
\end{align}

First observe that every entry of the matrix $M$ is bounded: This is primarily due to the fact that $Q_k > 0$ for all $k \in [d]$. 

\begin{observation}
	\label{Misspecial}
	  We have $M_{kk'} \geq 0$ as long as $k \neq k'$ (every non-diagonal entry of $M$ is non-zero) and $\sum_{k \in [d]} M_{kk'} = 0$ for all $k' \in [d]$ (column sums are zero).
\end{observation}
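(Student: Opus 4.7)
The plan is to read off the coefficients of the matrix $M$ directly from the linear system~\eqref{eq2} and then verify both claims by straightforward algebraic manipulation; no fixed-point or topological reasoning is required here.

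First I would identify the entries of $M$. Since the $k$-th row of~\eqref{eq2} is linear in the variables $\tilde{p}_1,\dots,\tilde{p}_d$, the coefficient of $\tilde{p}_{k'}$ in that row is
\begin{align*}
M_{kk'} \;=\; \sum_{i \in A_k}\sum_{j \in B_{k'}} w_{i,j}\,\frac{q_j}{Q_{k'}} \qquad (k \ne k'),
\qquad
M_{kk} \;=\; \sum_{i \in A_k}\sum_{j \in B_{k}} w_{i,j}\,\frac{q_j}{Q_{k}} - 1.
\end{align*}
For the off-diagonal claim, observe that every $w_{i,j}\ge 0$ by definition of an endowment, every $q_j \ge 0$ by~\eqref{qdefinition}, and $Q_{k'} > 0$ by Claim~\ref{Q_kpositive}. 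Hence each off-diagonal entry is a sum of non-negative terms and is therefore non-negative.

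Next I would verify the column-sum identity. Fix $k' \in [d]$ and compute
\begin{align*}
\sum_{k \in [d]} M_{kk'}
&= \Bigl(\sum_{k \in [d]} \sum_{i \in A_k} \sum_{j \in B_{k'}} w_{i,j}\,\frac{q_j}{Q_{k'}}\Bigr) - 1,
\end{align*}
where the $-1$ comes from the diagonal term $M_{k'k'}$. Since $\{A_k\}_{k \in [d]}$ partitions $[n]$, the inner double sum over $k$ and $i \in A_k$ collapses to a single sum over $i \in [n]$. Then, interchanging the order of summation and using the normalization $\sum_{i \in [n]} w_{i,j} = 1$ for each $j$, the expression becomes
\begin{align*}
\sum_{j \in B_{k'}} \frac{q_j}{Q_{k'}} \cdot \sum_{i \in [n]} w_{i,j} \;-\; 1
\;=\; \sum_{j \in B_{k'}} \frac{q_j}{Q_{k'}} \;-\; 1
\;=\; \frac{Q_{k'}}{Q_{k'}} - 1 \;=\; 0,
\end{align*}
by the definition $Q_{k'} = \sum_{j \in B_{k'}} q_j$. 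This proves the column-sum claim.

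There is no real obstacle here; the only subtlety is the implicit use of Claim~\ref{Q_kpositive} to guarantee that the entries of $M$ are well-defined (i.e.\ no division by zero), which in turn relies on Observation~\ref{nozeroprice} and hence on Condition~2. The observation itself is then a direct computation, and its purpose is downstream: it is exactly the structural property that will let us invoke Fact~\ref{stochasticfixedpoint} to produce a non-negative vector in the nullspace of $M$, thereby defining the price-update part of the correspondence $\phi$.
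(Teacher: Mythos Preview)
Your proposal is correct and essentially identical to the paper's own proof: both read off the entries of $M$ from~\eqref{eq2}, observe that the off-diagonal terms are non-negative because $w_{i,j}\ge 0$, $q_j\ge 0$, and $Q_{k'}>0$, and then compute the column sum by collapsing $\sum_{k\in[d]}\sum_{i\in A_k}$ to $\sum_{i\in[n]}$, applying the normalization $\sum_{i\in[n]} w_{i,j}=1$, and using $\sum_{j\in B_{k'}} q_j = Q_{k'}$. Your explicit remark that Claim~\ref{Q_kpositive} is what makes the entries well-defined is a nice addition that the paper leaves implicit.
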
 
\begin{proof}
	We first carefully look at any column $M_{*k'}$ of $M$.  Note that for all $k \neq k'$, we have, $M_{kk'} = \sum_{i \in A_{k}} \sum_{j \in B_{k'}} w_{i,j} \cdot \tfrac{q_j}{Q_{k'}}$. We have $ M_{kk} = \sum_{i \in A_{k}} \sum_{j \in B_{k}} w_{i,j} \cdot \tfrac{q_j}{Q_k}- 1$. Therefore, every non-diagonal entry in $M$ is non-negative. Now we just need to show that $\mathbf{1}^T \cdot M_{*k'} = 0$.   Observe,
	\begin{align*}
	 \mathbf{1}^T \cdot M_{*k'} &= \sum_{k \in [d]} \sum_{i \in A_{k}} \sum_{j \in B_{k'}} w_{i,j} \cdot \tfrac{q_j}{Q_{k'}} - 1\\
	                           &= \tfrac{1}{Q_{k'}} \cdot \sum_{j \in B_{k'}} q_j \cdot \sum_{k \in [d]} \sum_{i \in A_{k}} w_{i,j} - 1\\
	                           &= \tfrac{1}{Q_{k'}} \cdot \sum_{j \in B_{k'}} q_j \cdot \sum_{i \in [n]} w_{i,j} - 1\\
	                           &=\tfrac{1}{Q_{k'}} \cdot \sum_{j \in B_{k'}} q_j -1\\
	                           &=0.
	\end{align*} 
	This shows that $\mathbf{1}^T \cdot M = \mathbf{0}^T$. 
\end{proof}

We first make some observations about the solution to the system of equations in~\eqref{Mdefinition} (and consequently~\eqref{eq2}). Observe that $M$ satisfies all the conditions in Fact~\ref{stochasticfixedpoint}. Therefore, we have 

\begin{observation}
	\label{nonnegativep}
	There exists a solution to the system of equations in~\eqref{Mdefinition} such that $\tilde{p}_j \geq 0$ for all $j \in [d]$ and $\sum_{j \in [d]} \tilde{p}_j =1$. 
\end{observation}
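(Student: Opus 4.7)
The plan is to obtain Observation~\ref{nonnegativep} as an immediate corollary of Fact~\ref{stochasticfixedpoint} applied to the matrix $M$ defined in~\eqref{Mdefinition}. Observation~\ref{Misspecial} has already verified that $M$ satisfies both hypotheses of the fact: the off-diagonal entries $M_{kk'}$ for $k \neq k'$ are non-negative, and each column sums to zero. Hence the fact directly yields a vector $\tilde{p} \in \mathbb{R}^d_{\geq 0}$ with $\sum_{j \in [d]} \tilde{p}_j = 1$ and $M \tilde{p} = \mathbf{0}$, which is precisely the statement we need. The only real work, therefore, is to establish Fact~\ref{stochasticfixedpoint} itself.

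For that, I would reduce to an application of Brouwer's fixed-point theorem on the standard simplex $\Delta_d = \{t \in \mathbb{R}^d_{\geq 0} : \sum_{i} t_i = 1\}$. First observe that because the column sums of $Z$ vanish while the off-diagonal entries are non-negative, each diagonal entry satisfies $Z_{ii} = -\sum_{j \neq i} Z_{ji} \leq 0$. Now pick any $\lambda > 0$ with $\lambda \geq \max_i (-Z_{ii})$, for concreteness $\lambda = \max_i(-Z_{ii}) + 1$, and set $Z' = Z + \lambda I$. Every entry of $Z'$ is then non-negative, while each column of $Z'$ sums to $\lambda$, so $Z'/\lambda$ is column-stochastic.

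Next I would introduce the map $f: \Delta_d \to \mathbb{R}^d$ defined by $f(t) = (Z'/\lambda) t$ and verify that $f(\Delta_d) \subseteq \Delta_d$: non-negativity of $f(t)$ is clear, and the short computation $\sum_i [f(t)]_i = (1/\lambda) \sum_j t_j \sum_i Z'_{ij} = \sum_j t_j = 1$ confirms that $f$ lands in $\Delta_d$. Since $f$ is continuous and $\Delta_d$ is non-empty, convex, and compact, Brouwer's theorem produces a fixed point $t^* \in \Delta_d$. This fixed point satisfies $(Z + \lambda I) t^* = \lambda t^*$, which simplifies to $Z t^* = 0$, giving the required vector.

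The main obstacle, such as it is, is choosing the shift correctly so that $Z'/\lambda$ is both well-defined and column-stochastic; once that is arranged the remainder is a routine Brouwer argument, and no additional subtleties arise in passing from Fact~\ref{stochasticfixedpoint} back to Observation~\ref{nonnegativep}, since Observation~\ref{Misspecial} was set up with exactly the hypotheses of the fact in mind.
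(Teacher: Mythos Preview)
Your proposal is correct and follows essentially the same approach as the paper: Observation~\ref{nonnegativep} is reduced to Fact~\ref{stochasticfixedpoint} via Observation~\ref{Misspecial}, and the fact itself is proved by shifting $Z$ to obtain a column-stochastic matrix and applying Brouwer's fixed-point theorem on the simplex. Your choice of shift $\lambda \geq \max_i(-Z_{ii})$ is slightly cleaner than the paper's $\lambda \gg \max_{i,j} Z_{ij}$, but the resulting matrix $\tfrac{1}{\lambda}Z + I$ and the Brouwer argument are identical.
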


We are now ready to define the correspondence. Given any $\langle p,X \rangle  \in S$, we determine the vector $q$ as in~\eqref{qdefinition}. Observe that $q$ is a function of $p$ and $X$. Hence, from now on, for clarity we would write the vector $q$, the scalars $Q_k$ for all $k \in [d]$  as $q(p,X)$ and  $Q_k(p,X)$ respectively. Let $\tilde{P}(p,X) \subseteq 2^{R^{d}_{\geq 0}}$ be the set of all vectors that satisfy the conditions in Observation~\ref{nonnegativep}. For each $\tilde{p} \in \tilde{P}(p,X)$, we define the vector $\overline{p}(\tilde{p}) = \langle \overline{p}_1 (\tilde{p}), \dots, \overline{p}_m (\tilde{p})\rangle $ as follows: For every chore $j$ that lies in the component $D_k$ of the disutility graph $D$, we define
\begin{align*}
 \overline{p}_j(\tilde{p}) = \frac{q_j(p,X)}{Q_k(p,X)} \cdot \tilde{p}_k \enspace .
\end{align*}

Let $\overline{P}(p,X) = \left\{\overline{p}(\tilde{p}) \mid \tilde{p} \in \tilde{P}(p,X)  \right\}$. Given any $\langle p,X \rangle \in S$, we define,
\begin{align*}
 \phi(\langle p,X \rangle ) = \left\{ \langle \overline{p}, X' \rangle  \mid \overline{p} \in \overline{P}(p,X) \text{ and } X' \in \X^p \right\} \enspace .
\end{align*}

For the rest of this section, we will now show that $\phi$ satisfies properties $\mathbf{P}_1$, $\mathbf{P}_2$, $\mathbf{P}_3$ and $\mathbf{P}_4$.

\paragraph{$\phi$ satisfies properties $\mathbf{P}_1$, $\mathbf{P}_2$, $\mathbf{P}_3$ and $\mathbf{P}_4$.}
\begin{lemma}
	\label{P1}
	$\phi$ satisfies property $\mathbf{P}_1$.
 \end{lemma}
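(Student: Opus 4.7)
The plan is to unpack the definition of $\phi$ and verify the two requirements of $\mathbf{P}_1$ separately. By construction $\phi(\langle p, X\rangle)$ consists of pairs $\langle \overline{p}, X'\rangle$ with $X' \in \X^p$ and $\overline{p} = \overline{p}(\tilde{p})$ for some $\tilde{p} \in \tilde{P}(p,X)$. The first half ($X' \in \X^p$) is immediate from the definition. The entire work is to check that $\overline{p} \in P$, i.e.\ that $\overline{p}$ satisfies the three defining conditions of a normalized price vector: non-negativity, summing to $1$, and the per-component money-balance equation. None of these steps are deep; the main task is to verify that the linear system \eqref{Mdefinition} encoding membership in $\tilde{P}(p,X)$ translates exactly into the per-component balance condition defining $P$.

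For non-negativity I would observe that for any chore $j \in B_k$, $\overline{p}_j = \tfrac{q_j(p,X)}{Q_k(p,X)}\,\tilde{p}_k$, where $q_j \geq 0$ by \eqref{qdefinition}, $Q_k > 0$ by Claim~\ref{Q_kpositive}, and $\tilde{p}_k \geq 0$ by Observation~\ref{nonnegativep}. For normalization I would compute
\[
\sum_{j \in [m]} \overline{p}_j \;=\; \sum_{k \in [d]} \sum_{j \in B_k} \frac{q_j}{Q_k}\,\tilde{p}_k \;=\; \sum_{k \in [d]} \tilde{p}_k \cdot \frac{\sum_{j \in B_k} q_j}{Q_k} \;=\; \sum_{k \in [d]} \tilde{p}_k \;=\; 1,
\]
using the defining identity $Q_k = \sum_{j \in B_k} q_j$ and Observation~\ref{nonnegativep}.

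The slightly more delicate step is the per-component balance. Here I would read off the $k$-th equation of $M\tilde{p} = \mathbf{0}$ directly from \eqref{eq2}, namely
\[
\sum_{i \in A_k} \sum_{k' \in [d]} \sum_{j \in B_{k'}} w_{i,j}\cdot\frac{q_j}{Q_{k'}}\cdot \tilde{p}_{k'} \;=\; \tilde{p}_k,
\]
and substitute $\overline{p}_j = \tfrac{q_j}{Q_{k'}}\tilde{p}_{k'}$ for $j \in B_{k'}$, collapsing the double sum over $(k',j)$ into a single sum over $j \in [m]$. The left-hand side becomes $\sum_{i \in A_k}\sum_{j \in [m]} w_{i,j}\cdot \overline{p}_j$. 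On the other side, $\sum_{j \in B_k} \overline{p}_j = \tilde{p}_k\cdot \tfrac{Q_k}{Q_k} = \tilde{p}_k$ as in the normalization step. Equating these yields exactly the third defining condition of $P$. Combined with the previous two checks, this gives $\overline{p} \in P$, completing the verification of $\mathbf{P}_1$. The only non-routine ingredient is that Claim~\ref{Q_kpositive} (guaranteeing $Q_k > 0$) is already secured via Observation~\ref{nozeroprice}, so no additional structural argument is needed here.
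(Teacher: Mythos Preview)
Your proposal is correct and follows essentially the same approach as the paper: $X' \in \X^p$ is immediate from the definition of $\phi$, and $\overline{p} \in P$ is verified via the same three checks (non-negativity, summing to $1$, and per-component balance), with the balance condition following directly from the fact that $\tilde{p}$ satisfies \eqref{eq2}.
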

\begin{proof}
	We need to show that $p' \in P$ and $X' \in \X^p \subseteq \X$. Note that by the definition of $\phi$ we have $X' \in \X^p \subseteq \X$. Therefore, we only need to show that $p' \in P$. Given $p$ and $X$ let $q(p,X)$ be the vector obtained as in~\eqref{qdefinition} and let $Q_k(p,X) = \sum_{j \in B_k} q_j(p,X)$ be the scalars for all $k \in [d]$. Then $p' = \overline{p}(\tilde{p})$ for some $\tilde{p} \in \tilde{P}(p,X)$. Now we make three claims which show that $p' \in P$. 
	\begin{claim}
		\label{tc1}
		We have $p'_j \geq 0$ for all $j \in [m]$.
	\end{claim}
	\begin{proof}
		Let us consider any chore $j$ that belongs to the component $D_k$ in the disutility matrix.
		\begin{align*}
		    p'_j &= \overline{p}_j(\tilde{p})\\
		         &= \frac{q_j(p,X)}{Q_k(p,X)} \cdot \tilde{p}_k\enspace .	 	
		\end{align*}
		Observe from the definition of $q(p,X)$ in~\eqref{qdefinition} that $q_{\ell} \geq 0$ for all $\ell \in [m]$. Also, from the definition of the vector $\tilde{p}$, we have $\tilde{p}_{\ell} \geq 0$ for all $\ell \in [d]$. This implies that $p'_j \geq 0$.
	\end{proof}
	
	\begin{claim}
		\label{tc2}
		We have $\sum_{j \in [m]} p'_j =1$.
	\end{claim}
	\begin{proof}
		We have $\sum_{j \in [m]} p'_j = \sum_{j \in [m]} \overline{p}_j(\tilde{p}) = \sum_{k \in [d]}\sum_{j \in B_k} \tfrac{q_j(p,X)}{Q_k(p,X)} \cdot \tilde{p}_k= \sum_{k \in [d]} \tilde{p}_k=1$.
	\end{proof}
	
	\begin{claim}\label{tc3}
		For each component $D_k$ of the disutility graph, we have $\sum_{i \in A_k} \sum_{j \in [m]} w_{i,j} \cdot p'_j = \sum_{j \in B_k} p'_j$.
	\end{claim}
	\begin{proof}
		We have,
		\begin{align*}
		 \sum_{i \in A_k} \sum_{j \in [m]} w_{i,j} \cdot p'_j &= \sum_{i \in A_k} \sum_{j \in [m]} w_{i,j} \cdot \overline{p}_j(\tilde{p})\\           
		                                                     &= \sum_{i \in A_k} \sum_{k' \in [d]} \sum_{j \in B_{k'}} w_{i,j} \cdot \frac{q_j(p,X)}{Q_{k'}(p,X)} \cdot \tilde{p}_{k'}\enspace .
		\end{align*}
		Since $\tilde{p}$ satisfies~\eqref{Mdefinition} and therefore also~\eqref{eq2}, we have $\sum_{i \in A_k} \sum_{k' \in [d]} \sum_{j \in B_{k'}} w_{i,j} \cdot \frac{q_j(p,X)}{Q_{k'}(p,X)} \cdot \tilde{p}_{k'} = \tilde{p}_{k}$. Therefore, we have
		\begin{align*}
		 \sum_{i \in A_k} \sum_{j \in [m]} w_{i,j} \cdot p'_j &=\tilde{p}_k\\
		                                                      &=\sum_{j \in B_k} \frac{q_j(p,X)}{Q_k(p,X)} \cdot \tilde{p}_{k}\\
		                                                      &=\sum_{j \in B_k} \overline{p}_j(\tilde{p})\\
		                                                      &=\sum_{j \in B_k} p'_j. \qedhere
		\end{align*}
	\end{proof}
	This shows that $p' \in P$ and completes the proof. 
\end{proof}

\begin{lemma}
	\label{P2}
	$\phi$ satisfies property $\mathbf{P}_2$.
\end{lemma}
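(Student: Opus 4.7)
The plan is to verify Property $\mathbf{P}_2$ by a direct unfolding of how $\phi$ is constructed; there is no hidden content beyond tracking definitions, so I do not expect any real obstacle. Concretely, given any $\langle p', X' \rangle \in \phi(\langle p,X\rangle)$, the definition of $\phi$ forces $p' = \overline{p}(\tilde{p})$ for some $\tilde{p} \in \tilde{P}(p,X)$, and by construction of $\overline{p}$ we have, for every chore $j$ lying in component $B_k$,
\begin{equation*}
p'_j \;=\; \overline{p}_j(\tilde{p}) \;=\; \frac{q_j(p,X)}{Q_k(p,X)}\cdot \tilde{p}_k.
\end{equation*}

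I would then take any two chores $i,j$ lying in the same component $B_k$ (this is what the statement must mean, since the subscripts of $p$ and of $X_{\ell\cdot}$ range over chores), and compute the ratio. Both the per-component normalizing scalar $Q_k(p,X)$ and the component-mass variable $\tilde{p}_k$ are common to the numerator and denominator, so they cancel cleanly, leaving
\begin{equation*}
\frac{p'_i}{p'_j} \;=\; \frac{q_i(p,X)}{q_j(p,X)}.
\end{equation*}
Unfolding the definition of $q$ from equation~(\ref{qdefinition}) then gives exactly the ratio in Property $\mathbf{P}_2$. The one detail worth recording is that the ratio is well-defined under the stated hypothesis $p_j \neq 0$: because $q_j(p,X) \geq p_j > 0$, the denominator is strictly positive, and moreover $Q_k(p,X) > 0$ by Claim~\ref{Q_kpositive}, so the cancellation above is legitimate. (The scalar $\tilde{p}_k$ need not be positive, but it cancels symbolically and the resulting identity of ratios holds in the limiting sense; alternatively one can multiply through and state the property as $p'_i\, q_j(p,X) = p'_j\, q_i(p,X)$ to sidestep the issue entirely.) This completes the plan for Property $\mathbf{P}_2$.
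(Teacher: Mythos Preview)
Your proof is correct and takes essentially the same approach as the paper: unfold $p'_j = \overline{p}_j(\tilde{p}) = \tfrac{q_j(p,X)}{Q_k(p,X)}\,\tilde{p}_k$, cancel the common factors $Q_k(p,X)$ and $\tilde{p}_k$, and read off $q_i/q_j$. The one small difference is that the paper's proof takes the nondegeneracy hypothesis as $p'_j \neq 0$ (rather than $p_j \neq 0$), from which $\tilde{p}_k \neq 0$ follows immediately and the cancellation is clean---this avoids your hedge about ``symbolic'' cancellation when $\tilde{p}_k$ could vanish.
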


\begin{proof}
	Consider any $\langle p',X' \rangle \in \phi(\langle p,X \rangle)$. Let $i,j$ be two chores in the component $D_k$ of the disutility graph such that $p'_j \neq 0$. Since $p'_j = \frac{q_j(p,X)}{Q_k(p,X)} \cdot \tilde{p}_{k} \neq 0$, we have $\tilde{p}_k \neq 0$. Now observe that,
	\begin{align*}
	 \frac{p'_i}{p'_j} &=\frac{\overline{p}_i(\tilde{p})}{\overline{p}_j(\tilde{p})}\\
	                   &=\frac{\frac{q_i(p,X)}{Q_k(p,X)} \cdot \tilde{p}_{k}}  {\frac{q_j(p,X)}{Q_k(p,X)} \cdot \tilde{p}_{k}}\\
	                   &=\frac{q_i(p,X)}{q_j(p,X)} &(\text{as $Q_k(p,X) \neq 0$ and $\tilde{p}_k \neq 0$})\\
	                   &=\frac{p_i + \mathit{max}(1 - \sum_{\ell \in [n]}X_{\ell i},0)}{p_j + \mathit{max}(1 - \sum_{\ell \in [n]}X_{\ell j},0)}\enspace . &(\text{by definition of $q(p,X)$ in~\eqref{qdefinition}})
	\end{align*} 
\end{proof}

\begin{lemma}
	\label{P3}
	$\phi$ satisfies property $\mathbf{P}_3$.
\end{lemma}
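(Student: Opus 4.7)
To establish property $\mathbf{P}_3$, I need to show that for any $\langle p,X\rangle\in S$ the set $\phi(\langle p,X\rangle)=\{\langle\overline{p},X'\rangle : \overline{p}\in\overline{P}(p,X),\ X'\in\X^p\}$ is both non-empty and convex. Since the set is a Cartesian product of $\overline{P}(p,X)$ and $\X^p$, I will treat the two factors separately for each of the two claims.

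\textbf{Non-emptiness.} The non-emptiness of $\X^p$ is already supplied by Lemma~\ref{optimalbundlenotempty}, so it suffices to show $\overline{P}(p,X)\neq\emptyset$. For this I will verify that $\tilde{P}(p,X)\neq\emptyset$ and then use the fact that the map $\tilde{p}\mapsto\overline{p}(\tilde{p})$ is well defined (here I will invoke Claim~\ref{Q_kpositive} to guarantee that the denominators $Q_k(p,X)$ are strictly positive). The non-emptiness of $\tilde{P}(p,X)$ is precisely the content of Observation~\ref{nonnegativep}, whose hypothesis is that the matrix $M$ from \eqref{Mdefinition} has non-negative off-diagonal entries and zero column sums. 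These two properties are exactly what is established in Observation~\ref{Misspecial}, so Fact~\ref{stochasticfixedpoint} applies and yields a vector $\tilde{p}\in\mathbb{R}^d_{\geq 0}$ with $\sum_k\tilde{p}_k=1$ and $M\cdot\tilde{p}=0$. The induced $\overline{p}(\tilde{p})$ then lies in $\overline{P}(p,X)$.

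\textbf{Convexity.} I will show that both $\overline{P}(p,X)$ and $\X^p$ are convex, from which convexity of the product follows. For $\X^p$, let $X^1,X^2\in\X^p$ and set $Y=\lambda X^1+(1-\lambda)X^2$ for $\lambda\in[0,1]$. Each of the defining conditions of $\X^p$ is preserved under convex combinations: the support condition ``$Y_{ij}>0$ only if $d(i,j)<\tau$ and $d(i,j)/p_j\leq d(i,\ell)/p_\ell$ for all $\ell$'' is preserved because the set of chores satisfying this minimum-pain-per-buck condition depends only on $p$ (not on the allocation), so the support of $Y$ is contained in the union (hence the common permissible support) of the supports of $X^1$ and $X^2$; the budget identity $\sum_j Y_{ij}p_j=\sum_j w_{i,j}p_j$ is linear in the allocation and therefore preserved; the boundedness $Y_{ij}\le m\cdot d_{\max}/d_{\min}$ is clearly preserved. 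For $\overline{P}(p,X)$, note that the map $\tilde{p}\mapsto\overline{p}(\tilde{p})$ is a linear map on $\mathbb{R}^d$ (with coefficients $q_j(p,X)/Q_k(p,X)$ that depend only on $p,X$), so it suffices to show that $\tilde{P}(p,X)$ is convex. But $\tilde{P}(p,X)$ is cut out by the linear system $M\tilde{p}=0$ together with the affine constraints $\tilde{p}\geq 0$ and $\sum_k\tilde{p}_k=1$, and any set defined by linear equalities and inequalities is convex.

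\textbf{Main obstacle.} Most of the verification is routine; the only delicate point is making sure the support condition in the definition of $\X^p$ is truly a condition on $(i,j)$ alone (determined by $p$), not a condition that could fail for a convex combination even when it holds for each summand. I will spell this out explicitly: the allowable support $S_i^p=\{j:d(i,j)<\tau\text{ and }d(i,j)/p_j\leq d(i,\ell)/p_\ell\ \forall\ell\}$ is a fixed set given $p$, and if $X^1_{ij}=X^2_{ij}=0$ for every $j\notin S_i^p$, then the same vanishing holds for any convex combination. With that observation, the convexity arguments close cleanly and Lemma~\ref{P3} follows.
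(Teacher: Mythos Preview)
Your proposal is correct and follows essentially the same approach as the paper: you split $\phi(\langle p,X\rangle)$ as the Cartesian product $\overline{P}(p,X)\times\X^p$, invoke Lemma~\ref{optimalbundlenotempty} and Observation~\ref{nonnegativep} for non-emptiness of the two factors, and verify convexity of each factor directly (using linearity of the budget constraint for $\X^p$ and linearity of $\tilde{p}\mapsto\overline{p}(\tilde{p})$ together with the polyhedral description of $\tilde{P}(p,X)$ for the price side). Your explicit discussion of why the support condition in $\X^p$ is preserved under convex combinations is a nice clarification that the paper leaves somewhat implicit.
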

	
\begin{proof}
	We first show that $\X^p$ is convex. Consider $Y \in \X^p$ and $Y' \in \X^p$. Let $Y'' = \lambda \cdot Y + (1- \lambda) \cdot Y'$ for some $\lambda \in [0,1]$. First observe that $0 \leq \mathit{min}(Y_{ij}, Y'_{ij}) \leq Y''_{ij} \leq \mathit{max}(Y_{ij},Y'_{ij}) \leq m \cdot \tfrac{d_{\mathit{max}}}{d_{\mathit{min}}}$. Therefore, $Y'' \in \X$. Also, notice that we have $Y_{ij} > 0$ only if $d(i,j) < \tau $ and $\tfrac{d(i,j)}{p_j} \leq \tfrac{d(i,{\ell})}{p_{\ell}}$ for all $\ell \in [m]$. Therefore, we also have $Y'_{ij}>0$ only if $d(i,j) < \tau $ and $\tfrac{d(i,j)}{p_j} \leq \tfrac{d(i,{\ell})}{p_{\ell}}$ for all $\ell \in [m]$. Lastly we have,
	\begin{align*}
	\sum_{j \in [m]} Y''_{ij} \cdot p_j &=\sum_{j \in [m]} (\lambda \cdot Y_{ij} + (1- \lambda) \cdot Y'_{ij}) \cdot p_j\\ 
	&= \lambda \cdot (\sum_{j \in [m]} Y_{ij} \cdot p_j) + (1 - \lambda) \cdot (\sum_{j \in [m]} Y'_{ij} \cdot p_j)\\ 
	&= \lambda \cdot \sum_{j \in [m]}w_{i,j} \cdot p_j + (1-\lambda) \cdot  \sum_{j \in [m]}w_{i,j} \cdot p_j\\ 
	&= \sum_{j \in [m]}w_{i,j} \cdot p_j \enspace .
	\end{align*}
	Thus, $Y'' \in \X^p$. Therefore, $\X^p$ is convex and non-empty (by Lemma~\ref{optimalbundlenotempty}).

	We now show that $\overline{P}(p,X)$ is convex and non-empty. By Observation~\ref{nonnegativep}, we have that $\tilde{P}(p,X)$ is non-empty and therefore $\overline{P}$ is also non-empty. To show that it is convex, consider two price vectors $t$ and $t'$ in $\tilde{P}(p,X)$. To show convexity, we need to show that $\lambda \cdot \overline{p}(t) + (1- \lambda) \cdot \overline{p}(t')$ is in $\overline{P}$ for all $\lambda \in [0,1]$. Note that  for all $j \in [m]$, that belongs to the component $D_k$ in the disutility graph, we have
	\begin{align*}
	\lambda \cdot \overline{p}_j(t) + (1-\lambda) \cdot \overline{p}_j(t') = \tfrac{q_j(p,X)}{Q_k(p,X)} \cdot (\lambda \cdot t_j + (1-\lambda) \cdot t'_j) \enspace .
	\end{align*}
	Thus it suffices to show that $\lambda \cdot t + (1-\lambda) \cdot t' \in \tilde{P}(p,X)$. This is true as $\tilde{P}(p,X)$ is defined by a set of linear equalities and inequalities, and therefore is convex.
	
	Therefore, we have that both the sets $\overline{P}(p,X)$ and $\X^p$ are non-empty and convex. thus, $\phi(\langle p, X \rangle)$ is also non-empty and convex as it is the Cartesian product of $\overline{P}(p,X)$ and $\X^p$.
\end{proof}

\begin{lemma}
	\label{P4}
	$\phi$ satisfies property $\mathbf{P}_4$.
\end{lemma}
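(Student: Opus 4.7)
The plan is to verify the closed-graph property componentwise for the two factors of $\phi$: the price coordinate $\overline{p}$ and the allocation coordinate $X'$. Fix any convergent sequences $\langle p^n, X^n \rangle \to \langle p, X \rangle$ in $S$ and $\langle \overline{p}^n, X'^n \rangle \to \langle \overline{p}, X' \rangle$ with $\langle \overline{p}^n, X'^n \rangle \in \phi(\langle p^n, X^n \rangle)$ for every $n$. The goal is to show $\langle \overline{p}, X' \rangle \in \phi(\langle p, X \rangle)$.

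For the price coordinate, by definition $\overline{p}^n = \overline{p}(\tilde{p}^n)$ for some $\tilde{p}^n \in \tilde{P}(p^n, X^n)$ lying in the $d$-dimensional simplex. Since that simplex is compact, after passing to a subsequence we may assume $\tilde{p}^n \to \tilde{p}$ for some $\tilde{p}$ in the simplex. The key observation is that $q_j(p,X) = p_j + \max(1 - \sum_i X_{ij}, 0)$ is continuous in $(p,X)$, and by Claim~\ref{Q_kpositive} the quantity $Q_k(p,X) = \sum_{j \in B_k} q_j(p,X)$ stays strictly positive whenever $p \in P$. Consequently, both the entries of the matrix $M$ from (\ref{Mdefinition}) and the ratios $q_j(p,X)/Q_k(p,X)$ appearing in the definition of $\overline{p}(\cdot)$ depend continuously on $(p,X)$ along the sequence. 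Passing to the limit in $M(p^n, X^n)\, \tilde{p}^n = \mathbf{0}$ yields $M(p,X)\, \tilde{p} = \mathbf{0}$, so $\tilde{p} \in \tilde{P}(p,X)$; and passing to the limit in the defining formula for $\overline{p}(\tilde{p}^n)$ yields $\overline{p}^n \to \overline{p}(\tilde{p})$ computed using the limit data $(p,X)$, so $\overline{p} = \overline{p}(\tilde{p}) \in \overline{P}(p,X)$.

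For the allocation coordinate, I must verify $X' \in \X^p$. Boundedness $X' \in \X$ is immediate from compactness of $\X$, and the earning identity $\sum_j X'_{ij}\, p_j = \sum_j w_{i,j}\, p_j$ is obtained directly by taking the limit in $\sum_j X'^n_{ij}\, p^n_j = \sum_j w_{i,j}\, p^n_j$. The delicate step is the minimum-pain-per-buck inequality: if $X'_{ij} > 0$, then $X'^n_{ij} > 0$ for all sufficiently large $n$, so $d(i,j) < \tau$ and $d(i,j)/p^n_j \leq d(i,\ell)/p^n_\ell$ for every $\ell \in [m]$. When $p_j > 0$ and $p_\ell > 0$ the inequality passes to the limit directly, and when $p_\ell = 0$ the right-hand side formally becomes $+\infty$ and the inequality holds trivially.

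The main obstacle is ruling out the case $p_j = 0$ together with $X'_{ij} > 0$, and this is precisely where the sufficiency conditions enter. Since $d(i,j) < \tau$ and $D$ is a disjoint union of complete bipartite graphs (Condition~1), the pair $(i,j)$ belongs to a single component $D_k$ with $i \in A_k$ and $j \in B_k$; since $p \in P$, Observation~\ref{nozeroprice} furnishes some $\ell^* \in B_k$ with $p_{\ell^*} > 0$, and by completeness of $D_k$ we have $d(i,\ell^*) < \tau$. Then along the sequence $d(i,j)/p^n_j \to +\infty$ while $d(i,\ell^*)/p^n_{\ell^*} \to d(i,\ell^*)/p_{\ell^*}$ is finite, contradicting the MPB inequality for large $n$. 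Hence $p_j = 0$ forces $X'^n_{ij} = 0$ for large $n$ and therefore $X'_{ij} = 0$. This completes the verification of $X' \in \X^p$, and hence that $\phi$ has a closed graph.
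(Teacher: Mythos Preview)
Your proof is correct and follows the same overall strategy as the paper---split into price and allocation coordinates, then use continuity of $q$ and positivity of $Q_k$ (Claim~\ref{Q_kpositive}) for the former and a limiting argument for the MPB constraint for the latter. There is one mild methodological difference worth noting. For the price coordinate you extract a convergent subsequence of the hidden parameters $\tilde{p}^n$ (by compactness of the simplex) and then pass to the limit in both $M(p^n,X^n)\tilde{p}^n=0$ and the formula $\overline{p}^n_j = q_j(p^n,X^n)\tilde{p}^n_k/Q_k(p^n,X^n)$; the paper instead proves an intrinsic characterization (Observation~\ref{necessarysufficient}) saying $p'\in\overline{P}(p,X)$ iff $p'\in P$ and $p'_j = \tfrac{q_j(p,X)}{Q_k(p,X)}\sum_{j'\in B_k}p'_{j'}$, and then checks this identity directly on the limit $r^*$ without ever touching the $\tilde{p}^n$. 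Your subsequence argument is slightly more elementary (no auxiliary lemma), while the paper's avoids subsequences entirely; both are standard and equally valid. For the allocation coordinate your treatment is essentially identical to the paper's, and indeed a bit more explicit about the degenerate case $p_j=0$, correctly invoking Condition~1 and Observation~\ref{nozeroprice} to locate a chore $\ell^*\in B_k$ with $p_{\ell^*}>0$ and $d(i,\ell^*)<\tau$.
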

	
\begin{proof}
	Consider a sequence $\langle p^n, X^n \rangle$ that converges to $\langle p^*, X^* \rangle$ and $\langle p^n, X^n \rangle \in S$ for all $n$. Similarly, consider the sequence $\langle r^n, Y^n \rangle$ that converges to $\langle r^*, Y^* \rangle$ such that $\langle r^n, Y^n \rangle \in \phi( \langle p^n, X^n \rangle )$ for all $n$. To show that $\phi$ has a closed graph, we need to show that $\langle r^*, Y^* \rangle \in \phi( \langle p^*, X^* \rangle )$. To show that, $\langle r^*, Y^* \rangle \in \phi( \langle p^*, X^* \rangle )$, we need to show,
	\begin{enumerate}
		\item $r^* \in \overline{P}(p^*,X^*)$, and 
		\item $Y^* \in \X^{p^*}$.
	\end{enumerate}

	\paragraph{Proving $r^* \in \overline{P}(p^*,X^*)$:} We first outline the necessary and sufficient condition for any vector $p'$ to be in $\overline{P}(p,X)$, as this will be useful for our proof.
	
	\begin{observation}
		\label{necessarysufficient}
		$p' \in \overline{P}(p,X)$ if and only if 
		\begin{enumerate}
			\item $p' \in P$, and 
			\item for each chore $j$ in component $D_k$, we have $p'_j = \tfrac{q_j(p,X)}{Q_k(p,X)} \cdot \sum_{j \in B_k} p'_j$.
		\end{enumerate}
	\end{observation}
	
	\begin{proof}
		We first show the ``if" direction. For each component $D_k$ of the disutility graph, let $\tilde{p}_k = \sum_{j \in B_k} p'_j$. Then for each chore $j \in B_k$, we have $p'_j = \tfrac{q_j(p,X)}{Q_k(p,X)} \cdot \tilde{p}_k$. Let $\tilde{p} = \langle \tilde{p}_1, \dots ,\tilde{p}_d \rangle$. To show that $p' \in \overline{P}(p,X)$, it suffices to show that $\tilde{p} \in \tilde{P}(p,X)$. To this end, observe that $\tilde{p}_k = \sum_{j \in B_k} p'_j \geq 0$ as $p'_j \geq 0$ for all $j \in [m]$ (as $p' \in P$). Furthermore, $\sum_{k \in [d]} \tilde{p}_k = \sum_{j \in [m]} p'_j = 1$ (as $p' \in P$). Now, to show $\tilde{p} \in \tilde{P}(p,X)$ it suffices to show that $\tilde{p}$ satisfies the system of equations in~\eqref{Mdefinition} or equivalently those in~\eqref{eq2}. To this end, since $p' \in P$, for each component $D_k$ we have,
		\begin{align*}
		\sum_{i \in A_k} \sum_{j \in [m]} w_{i,j} \cdot p'_j = \sum_{j \in B_k} p'_j \enspace .
		\end{align*} 
		Or equivalently,
		\begin{align*}
		\sum_{i \in A_k} \sum_{k' \in [d]} \sum_{j \in B_{k'}} w_{i,j} \cdot p'_j = \sum_{j \in B_k} p'_j \enspace .
		\end{align*} 
		Substituting every $p'_j$ as $\tfrac{q_j(p,X)}{Q_k(p,X)} \cdot \tilde{p}_k$ where chore $j$ is in the component $D_k$ we have,
		\begin{align*}
		\sum_{i \in A_k} \sum_{k' \in [d]} \sum_{j \in B_{k'}} w_{i,j} \cdot \frac{q_j(p,X)}{Q_k(p,X)} \tilde{p}_{k'} = \tilde{p}_k \enspace .
		\end{align*} 	
		Therefore, $\tilde{p}_k$ satisfies~\eqref{eq2}. Thus $\tilde{p} \in \tilde{P}(p,X)$.
		
		Now we show the ``only if" direction. So assume $p' \in \overline{P}(p,X)$. Then, by Claims~\ref{tc1},~\ref{tc2} and ~\ref{tc3} we have that $p' \in P$. Also  by the definition of $\overline{P}(p,X)$, we also have that there exists a vector $\tilde{p} = \langle \tilde{p}_1, \dots , \tilde{p}_d \rangle$ such that for all $j \in [m]$ we have ${p}'_j = \tfrac{q_j(p,X)}{Q_k(p,X)} \cdot \tilde{p}_k$ where $D_k$ is the component in the disutility graph containing chore $j$. So it just suffices to show that $\tilde{p}_k = \sum_{j \in B_k} p'_j$ for all $k \in [d]$. Now observe that, 
		\begin{align*}
		\sum_{j \in B_k} p'_j &= \sum_{j \in B_k} \frac{q_j(p,X)}{Q_k(p,X)} \cdot \tilde{p}_k\\
		&= \frac{\sum_{j \in B_k} q_j(p,X)}{Q_k(p,X)} \cdot \tilde{p}_k\\		
		&= \tilde{p}_k.  \qedhere
		\end{align*}
	\end{proof}
	
	Now we make an observation about the continuity of the functions $q(p,X)$ and $Q_k(p,X) = \sum_{j \in B_k} q_j(p,X)$ for all $k \in [d]$. Given $p$ and $X$, let $q(p,X)$ be the vector obtained as in~\eqref{qdefinition}. Observe that $q(p,X)$ is a continuous function in $p$ and $X$ for all $p \in P$ and $X \in \X$. As a result, even $Q_k(p,X)$  is a continuous function in $p$ and $X$ for all $k \in [d]$. We are now ready to show that $r^* \in \overline{P}(p^*,X^*)$.  
	
	The $r^*$ is the limit of the sequence $r^n$. Since for all $n$, $\langle r^n ,Y^n \rangle \in \phi( \langle p^n, X^n \rangle)$, we can conclude that each $r^n \in P$. Since the set $P$ is compact (and therefore closed), we have that $r^* \in P$ as well. Now, by Observation~\ref{necessarysufficient} it suffices to show that for each chore $j$ in component $D_k$, we have $r^*_j = \tfrac{q_j(p^*,X^*)}{Q_k(p^*,X^*)} \cdot \sum_{j' \in B_k} r^*_{j'}$. Again, since every $r^n \in \overline{P}(p^n,X^n)$, we have for each chore $j$ in component $D_k$,  $r^n_j = \tfrac{q_j(p^n,X^n)}{Q_k(p^n,X^n)} \cdot \sum_{j' \in B_k} r^n_{j'}$. Let $f_j(r,p,X) = r_j- \tfrac{q_j(p,X)}{Q_k(p,X)} \cdot \sum_{j' \in B_k} r_{j'}$. Observe that for all $j \in [m]$ the limit of the sequence $h^n_j = f_j(r^n,p^n,X^n)$, $h^*_j$ is zero (as $h^n_j$ is zero for all $n$). Again, since $q(p,X)$ and $Q_k(p,X)$ (for all $k \in [d]$) are continuous functions in $p \in P$ and $X \in \X$, and  $Q_k(p,X) > 0$  (for all $k \in [d]$) by Claim~\ref{Q_kpositive} for all $p \in P$ and $X \in \X$, we can conclude that $f_j(r,p,X)$ is well defined and also a continuous function in $p$, $X$ and $r$ for all $j \in [m]$. Thus the limit of the sequence $h^n_j = f_j(r^n,p^n,X^n)$ is $h^*_j = f_j(r^*,p^*,X^*) = r^*_j- \tfrac{q_j(p^*,X^*)}{Q_k(p^*,X^*)} \cdot \sum_{j' \in B_k} r^*_{j'} $. Since we know that $h^*_j =0$ for all $j \in [m]$, we can conclude that  for all $j \in [m]$, we have $r^*_j - \tfrac{q_j(p^*,X^*)}{Q_k(p^*,X^*)} \cdot \sum_{j \in B_k} r^*_j = 0$, implying $r^*_j = \tfrac{q_j(p^*,X^*)}{Q_k(p^*,X^*)} \cdot \sum_{j \in B_k} r^*_j$.

	\paragraph{Proving $Y^* \in \X^{p^*}$:} To show $Y^* \in X^{p^*}$, we need to show, for every agent $i$,
	\begin{enumerate}
		\item $Y^{*}_{ij} > 0$ only if $\tfrac{d(i,j')}{p^*_j} \leq \tfrac{d(i,j)}{p^*_{j'}}$ for all $j' \in [m]$, and  
		\item $\sum_{j \in [m]}w_{i,j} \cdot p^*_j = \sum_{j \in [m]} Y^*_{ij} \cdot p^*_j$.
	\end{enumerate}
	
	We first show part 1 by contradiction, Assume that there exists $i$, $j$ and $j'$ such that $Y^*_{ij} > \beta > 0$ and $\tfrac{d(i,j)}{p^*_j} > \tfrac{d(i,j')}{p^*_{j'}}(1 + \delta)$ for some $\delta > 0$. Since $p^* \in P$,  by Observation~\ref{nozeroprice} we have that there is at least one chore in each component with a non-zero price, and therefore, we can assume without loss of generality that $p^*_{j'} > 0$. Let $\varepsilon > 0$ be such that $\varepsilon \ll \mathit{min}(\beta, \tfrac{d(i,j) \cdot p^*_{j'} - (1 + \delta) \cdot d(i,j') \cdot p^*_j}{(1+ \delta) \cdot d(i,j') + d(i,j)})$. Such an $\varepsilon$ exists as $\tfrac{d(i,j) \cdot p^*_{j'} - (1+ \delta) \cdot d(i,j') \cdot p^*_j}{(1+ \delta) \cdot d(i,j') + d(i,j)} > 0$ for all values of $p^*_j$: when $p^*_j = 0$, then $\tfrac{d(i,j) \cdot p^*_{j'}}{(1+ \delta) \cdot d(i,j') + d(i,j)} > 0$ as $p^*_{j'} > 0$ and when $p^*_j > 0$, then $d(i,j) \cdot p^*_{j'} - (1+ \delta) \cdot d(i,j') \cdot p^*_j > 0$ as  $\tfrac{d(i,j)}{p^*_j} > (1+ \delta) \cdot \tfrac{d(i,j)}{p^*_{j'}}$. Since $Y^n$ converges to $Y^*$ and $p^n$ converges to $p^*$, we know that there exists an $n_0 \in \mathbb{N}$ be such that for $n > {n}_0$ we have $\abs{Y^*_{ij} - Y_{ij}} < \varepsilon$  and $\abs{p^*_j - p^n_j} < \varepsilon$ and $\abs{p^*_{j'} - p^n_{j'}} < \varepsilon$. It can be verified easily that for our choice of $\varepsilon$ we have $Y^n_{ij} > 0$, and $\tfrac{d(i,j)}{p^n_j} > \tfrac{d(i,j)}{p^n_{j'}}$, for all $n > n_0$,  which implies that $Y^n \notin \X^{p^n}$, which is a contradiction.
	
	We also prove part 2 by contradiction. Assume that $\sum_{j \in [m]}w_{i,j} \cdot p^*_j - \sum_{j \in [m]} Y^*_{ij} \cdot p^*_j = \delta$ for some non-zero $\delta$.  Since $Y^n$ converges to $Y^*$ and $p^n$ converges to $p^*$, we know that for every $\varepsilon > 0$, there exists an $n_0 \in \mathbb{N}$ be such that for $n > {n}_0$ we have $\abs{Y^*_{ij} - Y_{ij}} < \varepsilon$  and $\abs{p^*_j - p^n_j} < \varepsilon$ for all $j \in [m]$. Therefore, by choosing a sufficiently small $\varepsilon$ we can ensure that $\sum_{j \in [m]}w_{i,j} \cdot p^n_j - \sum_{j \in [m]} Y^n_{ij} \cdot p^n_j \neq 0$, for all $n > n_0$,  which would imply that $Y^n \notin \X^{p^n}$, which is a contradiction.
\end{proof}
		
We are now ready to state the main result of this section 
\begin{theorem}
	\label{existencethm}
	Every instance $I \in \mathcal{I}$ admits a competitive efficient allocation.
\end{theorem}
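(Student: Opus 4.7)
The plan is to simply assemble the lemmas already proved in this section. All the heavy lifting has been done: we constructed an explicit correspondence $\phi: S \to 2^S$ and, in Lemmas~\ref{P1}--\ref{P4}, verified that it satisfies properties $\mathbf{P}_1$, $\mathbf{P}_2$, $\mathbf{P}_3$, and $\mathbf{P}_4$. The remaining work is routine bookkeeping.

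First, I would fix an arbitrary instance $I \in \mathcal{I}$, so that Conditions~1 and~2 hold. This is precisely what is needed to guarantee that the domain $P$ of normalized price vectors is non-empty, closed, convex, and compact (Observation~\ref{Pisnonempty}), that the set of optimal allocations $\X^p$ is non-empty for every $p \in P$ (Lemma~\ref{optimalbundlenotempty}, which uses Observation~\ref{nozeroprice} relying on strong connectivity of $W$), and that the matrix $M$ defined in~(\ref{Mdefinition}) admits a non-negative solution via Fact~\ref{stochasticfixedpoint}. Hence the correspondence $\phi$ is well-defined on $S = \bigcup_{p \in P}\bigcup_{X \in \X}\langle p, X\rangle$.

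Next, I would invoke Lemmas~\ref{P1},~\ref{P2},~\ref{P3}, and~\ref{P4} to conclude that $\phi$ satisfies all four required properties. Then Lemma~\ref{existsfixedpoint} (which is an application of Kakutani's fixed-point theorem to $S$, using $\mathbf{P}_1$ to ensure $\phi(S) \subseteq S$, $\mathbf{P}_3$ for non-emptiness and convexity of the images, and $\mathbf{P}_4$ for the closed-graph property) yields a fixed point $\langle p^*, X^* \rangle \in \phi(\langle p^*, X^* \rangle)$. By Lemma~\ref{fixedpointisequilibrium}, this fixed point is a competitive equilibrium: property $\mathbf{P}_1$ gives $X^* \in \X^{p^*}$, so the optimality and budget-balance conditions hold, while $\mathbf{P}_2$ together with Observation~\ref{nozeroprice} and the component-wise price-preservation constraint on $P$ forces $p^*_j > 0$ for every chore $j$ and market clearance $\sum_{i \in [n]} X^*_{ij} = 1$ for every $j$.

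Finally, I would observe that the equilibrium is \emph{efficient} in the sense of Problem~\ref{CD}, because the definition of $\X^p$ explicitly requires $X^*_{ij} > 0$ only when $d(i,j) < \tau$; so no agent is allocated a chore she hugely dislikes. I do not anticipate any genuine obstacle here --- the main technical hurdles (constructing $\phi$ so that the linear system for the component-wise price constraint has a non-negative solution, and establishing the closed-graph property despite the multi-step construction) were already resolved in the lemmas preceding the theorem. The proof of the theorem itself is therefore essentially a two-line chaining of Lemmas~\ref{P1}--\ref{P4}, \ref{existsfixedpoint}, and \ref{fixedpointisequilibrium}.
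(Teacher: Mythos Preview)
Your proposal is correct and follows essentially the same approach as the paper's own proof, which likewise just chains together Lemmas~\ref{P1}--\ref{P4} (to get the four properties), Lemma~\ref{existsfixedpoint} (Kakutani gives a fixed point), and Lemma~\ref{fixedpointisequilibrium} (the fixed point is a competitive equilibrium). Your write-up is in fact more explicit than the paper's about which earlier observations feed into each step, but the logical structure is identical.
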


\begin{proof}
	We defined a correspondence $\phi$ that satisfies properties $\mathbf{P}_1$, $\mathbf{P}_2$, $\mathbf{P}_3$ and $\mathbf{P}_4$ by Lemmas~\ref{P1},~\ref{P2},~\ref{P3} and~\ref{P4}. By Lemma~\ref{existsfixedpoint} we have that any correspondence that satisfies the properties $\mathbf{P}_1$, $\mathbf{P}_2$, $\mathbf{P}_3$ and $\mathbf{P}_4$ has a fixed point.  Finally, by Lemma~\ref{fixedpointisequilibrium}, any fixed point of this correspondence will correspond to   competitive equilibrium in $I$. Therefore, our correspondence $\phi$ has at least one fixed point and this fixed point corresponds to  a competitive equilibrium.
\end{proof}

\paragraph{Proof of Fact~\ref{stochasticfixedpoint}:} Recall Fact~\ref{stochasticfixedpoint}.

\begin{fact*}
	Let $Z \in \mathbb{R}^{n \times n}$ be a square matrix such that $Z_{ij} \geq 0$ for all $j \neq i$ (all the non-diagonal entries of $Z$ are non-negative) and $\sum_{i \in [n]} Z_{ij} = 0$ (column sums are zero), then there exists a vector $t \in \mathbb{R}^n_{\geq 0}$ such that $\sum_{i \in [n]} t_i =1$  and $Z \cdot t = 0$.
\end{fact*}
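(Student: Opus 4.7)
The plan is to reduce the problem of finding a non-negative null vector of $Z$ to a Brouwer fixed-point problem on the simplex $\Delta_n = \{t \in \mathbb{R}^n_{\geq 0} \mid \sum_i t_i = 1\}$, using exactly the trick sketched in the main text (converting $Mv=0$ into $M'v=v$). The key observation is that the column-sum-zero hypothesis together with non-negative off-diagonal entries forces $Z_{ii} = -\sum_{j\neq i} Z_{ji} \leq 0$, so the diagonal entries of $Z$ are bounded below: setting $c = \max_i |Z_{ii}| + 1 > 0$ (or any constant strictly larger than $\max_i|Z_{ii}|$), the matrix $M' = I + \tfrac{1}{c} Z$ has all entries non-negative (off-diagonals because $Z_{ij}\geq 0$, diagonals because $1 + Z_{ii}/c \geq 0$).

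Next, I would verify that $M'$ is column-stochastic: each column of $M'$ sums to $1 + \tfrac{1}{c}\sum_i Z_{ij} = 1$ by the column-sum hypothesis on $Z$. Consequently $M'$ maps $\Delta_n$ into itself, since for any $t \in \Delta_n$ we have $(M' t)_i \geq 0$ and $\sum_i (M't)_i = \sum_j t_j \sum_i M'_{ij} = \sum_j t_j = 1$. The map $t \mapsto M' t$ is linear, hence continuous, and $\Delta_n$ is non-empty, compact, and convex. Brouwer's fixed-point theorem then yields some $t \in \Delta_n$ with $M' t = t$, which unwinds to $\tfrac{1}{c} Z t = 0$, i.e., $Z t = 0$, as required.

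I do not expect any significant obstacles: both the reduction and the application of Brouwer are standard. The only subtlety is the choice of the scaling constant $c$, which must be chosen strictly larger than $\max_i |Z_{ii}|$ to guarantee non-negativity of the diagonal of $M'$; this is trivially achievable since $Z$ is a fixed finite matrix. An alternative route would be to invoke Perron--Frobenius directly on the column-stochastic matrix $M'$ to produce a non-negative eigenvector for eigenvalue $1$, but the Brouwer argument is the most self-contained and matches the general strategy used elsewhere in the paper.
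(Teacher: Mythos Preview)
Your proposal is correct and follows essentially the same approach as the paper: scale $Z$ so that $I + \tfrac{1}{c}Z$ is a non-negative column-stochastic matrix, observe that this matrix maps the simplex $\Delta_n$ continuously into itself, and apply Brouwer to obtain a fixed point, which is the desired null vector of $Z$. Your choice of the scaling constant $c = \max_i |Z_{ii}| + 1$ is in fact slightly sharper than the paper's (which simply takes $\lambda \gg \max_{i,j} Z_{ij}$), but the argument is otherwise identical.
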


\begin{proof}
Let $\lambda \gg \mathit{max}_{i,j \in [n]} (Z_{ij})$. Let $Z' = \tfrac{1}{\lambda} Z$. Observe that every $t$ that satisfies $Z' \cdot t = 0$, also satisfies $Z \cdot t = 0$ and vice versa. Also, each entry in the matrix  $Z'$ has absolute value is less than one. Let $Z'' = (Z' + I)$ where $I$ is the identity matrix. Note that every entry in the matrix $Z''$ is non-negative. Also every $t$ that satisfies $Z'' \cdot t = t$, also satisfies $Z' \cdot t = 0 $ and therefore also satisfies $Z \cdot t = 0$ and vice versa. From here on, we will be dealing with the following system of equations
\begin{align}
Z'' \cdot t = t\enspace .
\end{align}

We first observe that the matrix $Z''$ is column stochastic: For all $j \in [n]$, we have 
\begin{align*}
\sum_{i \in [n]}Z''_{ij} &= \sum_{i \in [n]}(\frac{1}{\lambda} \cdot Z_{ij} + I_{ij})\\ 
                       &=\sum_{i \in [n]}\frac{1}{\lambda} \cdot Z_{ij} + 1\\
                       &=0 + 1 &(\text{Column sums are zero in $Z$})\\
                       &=1\enspace .
\end{align*}
Now, let $R = \left\{ r \in \mathbb{R}^n_{\geq 0} \mid  \sum_{j \in [n]}r_j=1 \right\}$. Observe that the set $R$ is non-empty, convex and compact. We first make a small claim.
\begin{claim}
	Let $r' = Z'' \cdot r$. If $r \in R$ then $r' \in R$.
\end{claim}
\begin{proof}
	Since every entry in the matrix $Z''$ and every component of the vector $r$ is non-negative, we also have that every component of $r'$ is also non-negative: $r'_j \geq 0$ for all $j \in [d]$. Now observe that 
	\begin{align*}
	\sum_{j \in [n]} r'_j &= \mathbf{1}^T \cdot r'\\
	&= \mathbf{1}^T \cdot Z'' \cdot r\\
	&= \mathbf{1}^T \cdot r &(\text{as $Z''$ is column stochastic})\\
	&= \sum_{j \in [m]} r_j\\
	&=1. 
	\end{align*}
	Thus, $r' \in R$.
\end{proof}
We define $f:\R \rightarrow \R$ such that $f(r) = Z'' \cdot r$. Observe that $f$ is also continuous. Thus, by Brouwer's fixed point theorem there is a $t \in R$, such that $f(t) = t$ or equivalently $Z'' \cdot t = t$. 	
\end{proof}

\section{Hardness of Finding Equilibrium under Sufficiency Conditions}\label{ppadhardness}
In this section, we show that chore division may still be intractable even for the instances that satisfy Conditions 1 and 2 mentioned in Section~\ref{sufficiency}. We show that it is PPAD-hard to find a competitive equilibrium on instances that satisfy Conditions 1 and 2 in Section~\ref{sufficiency}. We will show that any polynomial time algorithm that determines a competitive equilibrium on instances that satisfy Conditions 1 and 2, will find an equilibrium in a \emph{normalized polymatrix game}. The normalized polymatrix game is known to be PPAD-hard~\cite{chen2017complexity}. Recall the normalized polymatrix game:

\begin{problem*}\textbf{(Normalized Polymatrix Game)~\cite{chen2017complexity}}\\
	\textbf{Given}: A $2n \times 2n$ rational matrix $\M$ with every entry in $[0,1]$ and $\M_{i,2j-1} + \M_{i,2j} = 1$ for all $i \in [2n]$ and $j \in [n]$ .\\
	\textbf{Find}: Equilibrium strategy vector $x \in \mathbb{R}^{2n}_{\geq 0}$ such that $x_{2i-1} + x_{2i} = 1$ and 
	\begin{align*}
	& x^T \cdot \M_{*,{2i-1}} > x^T \cdot \M_{*,2i} + \tfrac{1}{n} \implies x_{2i} = 0.\\
	& x^T \cdot \M_{*,{2i}} > x^T \cdot \M_{*,2i-1} + \tfrac{1}{n} \implies x_{2i-1} = 0.
	\end{align*}
	where $M_{*,k}$ represents the $k^{\mathit{th}}$ column of the matrix $\M$.	
\end{problem*}

From the next section onward, we elaborate our proof: We first introduce all agents and chores. Thereafter, we define the disutility matrix and endowment matrix and show that our instance satisfies the sufficiency conditions of Section~\ref{sufficiency}, and therefore admits a competitive equilibrium. Then, we show that our instance exhibits the four properties of \emph{pairwise equal endowments}, \emph{fixed earning}, \emph{price equality}, \emph{price regulation} and \emph{reverse ratio amplification} (as discussed in Section~\ref{mainres3}), and thus in polynomial-time we can construct the equilibrium strategy vector $x$ for $I$ from any competitive equilibrium in $E(I)$. The reader is highly encouraged to read Section~\ref{mainres3} before reading the elaborate version of the proof to get the idea of the overall proof sketch. 

\subsection{Agent and Chore Sets}
We define the set of $K = 2c \cdot \lceil \log(n) \rceil$ ($c =4$) many sets of chores (observe crucially that $K$ is even),
\begin{align*}
B_k &= \left\{\cup_{i \in [2n]} b^k_i \right\} &\text{for all $k \in [K]$}, 
\end{align*} 
and $K$ many sets of agents 
\begin{align*}
A_k &=\begin{cases} 
                \left\{\cup_{i \in [2n]} a^1_i\right\} \cup \left\{\cup_{i \in [2n]} a'_i \right\}  &\text{when $k=1$}, \\
				\left\{\cup_{i \in [2n]} a^k_i \right\}  \cup \left\{\cup_{i \in [n]} \A^k_i\right\} &\text{when $ 2 \leq k \leq K-1$}, \\
				\left\{\cup_{i,j \in [2n]}a^K_{i,j}\right\} \cup \left\{\cup_{i \in [n]} \A^K_i\right\}   &\text{when $k = K$}.
	\end{cases}
\end{align*} 

We remark that the sets $A_1$, $A_K$ of agents and sets $B_1$, $B_K$ of chores are to enforce the fixed earning, price equality and price regulation properties as mentioned in sketch of the reduction in Section~\ref{mainres3}, while the sets $A_k$ of agents and $B_k$ of chores for all $2 \leq k \leq K-1$ are to enforce reverse ratio amplification property as mentioned in Section~\ref{mainres3}.  We now define the disutility matrix and the endowment matrix of the instance.

\paragraph{Disutility Matrix and the Disutility Graph.} The disutility graph for our instance will be a disjoint union of  complete bipartite graphs and the entries in our disutility matrix will be to enforce price-regulation and reverse ratio-amplification properties. We now describe the disutility matrix: We define only the disutility values that are less than $\tau$ in the matrix (the disutility of all agent-chore pair not mentioned should be assumed to be at least $\tau$).  For all $k \in [K]$, for each pair of chores $b^k_{2i-1}$ and $b^k_{2i}$, there are a set of agents that have disutility less than $\tau$ towards them and have disutility larger or equal to $\tau$ towards all other chores; Additionally these agents also happen to be either in $A_k$ or $A_{k-1}$ (indices are modulo $K$). We now outline these agents and their disutilities for every $k \in [K]$. To define the disutility less than $\tau$, we introduce the scalars $\tfrac{1}{n^{3c}} = \alpha_1, \alpha_2, \dots , \alpha_K$ such that each $\alpha_{i+1} = \tfrac{3}{2} \cdot \alpha_i$ for all $i \in [K-1]$. Before we define the disutility matrix, we make an obvious claim about the scalars $\alpha_i$ for all $i \in [K]$, which will be useful later,
\begin{claim}
	\label{alpha-technical}
	We have $ n^c \cdot \alpha_1 < \alpha_K \leq \tfrac{1}{n^c}$.
\end{claim}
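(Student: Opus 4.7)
The plan is to proceed by direct calculation. Unrolling the recurrence $\alpha_{i+1} = \tfrac{3}{2}\alpha_i$ gives $\alpha_K = \alpha_1 \cdot (3/2)^{K-1}$. Substituting $\alpha_1 = 1/n^{3c}$, the two inequalities to establish become
\begin{equation*}
n^c \;<\; (3/2)^{K-1} \quad\text{and}\quad (3/2)^{K-1} \;\leq\; n^{2c}.
\end{equation*}
Both reduce to estimates of $(3/2)^{K-1}$ in terms of $n$.

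For the lower bound, I would take logarithms (base $2$) and rewrite the first inequality as $c \log n < (K-1)\log(3/2)$. Since $K = 2c\lceil \log n\rceil \geq 2c\log n$ and $\log(3/2) \approx 0.585$, we get $(K-1)\log(3/2) \geq (2c\log n - 1)\log(3/2)$, and for $n$ sufficiently large (asymptotically, $n \geq$ a small absolute constant) this exceeds $c\log n$ because $2 \cdot 0.585 > 1$. Plugging $c=4$ makes the slack concrete: $(K-1)\log(3/2) \approx 1.17\, c\log n = 4.68\log n$, comfortably larger than $c\log n = 4\log n$.

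For the upper bound, I would instead use $K - 1 < 2c\lceil \log n\rceil \leq 2c(\log n + 1)$, so
\begin{equation*}
(3/2)^{K-1} \;<\; (3/2)^{2c}\cdot n^{2c\log(3/2)} \;=\; (3/2)^{2c}\cdot n^{1.17\,c}.
\end{equation*}
With $c=4$, the constant factor $(3/2)^{2c} = (3/2)^{8} \approx 25.6$, and $n^{1.17c} = n^{4.68}$. Comparing with $n^{2c} = n^8$, it suffices that $25.6 \leq n^{8 - 4.68} = n^{3.32}$, which holds for $n \geq 3$.

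The only subtlety (hardly an obstacle) is handling the ceiling in $K = 2c\lceil \log n\rceil$: the ceiling can only \emph{help} the lower bound (by making $K$ at least as large) and requires a one-step absorption $\lceil \log n\rceil \leq \log n + 1$ for the upper bound, which is absorbed into the constant factor $(3/2)^{2c}$. Hence for all sufficiently large $n$ (as assumed throughout the PPAD-hardness reduction) both inequalities hold, proving the claim.
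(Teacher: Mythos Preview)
Your proposal is correct and takes essentially the same approach as the paper: both unroll the recurrence to get $\alpha_K = (3/2)^{K-1}\alpha_1$ (the paper writes $(3/2)^K$, an inconsequential off-by-one) and then use the elementary inequalities $(3/2)^2 > 2$ and $3/2 < 2$ to sandwich $(3/2)^{K}$ between $n^c$ and $n^{2c}$. You are simply more explicit about the ceiling and the ``sufficiently large $n$'' threshold, which the paper glosses over.
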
 
\begin{proof}
	We first show the lower bound. We have $\alpha_K = (\tfrac{3}{2})^K \cdot \alpha_1 = (\tfrac{3}{2})^{2c \lceil \log (n) \rceil} \cdot \alpha_1 > 2^{c \log(n)} \cdot \alpha_1 = n^c \cdot \alpha_1$. Similarly, for the upper bound, we have, $\alpha_K = (\tfrac{3}{2})^K \cdot \alpha_1 = (\tfrac{3}{2})^{2c \lceil \log (n) \rceil } \cdot \alpha_1 < 2^{2c \log(n)} \cdot \alpha_1 = n^{2c} \cdot \alpha_1 \leq \tfrac{1}{n^{c}}$ (as $\alpha_1 = \tfrac{1}{n^{3c}}$). 
\end{proof}
We now define the disutility matrix:
\begin{itemize}
	\item $k =1$: For each $i \in [n]$, we first define the disutilities of the agents that have disutility less than $\tau$ for chores $b^1_{2i-1}$ and $b^1_{2i}$. For each $i \in [n]$ we have,
							\begin{align*}
								d(a^K_{i',2i-1},b^{1}_{2i-1}) &= (1- \alpha_1) &\text{and}&    & d(a^K_{i',2i-1},b^{1}_{2i}) &= (1+ \alpha_1) &\text{for all $i' \in [2n]$}\\
							    d(a^K_{i',2i},b^{1}_{2i-1}) &= (1+ \alpha_1)   &\text{and}&    & d(a^K_{i',2i},b^{1}_{2i}) &= (1 - \alpha_1) &\text{for all $i' \in [2n]$}\\
							    d(a'_{2i-1},b^{1}_{2i-1}) &= (1- \alpha_1) &\text{and}&    & d(a'_{2i-1},b^{1}_{2i}) &= (1 +  \alpha_1)\\
							    d(a'_{2i},b^{1}_{2i-1}) &= (1+ \alpha_1)   &\text{and}&    & d(a'_{2i},b^{1}_{2i}) &= (1 - \alpha_1).
							\end{align*}   
				   		Therefore, for each $i \in [2n]$,  we have a component $D_i^1$ in the disutility graph which is a complete bipartite graph comprising of agents $\left\{ \cup_{i' \in [2n]} a^K_{i',2i-1} \right\} \bigcup \left\{ \cup_{i' \in [2n]} a^K_{i',2i} \right\} \bigcup \left\{a'_{2i-1},a'_{2i} \right\}$ and chores $\left\{ b_{2i-1}^1, b_{2i}^1 \right\}$ (see Figure~\ref{disutility-graph} (left subfigure) for an illustration).
				   		
	\item $ 2 \leq k \leq K$: For each $i \in [n]$ we have,
				   \begin{align*}
					  d(a^{k-1}_{2i-1},b^{k}_{2i-1}) &= (1- \alpha_k) &\text{and}&    & d(a^{k-1}_{2i-1},b^{k}_{2i}) &= (1 +  \alpha_k)\\
					  d(a^{k-1}_{2i},b^{k}_{2i-1}) &= (1+ \alpha_k)   &\text{and}&    & d(a^{k-1}_{2i},b^{k}_{2i}) &= (1 - \alpha_k)\\
					  d(\A^k_{i},b^k_{2i-1}) &= (1- \alpha_k)         &\text{and}&    & d(\A^k_{i},b^k_{2i}) &= (1- \alpha_k) \enspace .\\
				  \end{align*}
				 Therefore, for every $k$ such that $2 \leq k \leq K$, for each $i \in [2n]$,  we have a connected component $D_i^k$ in the disutility graph which is a complete bipartite graph comprising of agents $\left\{ a^{k-1}_{2i-1}, a^{k-1}_{2i}, \A^k_i \right\}$ and chores $\left\{ b_{2i-1}^k , b_{2i}^k \right\}$ (see Figure~\ref{disutility-graph} (right subfigure) for an illustration).   
\end{itemize}
It is clear that the disutility graph is a disjoint union of complete bipartite graphs, namely, the union of $D^k_i$ for all $i \in [n]$ and $k \in [K]$. Therefore,
\begin{center}
  $E(I)$ satisfies Condition 1 in Section~\ref{sufficiency}.	
\end{center}	

\begin{figure}[htbp]
	\centering
	\begin{subfigure}[b]{0.4 \textwidth}
\begin{tikzpicture}[
roundnode/.style={circle, draw=green!60, fill=green!5, very thick, minimum size=7mm},
squarednode/.style={rectangle, draw=red!60, fill=red!5, very thick, minimum size=5mm},
]
\node[squarednode]      (a11)      at (0,8+2)                       {$\scriptstyle{a^K_{1(2i-1)}}$};
\node[squarednode]      (an1)      at (0,5+2)                        {$\scriptstyle{a^K_{n(2i-1)}}$};
\node[squarednode]      (a12)      at (0,4-2)                       {$\scriptstyle{a^K_{1(2i)}}$};
\node[squarednode]      (an2)      at (0,1-2)                        {$\scriptstyle{a^K_{n(2i)}}$};
\node[squarednode]      (a'1)      at (0,5.5)                       {$\scriptstyle{a'_{2i-1}}$};
\node[squarednode]      (a'2)      at (0,3.5)                        {$\scriptstyle{a'_{2i}}$};


\node[roundnode]      (b1)      at (4,5.5)                       {$\scriptstyle{b_{2i-1}^1}$};
\node[roundnode]      (b2)      at (4,3.5)                        {$\scriptstyle{b_{2i}^1}$};

\draw[blue,->] (a11)--(b1);
\draw[blue,->] (an1)--(b1);
\draw[blue,->] (a12)--(b2);
\draw[blue,->] (an2)--(b2);

\draw[blue,->,ultra thick]  (a11)--(b2);
\draw[blue,->, ultra thick] (an1)--(b2);
\draw[blue,->, ultra thick] (a12)--(b1);
\draw[blue,->, ultra thick] (an2)--(b1);

\draw[blue,->] (a'1)--(b1);
\draw[blue,->] (a'2)--(b2);
\draw[blue,->,ultra thick]  (a'1)--(b2);
\draw[blue,->, ultra thick] (a'2)--(b1);

\draw[black, very thick] (-0.75,4.55+2) rectangle (0.75,8.5+2);
\draw[black, very thick] (-0.75,0.5-2) rectangle (0.75,4.45-2);

\filldraw[color=black!60, fill=black!5, very thick](0,7+2) circle (0.02);
\filldraw[color=black!60, fill=black!5, very thick](0,6.5+2) circle (0.02);
\filldraw[color=black!60, fill=black!5, very thick](0,6+2) circle (0.02);

\filldraw[color=black!60, fill=black!5, very thick](0,3-2) circle (0.02);
\filldraw[color=black!60, fill=black!5, very thick](0,2.5-2) circle (0.02);
\filldraw[color=black!60, fill=black!5, very thick](0,2-2) circle (0.02);

\node at (2,-2.25) {$D^1_i$};
\end{tikzpicture}
	\end{subfigure}		
	\begin{subfigure}[b]{0.4 \textwidth}
\begin{tikzpicture}
\node[rectangle, draw=red!60, fill=red!5, very thick, minimum size=5mm]      (a11)      at (0,5)                       {$\scriptstyle{a^{k-1}_{2i-1}}$};
\node[rectangle, draw=red!60, fill=red!5, very thick, minimum size=5mm]      (a22)      at (0,3)                        {$\scriptstyle{a^{k-1}_{2i}}$};
\node[rectangle, draw=red!60, fill=red!5, very thick, minimum size=5mm]      (a3)      at (0,1)                       {$\scriptstyle{\A^k_{i}}$};
\node[circle, draw=green!60, fill=green!5, very thick, minimum size=7mm]      (b1)      at (4,5)                       {$\scriptstyle{b_{2i-1}^k}$};
\node[circle, draw=green!60, fill=green!5, very thick, minimum size=7mm]      (b2)      at (4,3)                        {$\scriptstyle{b_{2i}^k}$};
\draw[blue,->] (a11)--(b1);
\draw[blue,->] (a22)--(b2);
\draw[blue,->] (a3)--(b1);
\draw[blue,->] (a3)--(b2);
\draw[blue,->,ultra thick]  (a11)--(b2);
\draw[blue,->, ultra thick] (a22)--(b1);
\node at (2,0) {$D^k_i$};
\end{tikzpicture}
	\end{subfigure}	
	\caption{Illustration of the disutility graph corresponding to the disutility matrix: On the left we have the component $D_i^1$, and on the right we have $D_i^k$ when $2 \leq k \leq K$. The edges are colored in order to also encode the disutility matrix. The thin blue edges from agents to chores depict a disutility of $1 - \alpha_1$ for $D^1_i$ (left), and $1 - \alpha_k$ for $D^k_i$ when $2 \leq k \leq K$ (right). Similarly, the thick blue edges from agents to chores depict a disutility of $1 + \alpha_1$ for $D^1_i$ (left) and $1 + \alpha_k$ for $D^k_i$ (right).} 
	\label{disutility-graph}
\end{figure}
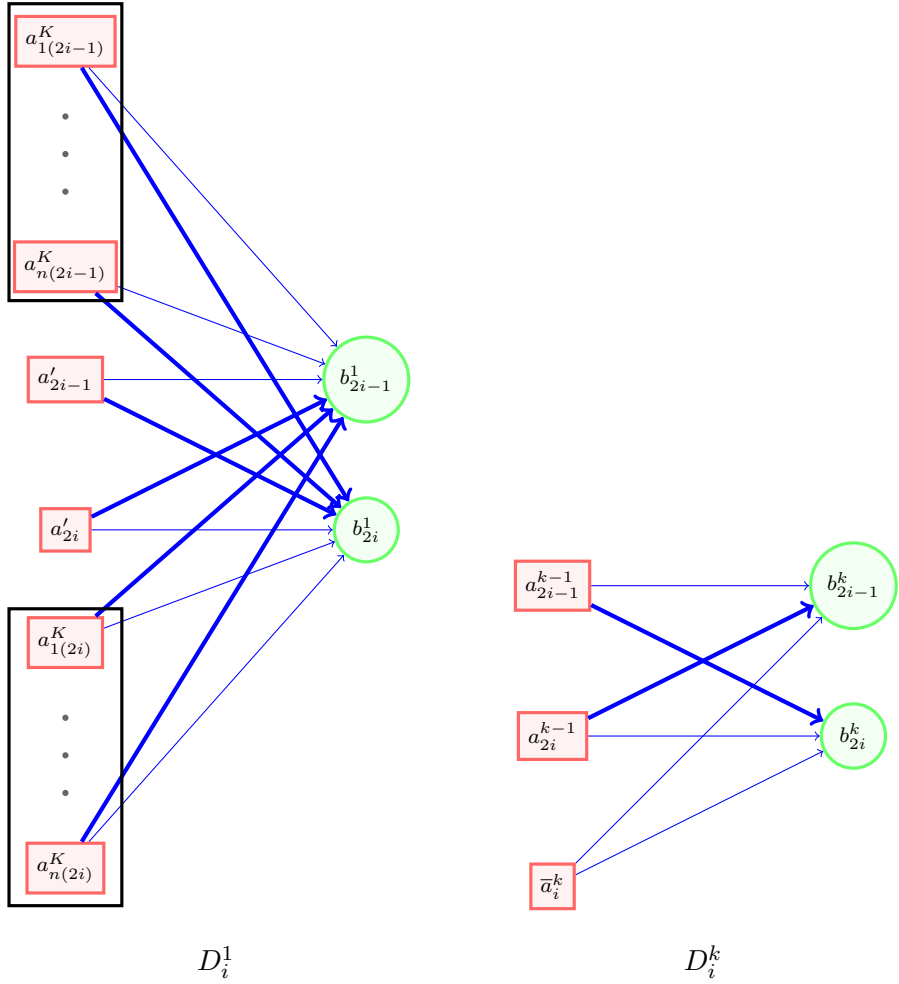

\paragraph{Endowment Matrix.} All agents in $A_k$ have endowments of chores only in $B_k$ for all $k \in [K]$. We only mention the non-zero agent-chore endowments (all agent-chore endowments, if not mentioned, are zero). 
\begin{itemize}
	\item $k=1$: For each $i \in [2n]$ we have,
	\begin{align*}
	w(a^1_i,b^1_i) &=n.
	\end{align*}
	Also, for each $i \in [n]$ we have 
	\begin{align*}
	w(a'_{2i-1},b^1_{2i-1})=w(a'_{2i-1},b^1_{2i})  &= \frac{1}{2} \cdot (1 - \alpha_K) \cdot (2n - \sum_{j \in [2n]} \M_{j,2i-1})\\
	w(a'_{2i},b^1_{2i-1})  = w(a'_{2i},b^1_{2i}) &= \frac{1}{2} \cdot  (1 - \alpha_K) \cdot (2n - \sum_{j \in [2n]} \M_{j,2i}).
	\end{align*}
	\item $2 \leq k \leq K-1$: For each $i \in [n]$, we have,
	\begin{align*}
	w(a^k_{2i-1},b^k_{2i-1}) &=n      &\text{and}&     & w(a^k_{2i},b^k_{2i})&=n\\
	w(\A^k_i, b^k_{2i-1}) &= \delta_k &\text{and}& & w(\A^k_i, b^k_{2i}) &= \delta_k\enspace ,
	\end{align*}
	where $\delta_k  = \tfrac{n \cdot \alpha_k}{2}$. The reason behind the exact choice of the value of $\delta_k$ will become explicit when we show that our instance satisfies the reverse ratio amplification property in Section~\ref{property-satisfication}. As of now, the reader is encouraged to think of it just as a small scalar.
	\item $k =K$: For each $i \in [n]$ we have,
	\begin{align*}
	w(a^K_{2i-1,j},b^K_{2i-1}) &=\M_{2i-1,j}  &\text{and}&   & w(a^K_{2i,j},b^K_{2i}) &=\M_{2i,j}& &\text{ for all $j \in [2n]$}\\
	w(\A^K_i, b^K_{2i-1}) &= \delta_K        &\text{and}&    & w(\A^K_i, b^K_{2i}) &= \delta_K\enspace ,
	\end{align*}
	where $\delta_K = \tfrac{n \cdot \alpha_K}{2}$ (the reason behind the choice of value will become explicit in Section~\ref{property-satisfication}).   	     
\end{itemize}
 
\paragraph{Exchange Graph.} We now construct the exchange graph of our instance and show that it is strongly connected. Observe that the disutility graph consists of connected components $D^k_i$ for  $k \in [K]$ and $i \in [n]$. Also observe that every component $D^k_i$ in the disutility graph comprises of exactly two chores $b^k_{2i-1}$ and $b^k_{2i}$. Therefore, to show that there exists an edge from component $D^{k'}_{i'}$ to $D^k_i$ in the exchange graph, it suffices to show that $D^{k'}_{i'}$ contains agents that own parts of chores $b^k_{2i-1}$ and $b^k_{2i}$. We now outline the edges in our exchange graph:
\begin{itemize}
	\item For all $i \in [n]$, and $2 \leq k \leq K$ there is an edge in the exchange graph from $D^k_i$ to $D^{k-1}_i$: $D^k_i$ contains the agents $a^{k-1}_{2i-1}$ and $a^{k-1}_{2i}$ that own parts of chores $b^{k-1}_{2i-1}$ and $b^{k-1}_{2i}$ respectively (see Figure~\ref{exchange-graph}).
	\item For all $i \in [n]$, there is an edge in the exchange graph from $D^1_i$ to $D^K_j$ for all $j \in [n]$: Consider any $j \in [n]$. Observe that the component $D^1_i$ contains the agents $a^K_{2j-1,2i}$ and $a^K_{2j,2i}$ and the agents $a^K_{2j-1,2i}$ and $a^K_{2j,2i}$ own parts of chores $b^K_{2j-1}$ and $b^K_{2j}$ respectively (see Figure~\ref{exchange-graph}).
\end{itemize}   

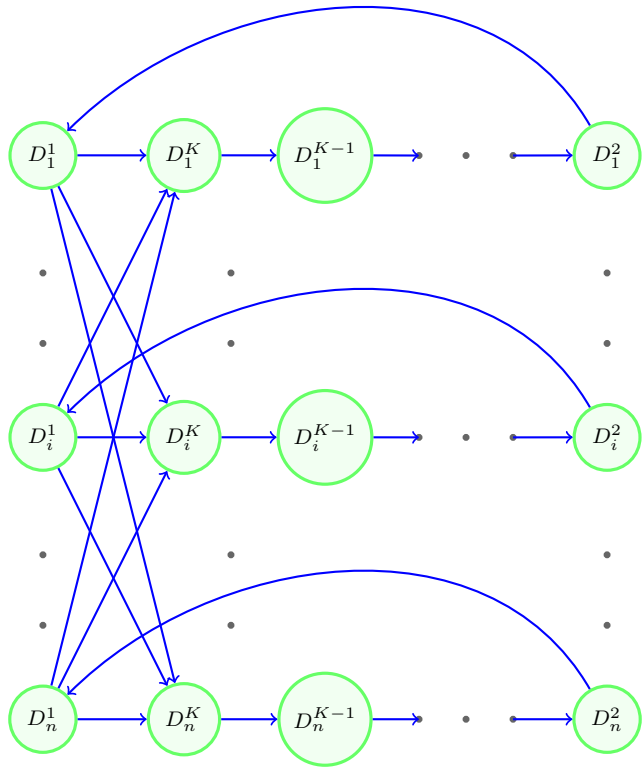
\begin{figure}
	\begin{center}
\begin{tikzpicture}[scale=1.25,
roundnode/.style={circle, draw=green!60, fill=green!5, very thick, minimum size=7mm},
squarednode/.style={rectangle, draw=red!60, fill=red!5, very thick, minimum size=5mm},
]
\node[roundnode]      (D11)      at (0,6)                       {$\scriptstyle{D^1_1}$};
\node[roundnode]      (Dk1)      at (1.5,6)                       {$\scriptstyle{D^{K}_1}$};
\node[roundnode]      (Dk'1)      at (3,6)                       {$\scriptstyle{D^{K-1}_1}$};
\node[roundnode]      (D21)      at (6,6)                       {$\scriptstyle{D^2_1}$};

\node[roundnode]      (D12)      at (0,3)                       {$\scriptstyle{D^1_i}$};
\node[roundnode]      (Dk2)      at (1.5,3)                       {$\scriptstyle{D^{K}_i}$};
\node[roundnode]      (Dk'2)      at (3,3)                       {$\scriptstyle{D^{K-1}_i}$};
\node[roundnode]      (D22)      at (6,3)                       {$\scriptstyle{D^2_i}$};

\node[roundnode]      (D13)      at (0,0)                       {$\scriptstyle{D^1_n}$};
\node[roundnode]      (Dk3)      at (1.5,0)                       {$\scriptstyle{D^{K}_n}$};
\node[roundnode]      (Dk'3)      at (3,0)                       {$\scriptstyle{D^{K-1}_n}$};
\node[roundnode]      (D23)      at (6,0)                       {$\scriptstyle{D^2_n}$};

\filldraw[color=black!60, fill=black!5, very thick](5,6) circle (0.02);
\filldraw[color=black!60, fill=black!5, very thick](4,6) circle (0.02);
\filldraw[color=black!60, fill=black!5, very thick](4.5,6) circle (0.02);

\filldraw[color=black!60, fill=black!5, very thick](5,3) circle (0.02);
\filldraw[color=black!60, fill=black!5, very thick](4,3) circle (0.02);
\filldraw[color=black!60, fill=black!5, very thick](4.5,3) circle (0.02);

\filldraw[color=black!60, fill=black!5, very thick](5,0) circle (0.02);
\filldraw[color=black!60, fill=black!5, very thick](4,0) circle (0.02);
\filldraw[color=black!60, fill=black!5, very thick](4.5,0) circle (0.02);

\filldraw[color=black!60, fill=black!5, very thick](0,4) circle (0.02);
\filldraw[color=black!60, fill=black!5, very thick](0,4.75) circle (0.02);

\filldraw[color=black!60, fill=black!5, very thick](2,4) circle (0.02);
\filldraw[color=black!60, fill=black!5, very thick](2,4.75) circle (0.02);

\filldraw[color=black!60, fill=black!5, very thick](6,4) circle (0.02);
\filldraw[color=black!60, fill=black!5, very thick](6,4.75) circle (0.02);

\filldraw[color=black!60, fill=black!5, very thick](0,4-3) circle (0.02);
\filldraw[color=black!60, fill=black!5, very thick](0,4.75-3) circle (0.02);

\filldraw[color=black!60, fill=black!5, very thick](2,4-3) circle (0.02);
\filldraw[color=black!60, fill=black!5, very thick](2,4.75-3) circle (0.02);

\filldraw[color=black!60, fill=black!5, very thick](6,4-3) circle (0.02);
\filldraw[color=black!60, fill=black!5, very thick](6,4.75-3) circle (0.02);

\draw[blue,->, thick] (D11)--(Dk1);
\draw[blue,->,thick] (Dk1)--(Dk'1);
\draw[blue,->,thick] (Dk'1)--(4,6);
\draw[blue,->, thick] (5,6)--(D21);

\draw[blue,->,thick] (D12)--(Dk2);
\draw[blue,->,thick] (Dk2)--(Dk'2);
\draw[blue,->,thick] (Dk'2)--(4,3);
\draw[blue,->,thick] (5,3)--(D22);

\draw[blue,->,thick] (D13)--(Dk3);
\draw[blue,->,thick] (Dk3)--(Dk'3);
\draw[blue,->,thick] (Dk'3)--(4,0);
\draw[blue,->,thick] (5,0)--(D23);


\draw[blue,->, thick] (D11)--(Dk2);
\draw[blue,->, thick] (D11)--(Dk3);
\draw[blue,->, thick] (D12)--(Dk1);
\draw[blue,->, thick] (D12)--(Dk3);
\draw[blue,->, thick] (D13)--(Dk2);
\draw[blue,->, thick] (D13)--(Dk1);

\path[blue,->,thick, out=120,in=45]    (D21) edge (D11);
\path[blue,->,thick, out=120,in=45]    (D22) edge (D12);
\path[blue,->,thick, out=120,in=45]    (D23) edge (D13);
\end{tikzpicture}
	\end{center}
	\caption{Illustration of the strong connectivity of the exchange graph. Observe that all nodes are reachable from any $D^1_i$ ($i \in [n]$). Also, from any arbitrary $D^{k'}_{i'}$, the node $D^{1}_{i'}$ is reachable and since every node is reachable from $D^1_{i'}$, every node is also reachable from $D^{k'}_{i'}$ as well. Therefore, the exchange graph is strongly connected.}
	\label{exchange-graph} 
\end{figure}

Observe that all nodes are reachable from any $D^1_i$ ($i \in [n]$). Also, from any arbitrary $D^{k'}_{i'}$, the node $D^{1}_{i'}$ is reachable and since every node is reachable from $D^1_{i'}$, every node is also reachable from $D^{k'}_{i'}$ as well. Therefore, the exchange graph is strongly connected. Therefore,

\begin{center}
	$E(I)$ satisfies Condition 2 in Section~\ref{sufficiency}.
\end{center}

Thus, $E(I)$ satisfies Conditions 1 and 2 in Section~\ref{sufficiency} and therefore admits a competitive equilibrium. Let $p(b^k_i)$ denote the price of any chore $b^k_i$ at a competitive equilibrium. We now prove that our instance satisfies the required properties of \emph{pairwise equal endowments}, \emph{price equality}, \emph{fixed earning}, \emph{price regulation} and \emph{reverse ratio amplification}.

\subsection{$E(I)$ Satisfies All the Properties}
\label{property-satisfication}

\paragraph{Pairwise Equal Endowments.} Here, we show that for all $i \in [n]$ and for all $k \in [K]$ the total endowment of $b^k_{2i-1}$ equals the total endowment of $b^k_{2i}$ and the total endowments of each chore in $E(I)$ is $\mathcal{O}(n)$.
\begin{lemma}\label{total-endowments}
	For all $i \in [2n]$, the total endowments of chores $b^k_{2i-1}$ and $b^{k}_{2i}$ is 
	\begin{enumerate}
		\item $n + n \cdot (1- \alpha_K)$, if $k=1$. In particular, $a'_{2i-1}$ and $a'_{2i}$ \emph{together}, own $n \cdot (1- \alpha_K)$ units of chores $b^k_{2i-1}$ and $b^{k}_{2i}$ each.  
		\item $n + \delta_k$, if $2 \leq k \leq K$. 
	\end{enumerate}
\end{lemma}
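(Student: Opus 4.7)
The plan is to verify each case by direct computation from the endowment definitions, with the only nontrivial step being to exploit the normalization $\M_{i,2j-1}+\M_{i,2j}=1$ of the polymatrix input. I would handle the three regimes $k=1$, $2\le k\le K-1$, and $k=K$ separately, since the agents carrying endowments differ in each.

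For $2\le k\le K-1$, the calculation is immediate: only $a^k_{2i-1}$ contributes $n$ units of $b^k_{2i-1}$, and $\A^k_i$ contributes $\delta_k$, giving $n+\delta_k$; symmetrically for $b^k_{2i}$. For $k=K$, the contributions to $b^K_{2i-1}$ come from $\{a^K_{2i-1,j}\}_{j\in[2n]}$ and from $\A^K_i$, so the total is $\sum_{j\in[2n]}\M_{2i-1,j}+\delta_K$. Here the key observation is that the polymatrix normalization $\M_{i,2j-1}+\M_{i,2j}=1$ summed over $j\in[n]$ yields $\sum_{j\in[2n]}\M_{i,j}=n$ for every row $i$, which turns this into $n+\delta_K$; the argument for $b^K_{2i}$ is identical.

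The case $k=1$ needs slightly more care because both the $a^1_\ell$ agents and the $a'_\ell$ agents hold endowments of $b^1_{2i-1}$ and $b^1_{2i}$. The $a^1_\ell$ contribution is simply $n$ (only $a^1_{2i-1}$ holds any $b^1_{2i-1}$). Then I would add the four $a'$-contributions:
\begin{align*}
&\tfrac{1}{2}(1-\alpha_K)\bigl(2n-\textstyle\sum_j \M_{j,2i-1}\bigr)+\tfrac{1}{2}(1-\alpha_K)\bigl(2n-\textstyle\sum_j \M_{j,2i}\bigr)\\
&\quad=\tfrac{1}{2}(1-\alpha_K)\bigl[4n-\textstyle\sum_{j\in[2n]}(\M_{j,2i-1}+\M_{j,2i})\bigr].
\end{align*}
Now the polymatrix normalization applied to each $j$ gives $\M_{j,2i-1}+\M_{j,2i}=1$, so the bracketed sum collapses to $4n-2n=2n$, yielding a total $a'$-contribution of $n(1-\alpha_K)$ to $b^1_{2i-1}$ (and, by an identical calculation, to $b^1_{2i}$). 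Adding the $n$ from $a^1_{2i-1}$ (resp.\ $a^1_{2i}$) gives $n+n(1-\alpha_K)$, which proves both the stated total endowment and the side remark that $a'_{2i-1}$ and $a'_{2i}$ together own $n(1-\alpha_K)$ units of each of the two chores.

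There is no real obstacle here; the lemma is a bookkeeping check. The only subtlety is remembering that the polymatrix game's normalization must be applied with the correct index being summed over: once in the form $\M_{j,2i-1}+\M_{j,2i}=1$ (summed over the first index $j\in[2n]$) for the $k=1$ case, and once in the form $\sum_{j\in[2n]}\M_{2i-1,j}=n$ (summed over the second index, obtained by pairing $2j-1$ with $2j$) for the $k=K$ case.
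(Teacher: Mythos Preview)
Your proposal is correct and follows essentially the same approach as the paper: the same three-way case split on $k$, the same direct summation of the relevant nonzero endowments, and the same two applications of the polymatrix normalization (once as $\M_{j,2i-1}+\M_{j,2i}=1$ for each $j$ in the $k=1$ case, once summed across column pairs to get $\sum_{j\in[2n]}\M_{2i,j}=n$ in the $k=K$ case). The only minor slip is the phrase ``four $a'$-contributions''---there are two $a'$-agents contributing to each single chore $b^1_{2i-1}$ or $b^1_{2i}$---but your displayed computation is correct.
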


\begin{proof}
	When $k=1$, the only agents that have positive endowments of $b^1_{2i}$  are $a^1_{2i}$ (has an endowment of $n$ ) , $a'_{2i}$ (has an endowment of $\tfrac{1}{2} \cdot (1 - \alpha_K) \cdot (2n - \sum_{j \in [2n]} \M_{j,2i})$) and $a'_{2i-1}$(has an endowment of  $\tfrac{1}{2} \cdot (1 - \alpha_K) \cdot (2n - \sum_{j \in [2n]} \M_{j,2i-1})$). Therefore, the total endowment of $b^1_{2i}$ from the agents $a'_{2i}$ and $a'_{2i-1}$ is 
	\begin{align*}
	&= \frac{1}{2} \cdot (1 - \alpha_K) \cdot (2n - \sum_{j \in [2n]} \M_{j,2i}) + \frac{1}{2} \cdot (1 - \alpha_K) \cdot (2n - \sum_{j \in [2n]} \M_{j,2i-1})\\
	&= \frac{1}{2} \cdot (1 - \alpha_K) \cdot (4n - \sum_{j \in [2n]} (\M_{j,2i} + \M_{j,2i-1})).
	\end{align*} 
	Recall that $\M_{j,2i} + \M_{j,2i-1} = 1$. Therefore, the total endowment of $b^1_{2i}$ from the agents $a'_{2i}$ and $a'_{2i-1}$ is 
	\begin{align*}
	&= \frac{1}{2} \cdot (1 - \alpha_K) \cdot (4n - 2n)\\
	&= (1 - \alpha_K) \cdot n.
	\end{align*} 
	Therefore, the total endowment of chore $b^1_{2i}$ is $n + n \cdot (1- \alpha_K)$.
	A similar argument will show that the total endowment of chore $b^1_{2i-1}$ is also $n + n \cdot (1 - \alpha_K)$ and that agents $a'_{2i-1}$ and $a'_{2i}$ \emph{together}, own $n \cdot (1- \alpha_K)$ units of it. 
	
	When $2 \leq k \leq K-1$, the only agents that have positive endowments of $b^k_{2i}$ are $a^k_{2i}$ (has an endowment of $n$) and  $\A_i^k$ (has an endowment of $\delta_k$). Therefore, the total endowment is $n + \delta_k$. A similar argument will show that the total endowment of chore $b^k_{2i-1}$ is also $n + \delta_k$. 
	
	When $k =K$, the only agents that have positive endowments of $b^K_{2i}$ are the agents $a_{2i,j}^K$ (has an endowment of $\M_{2i,j}$) for all $j \in [2n]$ and the agent $\A^K_i$ (has an endowment of $\delta_K$). Therefore, the total endowment of chore $b^K_{2i}$ is
	\begin{align*}
	&=\sum_{j \in [2n]} \M_{2i,j} + \delta_K\\
	&=\sum_{j \in [n]} (\M_{2i,2j-1} + \M_{2i,2j}) + \delta_K\\
	&=\sum_{j \in [n]} 1 + \delta_K \\
	&=n + \delta_K
	\end{align*}
	A similar argument will show that the total endowment of chore $b^K_{2i-1}$ is also $n + \delta_K$.
\end{proof}

\paragraph{Price Equality.} Here we will show that the sum of prices of chores $b^1_{2i-1}$ and $b^1_{2i}$ equals that of $b^K_{2i-1}$ and $b^K_{2i}$. Let us define \[\pi^k_i = p(b^k_{2i-1}) + p(b^k_{2i}),\ \ \ \forall i \in [2n], k \in [K]\enspace .\] 
Since the prices corresponding to a competitive equilibrium is scale-invariant, we can assume without loss of generality that $\pi^1_1 = 2$. We now state the main lemma of price equality:

\begin{lemma}
	\label{price-equality}
	For all $i \in [n]$ and for all $k \in [K]$, we have $\pi^k_i = 2$. 
\end{lemma}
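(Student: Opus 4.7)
The plan is to use Walras' law component-by-component, leveraging the structural fact that each component $D^k_i$ of the disutility graph has only two chores. The first observation to record is that by construction of the disutility matrix, the set of agents with disutility strictly less than $\tau$ for either of the chores $b^k_{2i-1}, b^k_{2i}$ is precisely the agent side of $D^k_i$, and those agents in turn have disutility $<\tau$ for no chore outside $\{b^k_{2i-1}, b^k_{2i}\}$. Hence at any competitive equilibrium, the agents of $D^k_i$ earn money only from doing these two chores, and those two chores are done exclusively by them. Combining the budget-balance condition of Problem~\ref{CD} with the market-clearing condition yields
\[
  S^k_{2i-1}\, p(b^k_{2i-1}) + S^k_{2i}\, p(b^k_{2i}) \;=\; \sum_{a \in D^k_i} \sum_{j} w(a,j)\, p(b_j),
\]
where $S^k_\ell$ denotes the total supply of chore $b^k_\ell$. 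By Lemma~\ref{total-endowments} we have $S^k_{2i-1} = S^k_{2i}$, so the left-hand side factors as (common supply)$\cdot \pi^k_i$.

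For each $k$ with $2 \le k \le K$, I would expand the right-hand side directly. The three agents of $D^k_i$ are $a^{k-1}_{2i-1}, a^{k-1}_{2i}, \A^k_i$; their endowments contribute $n\, p(b^{k-1}_{2i-1})$, $n\, p(b^{k-1}_{2i})$, and $\delta_k\, \pi^k_i$ respectively. Using the supply $n+\delta_k$ from Lemma~\ref{total-endowments} and cancelling the common $\delta_k\, \pi^k_i$ from both sides leaves $n\, \pi^k_i = n\, \pi^{k-1}_i$, i.e.\ $\pi^k_i = \pi^{k-1}_i$. Iterating in $k$ with $i$ fixed yields $\pi^k_i = \pi^1_i$ for every $k \in [K]$ and every $i \in [n]$.

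The $k=1$ component carries the cross-coupling between different indices $i$, and this is the step where the careful tuning of the construction matters. The agents of $D^1_i$ are the $4n$ agents $a^K_{i',2i-1}, a^K_{i',2i}$ (for $i' \in [2n]$) together with $a'_{2i-1}, a'_{2i}$. The combined $a^K$-contribution telescopes via the row-sum identity $\M_{i',2i-1} + \M_{i',2i} = 1$ to $\sum_{j \in [n]} \pi^K_j$; the $a'$-contribution simplifies via the symmetric choice of endowments, using $4n - \sum_j(\M_{j,2i-1}+\M_{j,2i}) = 2n$, to $n(1-\alpha_K)\,\pi^1_i$. With the supply $n(2-\alpha_K)$ from Lemma~\ref{total-endowments}, the $\pi^1_i$ terms on both sides cancel, and the equation reduces to $n\, \pi^1_i = \sum_{j \in [n]} \pi^K_j$. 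Substituting the chain identity $\pi^K_j = \pi^1_j$ from the previous paragraph gives $n\, \pi^1_i = \sum_{j \in [n]} \pi^1_j$ for every $i$, which forces all the $\pi^1_i$ to equal their common average. Together with the normalisation $\pi^1_1 = 2$, this yields $\pi^k_i = 2$ for all $k$ and $i$.

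The main obstacle I anticipate is purely bookkeeping in the $k=1$ component, where $4n+2$ agents each contribute to the income sum; the construction is engineered so that every term proportional to $\pi^1_i$ cancels on the two sides, but verifying this requires simultaneously using the row-sum identity on $\M$ and the specific symmetric form of the $a'$-endowments.
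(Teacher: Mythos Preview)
Your proposal is correct and follows essentially the same approach as the paper's proof: both arguments use the per-component budget balance (total price of the two chores in $D^k_i$ equals total endowment-value of the agents in $D^k_i$), first chaining $\pi^k_i=\pi^{k-1}_i$ for $2\le k\le K$ and then using the $k=1$ component together with the identity $\M_{i',2i-1}+\M_{i',2i}=1$ to get $n\,\pi^1_i=\sum_{j}\pi^K_j$. The only cosmetic difference is that the paper subtracts off the ``self-earning'' agents ($\A^k_i$, resp.\ $a'_{2i-1},a'_{2i}$) before equating, whereas you write the full balance equation and cancel; the computations are identical.
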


\begin{proof}
 We show this in two steps: First we show that $\pi^1_i = \pi^k_i$ for all $k \in [K]$. Then we show that $\pi^1_i = \tfrac{1}{n}\sum_{j \in [n]} \pi^K_j$ for all $i \in [n]$, implying that $\pi^1_i = \pi^1_j$ for all $i,j \in [n]$. Since for all $i \in [n]$ and $k \in [K]$, $\pi^k_i = \pi^1_i$ and $\pi^1_i = \pi^1_1$, we will have that $\pi^k_i = \pi^1_1 = 2$. 
 
 We first show $\pi^1_i = \pi^k_i$ for all $k \in [K]$: Consider any $k \in [2,K]$ and any $i \in [n]$. Observe that the agents $a^{k-1}_{2i-1}$, $a^{k-1}_{2i}$, $\A^k_i$ and chores $b^k_{2i-1}$, $b^{k}_{2i}$ form the connected component $D^k_i$ in the disutility graph. This implies that the agents $a^{k-1}_{2i-1}$, $a^{k-1}_{2i}$ and $\A^k_i$ earn all their money at a competitive equilibrium from chores $b^k_{2i-1}$ and  $b^{k}_{2i}$. Note that $\A^k_i$ owns $\delta_k$ units of both $b^k_{2i-1}$ and $b^k_{2i}$ only and has disutility less than $\tau$ also only for chores $b^k_{2i-1}$ and $b^k_{2i}$. Therefore, at a competitive equilibrium, $\A^k_i$ has to earn $\delta_k \cdot \pi^k_i$ money from chores $b^k_{2i-1}$ and $b^k_{2i}$. Thus, the total money the agents $a^{k-1}_{2i-1}$ and $a^{k-1}_{2i}$ earn from chores $b^k_{2i-1}$ and $b^k_{2i}$ is the total price of these chores remaining after $\A^k_i$ earns her share of $\delta_k \cdot \pi^k_i$, which is $(n + \delta_k) \cdot \pi^k_i - \delta_k \cdot \pi^k_i = n \cdot \pi^k_i$ (as the total endowment of each $b^k_{2i}$ and $b^k_{2i-1}$ is $n+ \delta_k$ for all $k \in [2,K]$ by Lemma~\ref{total-endowments}). At a competitive equilibrium, the total money earned by the agents $a^{k-1}_{2i-1}$ and $a^{k-1}_{2i}$ should be equal to the total prices of chores they own ($n$ units of $b^{k-1}_{2i-1}$ and $b^{k-1}_{2i}$ respectively). Thus we have, $n \cdot \pi^{k-1}_i = n \cdot \pi^k_i$. This, implies that, 
 \begin{align*}
  \pi^k_i = \pi^{k-1}_i = \dots = \pi^1_i.
 \end{align*}
 
 We now show that $\pi^1_i = \tfrac{1}{n}\sum_{j \in [n]} \pi^K_j$: This time we look into the connected component $D^1_i$ of the disutility graph. We can claim that the agents $\cup_{j \in [2n]} a^K_{j,2i-1}$, $\cup_{j \in [2n]} a^K_{j,2i}$ and the agents $a'_{2i-1}$ and $a'_{2i}$ earn all of their money at a competitive equilibrium from chores $b^1_{2i-1}$ and $b^1_{2i}$. Observe that both agents $a'_{2i-1}$ and $a'_{2i}$ own some units  of chores $b^1_{2i-1}$ and $b^1_{2i}$ only. Since the only chores towards which $a'_{2i-1}$ and $a'_{2i}$ have disutility less than $\tau$ are also $b^1_{2i-1}$ and $b^1_{2i}$, we can conclude that at a competitive equilibrium, to pay for their endowments, agents $a'_{2i-1}$ and $a'_{2i}$, together earn $n \cdot (1- \alpha_K) \cdot \pi^1_i$ amount of money from chores $b^1_{2i-1}$ and $b^1_{2i}$ (as from Lemma~\ref{total-endowments}, statement 1,  we have that $a'_{2i-1}$ and $a'_{2i}$ together own $n \cdot (1- \alpha_K)$ units of both chores $b^1_{2i-1}$ and $b^1_{2i}$). Thus, the total money agents $\cup_{j \in [2n]}a^K_{j,2i-1}$ and  $\cup_{j \in [2n]}a^K_{j,2i}$ earn at a competitive equilibrium is the total prices of chores $b^1_{2i-1}$ and $b^1_{2i}$ minus the total money earned by agents $a'_{2i-1}$ and $a'_{2i}$: $(n + n \cdot (1 - \alpha_K)) \cdot \pi^1_i - n \cdot (1 - \alpha_K) \cdot \pi^1_i = n \cdot \pi^1_i$. At a competitive equilibrium, the total money that these agents earn must equal the total prices of chores they own. Recall that each agent $a^{K}_{\ell,\ell'}$ owns $\M_{\ell,\ell'}$ units of $b^K_{\ell}$. Therefore, we have
 \begin{align*}
 n \cdot \pi^1_i &= \sum_{j \in [2n]} \M_{j,2i} \cdot p(b^K_j) +  \sum_{j \in [2n]} \M_{j,2i-1} \cdot p(b^K_j)\\
                 &= \sum_{j \in [2n]}(\M_{j,2i} + \M_{j,2i-1}) \cdot p(b^K_j)\\
                 &= \sum_{j \in [2n]} p(b^K_j) &\text{(using $\M_{j,2i-1} + \M_{j,2i} =1$)}\\
                 &=\sum_{j \in [n]} \pi^K_j 
 \end{align*}            	
 This implies that $\pi^1_i = \tfrac{1}{n}\sum_{j \in [n]} \pi^K_j$.
\end{proof}

\paragraph{Fixed Earning.} Here, we show that in every competitive equilibrium, the earning of each agent $a'_{i}$ for $i \in [2n]$ is fixed.

\begin{lemma}
	\label{fixed-earning}
	For all $i \in [2n]$, we have that the earning of agent $a'_i$ is $(1- \alpha_K) \cdot (2n - \sum_{j \in [2n]} \M_{j,i})$. 
\end{lemma}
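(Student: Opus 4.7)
The idea is to compute the earning of $a'_i$ directly from the endowment definition together with the price equality lemma just proved. At any competitive equilibrium $\langle p, X \rangle$, the budget constraint of $a'_i$ forces the total income $\sum_j w(a'_i, b_j) \cdot p(b_j)$ to equal the total spending $\sum_j X_{i'j} \cdot p(b_j)$, so showing the claimed value of the earning reduces to evaluating the price-weighted endowment.

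First I would recall that the only chores for which $a'_i$ has a positive endowment are $b^1_{2\lceil i/2 \rceil - 1}$ and $b^1_{2\lceil i/2 \rceil}$ (that is, the two chores in the component $D^1_{\lceil i/2 \rceil}$ of the disutility graph that contains $a'_i$), and moreover her endowment is balanced between these two chores. Concretely, for $i = 2\ell - 1$,
\[
w(a'_{2\ell-1}, b^1_{2\ell-1}) = w(a'_{2\ell-1}, b^1_{2\ell}) = \tfrac{1}{2}(1-\alpha_K)\bigl(2n - \sum_{j\in[2n]} \M_{j,2\ell-1}\bigr),
\]
and similarly for $i = 2\ell$. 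Therefore, the total income of $a'_i$ at equilibrium equals
\[
\tfrac{1}{2}(1-\alpha_K)\bigl(2n - \sum_{j\in[2n]} \M_{j,i}\bigr)\cdot\bigl(p(b^1_{2\lceil i/2\rceil - 1}) + p(b^1_{2\lceil i/2\rceil})\bigr) \;=\; \tfrac{1}{2}(1-\alpha_K)\bigl(2n - \sum_{j\in[2n]} \M_{j,i}\bigr)\cdot \pi^1_{\lceil i/2 \rceil}.
\]

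Next I would invoke Lemma \ref{price-equality}, which establishes that $\pi^k_\ell = 2$ for every $k \in [K]$ and every $\ell \in [n]$ under the normalization $\pi^1_1 = 2$. Substituting $\pi^1_{\lceil i/2\rceil} = 2$ into the expression above yields the earning of $a'_i$ as $(1-\alpha_K)\bigl(2n - \sum_{j\in[2n]} \M_{j,i}\bigr)$, which is the claim. The only subtle step is the use of Lemma \ref{price-equality}, but that has already been proven, so no additional obstacle arises; the rest is direct evaluation of the budget identity at equilibrium.
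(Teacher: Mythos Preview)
Your proposal is correct and follows essentially the same approach as the paper: compute the price-weighted endowment of $a'_i$, factor out the common endowment level to obtain $\tfrac{1}{2}(1-\alpha_K)(2n-\sum_j \M_{j,i})\cdot \pi^1_{\lceil i/2\rceil}$, and then apply Lemma~\ref{price-equality} to substitute $\pi^1_{\lceil i/2\rceil}=2$. The only cosmetic difference is that the paper splits into the cases $i=2i'$ and $i=2i'-1$ explicitly, whereas you handle both at once via $\lceil i/2\rceil$.
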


\begin{proof}
	Let $i = 2i'$. Then agent $a'_{2i'}$ owns $\tfrac{1}{2} \cdot (1- \alpha_K) \cdot (2n - \sum_{j \in [2n]} \M_{j,2i'})$ units of both chores $b^1_{2i'-1}$ and $b^1_{2i'}$. Since the earning of any agent at a competitive equilibrium equals the sum of prices of chores she owns, we have that the earning of agent $2i'$ is  
	\begin{align*}
	&=\frac{1}{2} \cdot(1- \alpha_K) \cdot (2n - \sum_{j \in [2n]} \M_{j,2i'}) \cdot (p(b^1_{2i'-1}) + p(b^1_{2i'}))\\ 
	&=\frac{1}{2} \cdot(1- \alpha_K) \cdot (2n - \sum_{j \in [2n]} \M_{j,2i'}) \cdot \pi^1_{i'} \\
	&=\frac{1}{2} \cdot (1- \alpha_K) \cdot (2n - \sum_{j \in [2n]} \M_{j,2i'}) \cdot 2 &\text{(by Lemma~\ref{price-equality})}.\\
	&= (1- \alpha_K) \cdot (2n - \sum_{j \in [2n]} \M_{j,2i'}).
	\end{align*}
	Similarly, when $i = 2i'-1$ we can show that the total earning of agent $a'_{2i'-1}$ is $(1- \alpha_K) \cdot (2n - \sum_{j \in [2n]} \M_{j,2i'-1})$. Thus the total earning of any agent $a'_i$ in a competitive equilibrium is $(1- \alpha_K) \cdot (2n - \sum_{j \in [2n]} \M_{j,i})$. 
\end{proof}

\paragraph{Price Regulation.} Here, we show that for all $k \in [K]$ and $i \in [2n]$ the ratio of the prices of chores $b^k_{2i-1}$ and $b^k_{2i}$ is bounded.

\begin{lemma}
	\label{price-regulation}
	For all $k \in [K]$ and for all $i \in [n]$,  we have $\tfrac{1 - \alpha_k}{1 + \alpha_k} \leq \tfrac{p(b^k_{2i-1})}{p(b^k_{2i})} \leq \tfrac{1 + \alpha_k}{1 - \alpha_k}$.
\end{lemma}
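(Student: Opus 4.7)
The plan is to prove both bounds by the same contradiction argument: if the ratio of prices in a pair $(b^k_{2i-1}, b^k_{2i})$ were too extreme, then \emph{no} agent in the component $D^k_i$ of the disutility graph would have the ``cheaper'' chore in her minimum-pain-per-buck bundle, so that chore could not be completely assigned at equilibrium.

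First I would catalogue, for each $k \in [K]$ and $i \in [n]$, the agents whose disutility is less than $\tau$ for $b^k_{2i-1}$ or $b^k_{2i}$ (by Condition~1 these are exactly the agents in $D^k_i$), together with their disutility profile on the pair. Inspecting the construction, every such agent falls into one of three profile types on the ordered pair $(b^k_{2i-1}, b^k_{2i})$: type $L = (1-\alpha_k, 1+\alpha_k)$ (e.g., $a^{k-1}_{2i-1}$ for $2\le k\le K$, and $a'_{2i-1}$ together with $a^K_{j,2i-1}$ for $k=1$); type $R = (1+\alpha_k, 1-\alpha_k)$ (e.g., $a^{k-1}_{2i}$ and $a'_{2i}$, $a^K_{j,2i}$); and for $2\le k\le K$ the auxiliary type $E = (1-\alpha_k, 1-\alpha_k)$ corresponding to $\A^k_i$. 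By Condition~1, all other agents in the instance have disutility $\ge\tau$ for $b^k_{2i-1}$ and $b^k_{2i}$, so Problem~\ref{CD} condition~1 forbids them from consuming any amount of these two chores.

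Next I would prove the upper bound by contradiction. Suppose $\tfrac{p(b^k_{2i-1})}{p(b^k_{2i})} > \tfrac{1+\alpha_k}{1-\alpha_k}$ for some $k,i$. A direct calculation shows that (i) for type $R$ agents, $(1+\alpha_k)/p(b^k_{2i-1}) < (1-\alpha_k)/p(b^k_{2i})$, which is exactly the assumed inequality after cross-multiplying; (ii) for type $L$ agents, $(1-\alpha_k)/p(b^k_{2i-1}) < (1+\alpha_k)/p(b^k_{2i})$, which follows from the even weaker inequality $p(b^k_{2i-1})/p(b^k_{2i}) > (1-\alpha_k)/(1+\alpha_k)$; and (iii) for the type $E$ agent $\A^k_i$, $(1-\alpha_k)/p(b^k_{2i-1}) < (1-\alpha_k)/p(b^k_{2i})$ since $p(b^k_{2i-1})>p(b^k_{2i})$. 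In all three cases the disutility-per-price ratio is strictly smaller at $b^k_{2i-1}$ than at $b^k_{2i}$, so the equilibrium constraint $X_{ab} > 0 \Rightarrow d(a,b)/p(b) \le d(a,b')/p(b')$ forces $X_{a,b^k_{2i}} = 0$ for every such agent $a$. Hence $\sum_{a} X_{a,b^k_{2i}} = 0$, contradicting market clearing at $b^k_{2i}$ since its total endowment is positive by Lemma~\ref{total-endowments}.

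Finally, the lower bound $\tfrac{p(b^k_{2i-1})}{p(b^k_{2i})} \ge \tfrac{1-\alpha_k}{1+\alpha_k}$ follows by the completely symmetric argument with the roles of $b^k_{2i-1}$ and $b^k_{2i}$ swapped: if the ratio were strictly smaller than $(1-\alpha_k)/(1+\alpha_k)$, then every agent in $D^k_i$ (types $L$, $R$, and $E$) would strictly prefer $b^k_{2i}$ to $b^k_{2i-1}$, leaving $b^k_{2i-1}$ unassigned. The only real work is the little algebraic verification for each profile type, and there is no genuine obstacle beyond keeping track of which agents actually sit in $D^k_i$ for $k=1$ versus $2 \le k \le K$; the $k=1$ case is slightly bulkier because the component $D^1_i$ contains the large sets $\cup_j a^K_{j,2i-1}, \cup_j a^K_{j,2i}$ along with $a'_{2i-1}, a'_{2i}$, but all of these still decompose into the same two profile types $L$ and $R$, so the argument goes through unchanged.
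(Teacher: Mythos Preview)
Your proposal is correct and follows essentially the same approach as the paper's proof: both argue by contradiction that an extreme price ratio would leave one chore in the pair strictly dominated for every agent in the component $D^k_i$, so that chore cannot be fully assigned at equilibrium. The paper compresses this into two sentences without the explicit $L/R/E$ case split you carry out, but your more detailed verification of each profile type is exactly what underlies the paper's one-line assertion.
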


\begin{proof}
	We prove that $\tfrac{1 - \alpha_k}{1 + \alpha_k} \leq \tfrac{p(b^k_{2i-1})}{p(b^k_{2i})}$ by contradiction. The proof for the other case is symmetric. So assume that $\tfrac{1 - \alpha_k}{1 + \alpha_k} > \tfrac{p(b^k_{2i-1})}{p(b^k_{2i})}$. In that case, none of the agents in the connected component $D^k_i$ will do any part of chore $b^k_{2i-1}$ (as the disutility to price ratio of $b^k_{2i-1}$ will be strictly more than that of $b^k_{2i}$). Since all the other agents have a disutility of $\tau$ for $b^k_{2i-1}$, it will remain unallocated. Therefore, the current prices for chores are not the prices corresponding to a competitive equilibrium, which is a contradiction. 
\end{proof}

\paragraph{Reverse Ratio Amplification.} Lastly, we show the property that when the price of chore $b^k_i$ is at a limit, then the price of chore $b^{k+1}_i$ is at the opposite limit, i.e., when $p(b^k_i) = 1 + \alpha_k$, then we have $p(b^{k+1}_i) = 1 - \alpha_{k+1}$ and similarly when $p(b^k_i) = 1 - \alpha_k$, then we have $p(b^{k+1}_i) = 1 + \alpha_{k+1}$.  
 
\begin{lemma}
	\label{rev-ratio-amplification1}
	 For all $1 \leq k < K$ and $i \in [n]$, we have that,
	 \begin{enumerate}
	 	\item if $\tfrac{p(b^{k}_{2i-1})}{p(b^k_{2i})} = \tfrac{1 - \alpha_k}{1+\alpha_k}$, then $\tfrac{p(b^{k+1}_{2i-1})}{p(b^{k+1}_{2i})} = \tfrac{1 + \alpha_{k+1}}{1-\alpha_{k+1}}$, and  
	 	\item if $\tfrac{p(b^{k}_{2i-1})}{p(b^k_{2i})} = \tfrac{1 + \alpha_k}{1 - \alpha_k}$, then $\tfrac{p(b^{k+1}_{2i-1})}{p(b^{k+1}_{2i})} = \tfrac{1 - \alpha_{k+1}}{1 + \alpha_{k+1}}$.
	 \end{enumerate}
\end{lemma}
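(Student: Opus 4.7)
It suffices to prove statement 1; statement 2 follows by swapping the roles of the indices $2i-1$ and $2i$, under which the disutilities and endowments inside the component $D^{k+1}_i$ are completely symmetric. By Lemma~\ref{price-equality}, $\pi^k_i = \pi^{k+1}_i = 2$, so I parametrize
\[
p(b^k_{2i-1}) = 1-\alpha_k,\ \ p(b^k_{2i}) = 1+\alpha_k,\ \ p(b^{k+1}_{2i-1}) = 1-x,\ \ p(b^{k+1}_{2i}) = 1+x,
\]
where Lemma~\ref{price-regulation} at level $k+1$ forces $x \in [-\alpha_{k+1},\alpha_{k+1}]$; the goal becomes $x = -\alpha_{k+1}$.

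Since $D^{k+1}_i$ is an isolated complete bipartite component of the disutility graph, only its three agents $a^k_{2i-1}$, $a^k_{2i}$, $\A^{k+1}_i$ can purchase $b^{k+1}_{2i-1}$ and $b^{k+1}_{2i}$; the endowment matrix combined with Lemma~\ref{price-equality} gives their total earnings as $n(1-\alpha_k)$, $n(1+\alpha_k)$, and $2\delta_{k+1} = n\alpha_{k+1}$, respectively. Inspecting disutility-to-price ratios: $a^k_{2i-1}$ strictly prefers $b^{k+1}_{2i-1}$ when $x > -\alpha_{k+1}$ (indifferent at equality); $a^k_{2i}$ strictly prefers $b^{k+1}_{2i}$ when $x < \alpha_{k+1}$ (indifferent at equality); and $\A^{k+1}_i$ (equal disutility on both chores) strictly prefers whichever chore is more expensive. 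I will rule out every value of $x$ other than $-\alpha_{k+1}$ by examining market clearing for $b^{k+1}_{2i-1}$, using the calibrations $\delta_{k+1} = \tfrac{n\alpha_{k+1}}{2}$ and $\alpha_{k+1} = \tfrac{3}{2}\alpha_k$.

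\emph{Case I (strict interior, $x \in (-\alpha_{k+1},\alpha_{k+1})$):} Both $a^k$-agents go entirely to their unique favorite chore, while $\A^{k+1}_i$ goes to whichever is more expensive. For $x > 0$, market clearing yields $x(1+\tfrac{\alpha_{k+1}}{2}) = \alpha_k + \tfrac{\alpha_{k+1}}{2} = \tfrac{7\alpha_k}{4}$, so $x \approx \tfrac{7\alpha_k}{4} > \tfrac{3\alpha_k}{2} = \alpha_{k+1}$, violating the interior bound. For $x < 0$, the analogous calculation gives $x \approx +\tfrac{\alpha_k}{4}$, a sign contradiction. At $x = 0$, $\A^{k+1}_i$ splits her spending and market clearing forces her to place $\delta_{k+1} + n\alpha_k = \tfrac{7n\alpha_k}{4}$ on $b^{k+1}_{2i-1}$, exceeding her total income $n\alpha_{k+1} = \tfrac{3n\alpha_k}{2}$. \emph{Case II (the ``wrong'' extreme, $x = \alpha_{k+1}$):} Now $a^k_{2i}$ and $\A^{k+1}_i$ both strictly prefer $b^{k+1}_{2i}$, leaving $a^k_{2i-1}$ (indifferent) as the sole funder of $b^{k+1}_{2i-1}$; market clearing forces her expenditure there to be $(n+\delta_{k+1})(1-\alpha_{k+1}) = n(1 - \tfrac{3\alpha_k}{4} + O(\alpha_k^2))$, strictly more than her total income $n(1-\alpha_k)$, which is infeasible. \emph{Case III ($x = -\alpha_{k+1}$):} This is consistent: $a^k_{2i-1}$ and $\A^{k+1}_i$ pay fully into $b^{k+1}_{2i-1}$, and the now-indifferent $a^k_{2i}$ absorbs the residual through a split that a direct check places in $[0, n(1+\alpha_k)]$. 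Ruling out cases I and II therefore forces $x = -\alpha_{k+1}$, i.e.\ $\tfrac{p(b^{k+1}_{2i-1})}{p(b^{k+1}_{2i})} = \tfrac{1+\alpha_{k+1}}{1-\alpha_{k+1}}$.

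\textbf{Main obstacle.} The crux is the bookkeeping in Case I: I must verify that the \emph{forced} interior value of $x$ always lies outside $(-\alpha_{k+1},\alpha_{k+1})$, and this is precisely where the calibrations $\delta_{k+1} = \tfrac{n\alpha_{k+1}}{2}$ and $\alpha_{k+1}/\alpha_k = \tfrac{3}{2}$ are used. A slower growth of $\alpha_{k+1}$ would permit an interior equilibrium and destroy the amplification; a different endowment constant for $\A^{k+1}_i$ would either make Case II self-consistent or break Case III. Everything hinges on this knife-edge, so the rest of the argument is routine algebra once the ratio-regulation and price-equality already established are invoked.
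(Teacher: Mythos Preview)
Your argument is correct and complete; the case split on $x$ exhausts the price-regulation interval $[-\alpha_{k+1},\alpha_{k+1}]$ and the market-clearing arithmetic in each case is right. One slip: in your preference summary you have swapped the indifference points --- in fact $a^k_{2i-1}$ is indifferent at $x=\alpha_{k+1}$ (not $-\alpha_{k+1}$) and $a^k_{2i}$ is indifferent at $x=-\alpha_{k+1}$ (not $\alpha_{k+1}$). Your Case~II and Case~III analyses silently use the correct facts, so this is a typo rather than a gap, but you should fix the sentence.

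Methodologically the paper takes a slightly leaner route. Rather than parametrize by $x$ and solve for the forced value in each sub-case, the paper fixes attention on the single agent $a^k_{2i}$ (who earns $n(1+\alpha_k)$) and argues directly that she must purchase \emph{both} chores $b^{k+1}_{2i-1}$ and $b^{k+1}_{2i}$: assuming she buys only $b^{k+1}_{2i}$, the paper shows either (if $p(b^{k+1}_{2i})>1$) that she and $\A^{k+1}_i$ together would over-fund $b^{k+1}_{2i}$ beyond its total price $(n+\delta_{k+1})(1+\alpha_{k+1})$, or (if $p(b^{k+1}_{2i})\le 1$) that the total price of $b^{k+1}_{2i}$ is already below her earning alone. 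Once she buys both, her indifference pins the ratio immediately. Your exhaustive-$x$ approach buys a more mechanical and self-checking argument (and makes the role of the calibrations $\delta_{k+1}=\tfrac{n\alpha_{k+1}}{2}$, $\alpha_{k+1}=\tfrac32\alpha_k$ very transparent), while the paper's ``one agent must split'' argument is shorter and avoids solving any equations. Either is fine.
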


\begin{proof}
	We just show the proof of part 1. The proof for part 2 is symmetric. Let us assume that $\tfrac{p(b^{k}_{2i-1})}{p(b^k_{2i})} = \tfrac{1 - \alpha_k}{1+\alpha_k}$. By Lemma~\ref{price-equality}, we have that $\pi^k_i = p(b^k_{2i-1}) + p(b^k_{2i}) = 2$. Therefore, $p(b^k_{2i-1}) = 1 - \alpha_k$ and $p(b^k_{2i}) = 1+ \alpha_k$. Observe that agent $a^k_{2i}$ owns $n$ units of chore $b^k_{2i}$ and has disutility less than $\tau$ only for chores $b^{k+1}_{2i-1}$ and $b^{k+1}_{2i}$ ($a^k_{2i}$ belongs in the connected component $D^{k+1}_i$). Since at a competitive equilibrium, the total earning of agent $a^k_{2i}$ equals the sum of prices of chores she owns, we have that $a^k_{2i}$ earns $n \cdot p(b^k_{2i}) =n(1 + \alpha_k)$ amount of money from chores $b^{k+1}_{2i-1}$ and $b^{k+1}_{2i}$. Note that it suffices to show that $a^k_{2i}$ earns positive amount of money from both chores $b^{k+1}_{2i-1}$ and $b^{k+1}_{2i}$ as this would imply that  $\tfrac{p(b^{k+1}_{2i-1})}{p(b^{k+1}_{2i})} = \tfrac{1 + \alpha_{k+1}}{1-\alpha_{k+1}}$. Therefore, for the rest of the proof, we show that $a^k_{2i}$ earns positive amount of money from both chores $b^{k+1}_{2i-1}$ and $b^{k+1}_{2i}$. By Lemma~\ref{price-regulation}, we have $\tfrac{p(b^{k+1}_{2i-1})}{p(b^{k+1}_{2i})} \leq \tfrac{1 + \alpha_{k+1}}{1-\alpha_{k+1}}$, implying that agent $a^k_{2i}$ always earns positive amount of money from chore $b^{k+1}_{2i}$. Thus, it only suffices to show that $a^k_{2i}$ earns positive amount of money from chore $b^k_{2i-1}$ as well. We prove this by contradiction. So let us assume that $a^k_{2i}$ earns money only from chore $b^k_{2i}$. We will now show that the current prices of chores are not the prices corresponding to a competitive equilibrium by distinguishing between two cases,
	\begin{itemize}
		\item \textbf{$p(b^{k+1}_{2i}) = 1 + x$ for some $x > 0$:} In this case, we have $p(b^{k+1}_{2i-1}) = 1 -x$ (as $\pi^{k+1}_i=2$) and therefore $p(b^{k+1}_{2i}) > p(b^{k+1}_{2i-1})$. Observe that in this case, agent $\A^{k+1}_i$ will only earn her entire money of $\delta_{k+1} \cdot (p(b^{k+1}_{2i-1}) + p(b^{k+1}_{2i-1})) = 2\delta_{k+1}$ from $b^{k+1}_{2i}$ (as the disutility to price ratio of $b^{k+1}_{2i}$ is strictly smaller than that of $b^{k+1}_{2i-1}$). Therefore, we have that the total money agents $a^k_{2i}$ and $\A^{k+1}_i$ earn from $b^{k+1}_{2i}$ is,
		\begin{align*}
		  &=2\delta_{k+1} + n \cdot (1 + \alpha_{k})\\
		  &=2\delta_{k+1} + n \cdot (1 + \frac{2}{3} \cdot \alpha_{k+1}) &(\text{as $\alpha_{k+1} = \frac{3}{2} \cdot \alpha_k$})\\
		  &= n \cdot (1 + \frac{2}{3} \cdot\alpha_{k+1} + \frac{2\delta_{k+1}}{n})\\
		  &= n \cdot (1 + \frac{2}{3} \cdot \alpha_{k+1} + \alpha_{k+1}) &(\text{as $\delta_{k+1} = \frac{n}{2} \cdot \alpha_{k+1}$})\\
		  &>n \cdot (1 + \frac{3}{2} \cdot \alpha_{k+1} + \frac{\alpha_{k+1}^2}{2}) &(\text{as $\alpha_{k+1} \ll \frac{1}{3}$ by Claim~\ref{alpha-technical}})\\
		  &=n \cdot (1 + \frac{\alpha_{k+1}}{2}) \cdot (1 + \alpha_{k+1})\\
		  &=(n + \delta_{k+1}) \cdot (1 + \alpha_{k+1}) &(\text{as $\delta_{k+1} = \frac{n}{2} \cdot \alpha_{k+1}$}),
		\end{align*} 
		which is a contradiction, as the total price of $b^{k+1}_{2i}$ is at most $(n + \delta_{k+1}) \cdot (1 + \alpha_{k+1})$ (there is a total endowment of $n + \delta_{k+1}$ for chore $b^{k+1}_{2i}$ by Lemma~\ref{total-endowments}, and $p(b^{k+1}_{2i}) \leq 1+ \alpha_{k+1}$).
		\item \textbf{$p(b^{k+1}_{2i}) = 1 - x$ for $0 \leq x < \alpha_{k+1}$:} Since $x < \alpha_{k+1}$, agent $a^k_{2i}$ will earn her entire money only from chore $b^{k+1}_{2i}$ as the disutility to price ratio of chore $b^{k+1}_{2i}$ is still less than that of chore $b^{k+1}_{2i-1}$. Since the total endowment of $b^{k+1}_{2i}$ is $n+ \delta_{k+1}$ by Lemma~\ref{total-endowments} and $p(b^{k+1}_{2i}) = 1-x$, the total price of chore $b^{k+1}_{2i}$ is, 
		\begin{align*}
		 &=(n + \delta_{k+1}) \cdot (1-x)\\
		 &\leq (n+ \delta_{k+1})\\
		 &<(n + \frac{4}{3}\delta_{k+1}) \\
		 &=n \cdot (1 + \frac{4 \delta_{k+1}}{3n})\\
		 &=n \cdot (1 + \frac{2 \alpha_{k+1}}{3}) &(\text{as $\delta_{k+1} = \frac{n}{2} \cdot \alpha_{k+1}$})\\
		 &=n \cdot (1 + \alpha_k) &(\text{as $\alpha_{k+1} = \frac{3}{2} \cdot \alpha_k$}),
		\end{align*}   
		which is the total money that agent $a^{k}_{2i}$ earns from $b^{k+1}_{2i}$, which is a contradiction. \qedhere
	\end{itemize} 
\end{proof}

Since $K$ is even, a repeated application of Lemma~\ref{rev-ratio-amplification1} will yield the following lemma, 

\begin{lemma}
	\label{rev-ratio-amplification2}
	We have,
	\begin{enumerate}
		\item if $\tfrac{p(b^{1}_{2i-1})}{p(b^1_{2i})} = \tfrac{1 - \alpha_1}{1+\alpha_1}$, then $\tfrac{p(b^{K}_{2i-1})}{p(b^{K}_{2i})} = \tfrac{1 + \alpha_{K}}{1-\alpha_{K}}$, and  
		\item if $\tfrac{p(b^{1}_{2i-1})}{p(b^1_{2i})} = \tfrac{1 + \alpha_1}{1 - \alpha_1}$, then $\tfrac{p(b^{K}_{2i-1})}{p(b^{K}_{2i})} = \tfrac{1 - \alpha_{K}}{1 + \alpha_{K}}$.
	\end{enumerate}	
\end{lemma}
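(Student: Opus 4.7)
My plan is to prove Lemma~\ref{rev-ratio-amplification2} by straightforward induction on the level index $k$, using Lemma~\ref{rev-ratio-amplification1} as the inductive step. The key observation is that Lemma~\ref{rev-ratio-amplification1} acts as a ``parity flip'': at every step, a ratio pinned to the value $\tfrac{1-\alpha_k}{1+\alpha_k}$ at level $k$ is forced to the opposite extremal ratio $\tfrac{1+\alpha_{k+1}}{1-\alpha_{k+1}}$ at level $k+1$, and vice versa. So if we follow the chain of levels $1, 2, \dots, K$, the ratio alternates between the two extremal forms at every step.

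Concretely, for part (1), I would prove by induction on $k \in [K]$ the following statement: for each $i \in [n]$,
\begin{itemize}
\item if $k$ is odd, then $\tfrac{p(b^{k}_{2i-1})}{p(b^{k}_{2i})} = \tfrac{1-\alpha_k}{1+\alpha_k}$,
\item if $k$ is even, then $\tfrac{p(b^{k}_{2i-1})}{p(b^{k}_{2i})} = \tfrac{1+\alpha_k}{1-\alpha_k}$.
\end{itemize}
The base case $k=1$ is the hypothesis of part (1). For the inductive step, if the statement holds at level $k < K$, then Lemma~\ref{rev-ratio-amplification1} (applied to the relevant case: part 1 of that lemma when $k$ is odd, part 2 when $k$ is even) gives the desired extremal ratio at level $k+1$ with the opposite sign pattern, matching the claim at level $k+1$. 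Since $K$ is even by construction (recall $K = 2c\lceil \log n\rceil$), taking $k = K$ in the inductive statement yields $\tfrac{p(b^{K}_{2i-1})}{p(b^{K}_{2i})} = \tfrac{1+\alpha_K}{1-\alpha_K}$, which is exactly part (1). The proof of part (2) is symmetric, with the roles of the two extremal ratios swapped at every parity.

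There is really no obstacle here: the entire argument is a ``parity-tracking'' induction that mechanically chains Lemma~\ref{rev-ratio-amplification1}, and the only delicate point is making sure the parity of $K$ lines up correctly with the final desired ratio. The evenness of $K$ was chosen precisely to make this alignment work out; had $K$ been odd, we would have obtained the same ratio as at level $1$ instead of the reversed ratio. Thus the lemma is an immediate corollary of Lemma~\ref{rev-ratio-amplification1} together with the design choice that $K$ is even.
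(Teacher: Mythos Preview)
Your proposal is correct and is exactly the paper's approach: the paper simply states that ``since $K$ is even, a repeated application of Lemma~\ref{rev-ratio-amplification1}'' yields the result, and your parity-tracking induction is precisely the formalization of that repeated application.
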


Now that we have shown that our instance satisfies the desired properties of {price equality}, {fixed earning}, {price regulation} and {reverse ratio amplification}, we are ready to outline how to determine the equilibrium strategy vector $x$ for the instance $I$ of the polymatrix game, given the competitive equilibrium prices of the instance $E(I)$ of chore division:
\begin{align*}
x_i &= \frac{p(b^K_{i}) - (1-\alpha_K)}{2 \cdot \alpha_K}
\end{align*}

It is clear that given the prices of chores at a competitive equilibrium,  the equilibrium strategy  vector can be obtained in linear time. We will now show that $x$ is the desired equilibrium strategy vector for instance $I$ of the polymatrix game. 

\begin{lemma}
	\label{polymatrix-reduction}
	$x = \langle x_1,x_2, \dots, x_{2n} \rangle $ is an equilibrium strategy vector for the polymatrix game instance $I$.
\end{lemma}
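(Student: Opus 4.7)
The plan is to verify the three conditions defining an equilibrium strategy vector: (i) $x \geq 0$, (ii) $x_{2i-1} + x_{2i} = 1$ for all $i \in [n]$, and (iii) the two strict-inequality implications. Parts (i) and (ii) follow immediately from price regulation (Lemma~\ref{price-regulation}) and price equality (Lemma~\ref{price-equality}): together they give $1 - \alpha_K \leq p(b^K_j) \leq 1 + \alpha_K$ and $p(b^K_{2i-1}) + p(b^K_{2i}) = 2$, so $x_j \geq 0$ and $x_{2i-1} + x_{2i} = 1$ by direct substitution in the definition of $x$.

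For (iii), by the symmetry $2i-1 \leftrightarrow 2i$ it suffices to prove $x^T \M_{*,2i} > x^T \M_{*,2i-1} + 1/n \Rightarrow x_{2i-1} = 0$. Assume the hypothesis, and let $S_k := \{a^K_{j,k} : j \in [2n]\} \cup \{a'_k\}$ for $k \in \{2i-1, 2i\}$; these are exactly the agents with disutility below $\tau$ on $\{b^1_{2i-1}, b^1_{2i}\}$, so at equilibrium they earn entirely from these two chores. Combining the endowments of each $a^K_{j,k}$, the fixed earnings of $a'_k$ (Lemma~\ref{fixed-earning}), the row constraint $\M_{j,2i-1}+\M_{j,2i}=1$, and the identity $p(b^K_j) = 2\alpha_K x_j + (1-\alpha_K)$, a direct calculation gives the total earnings $T_k := 2\alpha_K\, x^T \M_{*,k} + 2n(1-\alpha_K)$. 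Lemma~\ref{total-endowments}(1) and price equality make the total price of $\{b^1_{2i-1}, b^1_{2i}\}$ equal $4n - 2n\alpha_K = T_{2i-1}+T_{2i}$, which also yields the useful identity $x^T \M_{*,2i-1} + x^T \M_{*,2i} = n$.

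Let $r := p(b^1_{2i-1})/p(b^1_{2i})$, which by Lemma~\ref{price-regulation} lies in $[\tfrac{1-\alpha_1}{1+\alpha_1}, \tfrac{1+\alpha_1}{1-\alpha_1}]$. In the strict-interior case, each $S_k$ strictly prefers its own favored chore, so market clearing gives $p(b^1_{2i}) = T_{2i}/(n(2-\alpha_K))$ and analogously for $b^1_{2i-1}$; the hypothesis then forces $p(b^1_{2i}) - p(b^1_{2i-1}) > \alpha_K/n^2$, while price equality and regulation cap this difference at $2\alpha_1$, yielding $\alpha_K < 2n^2 \alpha_1$, which contradicts Claim~\ref{alpha-technical}. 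In the boundary case $r = \tfrac{1-\alpha_1}{1+\alpha_1}$, agents in $S_{2i}$ still strictly prefer $b^1_{2i}$, so $S_{2i-1}$ alone does all of $b^1_{2i-1}$; requiring its $b^1_{2i}$-earnings to be non-negative and then using the identity $x^T \M_{*,2i-1} + x^T \M_{*,2i} = n$ gives $x^T \M_{*,2i-1} - x^T \M_{*,2i} \geq n\alpha_1 - 2n\alpha_1/\alpha_K \geq -2/n^3$, contradicting the hypothesis $x^T \M_{*,2i-1} - x^T \M_{*,2i} < -1/n$.

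Hence the remaining boundary case $r = \tfrac{1+\alpha_1}{1-\alpha_1}$ must hold. Lemma~\ref{rev-ratio-amplification2}(2) then yields $p(b^K_{2i-1})/p(b^K_{2i}) = \tfrac{1-\alpha_K}{1+\alpha_K}$, and combined with price equality $p(b^K_{2i-1}) = 1 - \alpha_K$, giving $x_{2i-1} = 0$ as required. The hard part will be ruling out the boundary case $r = \tfrac{1-\alpha_1}{1+\alpha_1}$, which the Section~\ref{mainres3} overview does not address; the quantitative separation $\alpha_K \gg n^2 \alpha_1$ arranged by Claim~\ref{alpha-technical} (via the chain length $K = 8\lceil\log n\rceil$) is precisely what defeats it.
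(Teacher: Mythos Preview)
Your proof is correct and follows essentially the same route as the paper: compute the earnings $T_k = 2\alpha_K\, x^T \M_{*,k} + 2n(1-\alpha_K)$ of the agents favoring $b^1_k$, show via price regulation that $S_{2i}$ cannot earn solely from $b^1_{2i}$, and then apply reverse ratio amplification once the boundary $r=(1+\alpha_1)/(1-\alpha_1)$ is forced.

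The only difference is organizational. The paper does not split into three cases; it argues directly that if the agents in $S_{2i}$ earn entirely from $b^1_{2i}$, then $p(b^1_{2i}) \geq T_{2i}/(n(2-\alpha_K))$ and $p(b^1_{2i-1}) \leq T_{2i-1}/(n(2-\alpha_K))$, whence $p(b^1_{2i})-p(b^1_{2i-1}) > \alpha_K/n^2 \gg \alpha_1$, violating price regulation. This single hypothesis already covers both your interior case \emph{and} your ``wrong'' boundary $r=(1-\alpha_1)/(1+\alpha_1)$, since at that boundary $S_{2i}$ still strictly prefers $b^1_{2i}$ and so still earns only from it. Thus your separate treatment of that boundary via the non-negativity of $S_{2i-1}$'s spillover onto $b^1_{2i}$ (and the auxiliary identity $x^T\M_{*,2i-1}+x^T\M_{*,2i}=n$) is valid but unnecessary: the same inequality you used in the interior case disposes of it.
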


\begin{proof}
First, observe that since our instance satisfies the price equality (Lemma~\ref{price-equality}) and price regulation (Lemma~\ref{price-regulation}) we have that for all $i \in [2n]$, $1 -\alpha_K \leq p(b^K_i) \leq 1 + \alpha_K$. Therefore, for all $i \in [2n]$   $x_i \geq 0$ . Furthermore, for all $i \in [n]$ we have $x_{2i-1} +x_{2i} =  \tfrac{p(b^K_{2i-1})  + p(b^K_{2i}) - 2(1-\alpha_K)}{2 \cdot \alpha_K} = \tfrac{2\alpha_K}{2\alpha_K} = 1$ (as our instance satisfies price equality: by Lemma~\ref{price-equality} we have $p(b^K_{2i-1}) + p(b^K_{2i})=2$). Now we will show that if $x^T \cdot \M_{*,{2i}} > x^T \cdot \M_{*,2i-1} + \tfrac{1}{n}$, then $ x_{2i-1} = 0$. The proof for the other symmetric condition will be similar. So let us assume that  $x^T \cdot \M_{*,{2i}} > x^T \cdot \M_{*,2i-1} + \tfrac{1}{n}$. Observe that the agents that have a disutility of $1- \alpha_1$ towards chore $b^1_{2i}$ are $\left\{ \cup_{j \in [2n]} a^K_{j,2i} \right\} \cup a'_{2i}$. Observe that at a competitive equilibrium, the total earning of the agents $\left\{ \cup_{j \in [2n]} a^K_{j,2i} \right\} \cup a'_{2i}$ equals the sum of prices of chores they own, which is,

\begin{align*}
&=\sum_{j \in [2n]} \M_{j,2i} \cdot p(b^K_j) + (1- \alpha_K) \cdot (2n - \sum_{j \in [2n]} \M_{j,2i}) &(\text{by Lemma~\ref*{fixed-earning}})\\
&=\sum_{j \in [2n]} \M_{j,2i} \cdot (2\alpha_K \cdot x_{j} + (1-\alpha_K)) +  (1- \alpha_K) \cdot (2n - \sum_{j \in [2n]} \M_{j,2i}) &\text{(substituting $p(b^K_j)$)}\\ 
&=\sum_{j \in [2n]} 2\alpha_K \cdot x_{j} \cdot \M_{j,2i} + (1-\alpha_K) \cdot\sum_{j \in [2n]}  \M_{j,2i} +  (1- \alpha_K) \cdot (2n - \sum_{j \in [2n]} \M_{j,2i})\\          
&= 2\alpha_K x^T \cdot \M_{*,2i} +  2n \cdot (1-\alpha_K).               
\end{align*}

Similarly, the total earning of the agents that have a disutility of $1 - \alpha_1$ towards $b^1_{2i-1}$ is  $2 \alpha_K x^T \cdot \M_{*,2i-1} +  2n \cdot (1-\alpha_K)$. Observe that the agents with disutility $1-\alpha_1$ towards $b^1_{2i}$ can earn all their money only from either $b^1_{2i}$ or $b^1_{2i-1}$ (as these are the only chores towards which they have disutility less than $\tau$). Also note that both chores $b^1_{2i-1}$ and $b^1_{2i}$ have the same total endowment which is $n + n \cdot (1 - \alpha_K)$ by Lemma~\ref{total-endowments}(part 1). Now if, the agents with disutility $1-\alpha_1$ towards $b^1_{2i}$ earn their money entirely from $b^1_{2i}$, then we will have $p(b^1_{2i}) \geq \tfrac{2\alpha_K x^T \cdot \M_{*,2i} +  2n \cdot (1-\alpha_K)}{n + n\cdot (1- \alpha_K)}$ and $p(b^1_{2i-1}) \leq \tfrac{2\alpha_K x^T \cdot \M_{*,2i-1} +  2n \cdot (1-\alpha_K)}{n + n\cdot (1- \alpha_K)}$. Since,  $x^T \cdot \M_{*,{2i}} > x^T \cdot \M_{*,2i-1} + \tfrac{1}{n}$. we have $p(b^1_{2i}) > p(b^1_{2i-1}) +  \tfrac{1}{n} \cdot \tfrac{2\alpha_K}{n + n \cdot (1- \alpha_K)} > p(b^1_{2i-1}) + \tfrac{\alpha_K}{n^2}$. Again, since $\tfrac{\alpha_K}{n^2} \gg \alpha_1$ (by Claim~\ref{alpha-technical}), we have that $\tfrac{p(b^1_{2i})}{p(b^1_{2i-1})} > \tfrac{1 + \alpha_1} {1 - \alpha_1}$, which is a contradiction as our instance satisfies price-regulation property (by Lemma~\ref{price-regulation}). Therefore, the agents that have a disutility of $1-\alpha_1$ towards $b^1_{2i}$ should also earn their money from $b^1_{2i-1}$. But this is only possible if $\tfrac{p(b^1_{2i})}{p(b^1_{2i-1})} = \tfrac{1 - \alpha_1}{1 + \alpha_1}$. Since our instance also satisfies the reverse ratio amplification, by Lemma~\ref{rev-ratio-amplification2} we have that $\tfrac{p(b^K_{2i})}{p(b^K_{2i-1})} = \tfrac{1 + \alpha_K}{1 - \alpha_K}$. Since $p(b^K_{2i}) + p(b^K_{2i-1}) = 2$ by price equality property (Lemma~\ref{price-equality}), we have that $p(b^K_{2i-1}) = 1 - \alpha_K$. Therefore, we have
\begin{align*}
x_{2i-1} &=\frac{(1- \alpha_K) - (1-\alpha_K)}{2 \cdot \alpha_K}\\
&=0.
\end{align*}

A very similar argument will show that when $x^T \cdot \M_{*,{2i-1}} > x^T \cdot \M_{*,2i} + \tfrac{1}{n}$, then $ x_{2i} = 0$.  
\end{proof}

Thus, this immediately implies the main result of this section.
\begin{theorem}
	\label{PPADhard}
	Let $\mathcal{I}$ be the set of all instances that satisfy Conditions 1 and 2 in Section~\ref{sufficiency}. Chore division is PPAD-hard even when restricted to the set of instances $\mathcal{I}$.
\end{theorem}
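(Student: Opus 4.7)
The plan is to prove PPAD-hardness by a polynomial-time reduction from the Normalized Polymatrix Game (Problem~\ref{prob:ne}), which is known to be PPAD-hard by~\cite{chen2017complexity}. Given an instance $I = \langle \M \rangle$, I would build an instance $E(I)$ of chore division that (a) satisfies both sufficiency Conditions 1 and 2, so that a competitive equilibrium exists by Theorem~\ref{existencethm}, and (b) enforces enough structural constraints at every equilibrium that a Nash equilibrium strategy vector of $I$ can be read off in polynomial time from the chore prices via $x_i = (p(b^K_i) - (1-\alpha_K))/(2\alpha_K)$.

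The construction would proceed in layers. I would introduce $K = 2c\lceil \log n\rceil$ sets of chores $B_1,\dots,B_K$ (each of size $2n$) and the corresponding agent sets $A_1,\dots,A_K$, with the ``input'' agents $a^K_{i,j}$ carrying the matrix entries $\M_{i,j}$ as their endowments of chores in $B_K$, and ``output'' agents $a'_i$ in $A_1$ whose endowments encode the row sums $(2n-\sum_j \M_{j,i})$. Disutilities would be set so that the disutility graph decomposes into complete bipartite components $D^k_i$, each on one pair $\{b^k_{2i-1},b^k_{2i}\}$ of chores, with disutility values $1\pm\alpha_k$ for a geometrically growing sequence $\alpha_1,\dots,\alpha_K$ with $\alpha_{k+1} = \tfrac{3}{2}\alpha_k$. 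Condition 1 is immediate from this disjoint-complete-bipartite structure; Condition 2 is verified by showing the exchange graph has edges $D^k_i \to D^{k-1}_i$ (via agents $a^{k-1}_{2i-1},a^{k-1}_{2i}$) and $D^1_i \to D^K_j$ (via the input agents), so it is strongly connected.

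The core of the proof is to establish the five gadget properties at every equilibrium. \emph{Pairwise equal endowments} is a direct bookkeeping using $\M_{i,2j-1}+\M_{i,2j}=1$. \emph{Price equality} $\pi^k_i = 2$ follows inductively: within each component $D^k_i$ the ``regulator'' agent $\A^k_i$ has symmetric endowment $\delta_k$ of both chores, so the money entering each layer equals $n\cdot \pi^k_i$, which chains through to give $\pi^{k-1}_i = \pi^k_i$, and the global coupling at $D^1_i$ yields $\pi^1_i = \tfrac{1}{n}\sum_j \pi^K_j$. \emph{Fixed earning} of $a'_i$ then follows arithmetically. \emph{Price regulation} $\tfrac{1-\alpha_k}{1+\alpha_k} \le p(b^k_{2i-1})/p(b^k_{2i}) \le \tfrac{1+\alpha_k}{1-\alpha_k}$ is forced because outside this range no agent in $D^k_i$ would be willing to touch the ``expensive-per-disutility'' chore, leaving it unallocated.

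The main obstacle, which is what the multi-layer construction is designed to overcome, is \emph{reverse ratio amplification}: showing that if the price ratio in layer $k$ is pushed to one extreme of the regulation band, then in layer $k+1$ it is pushed to the opposite extreme. I would argue this by contradiction on a single layer: if $p(b^k_{2i-1})/p(b^k_{2i}) = (1-\alpha_k)/(1+\alpha_k)$, then agent $a^k_{2i}$ brings in $n(1+\alpha_k)$ units of money into $D^{k+1}_i$. If she (together with $\A^{k+1}_i$) were to spend it all on $b^{k+1}_{2i}$, the required price would exceed the ceiling $(n+\delta_{k+1})(1+\alpha_{k+1})$ because $\alpha_{k+1} = \tfrac{3}{2}\alpha_k$ and $\delta_{k+1} = \tfrac{n}{2}\alpha_{k+1}$ have been calibrated exactly so that the inflow strictly exceeds capacity; a symmetric lower-bound argument rules out the opposite extreme. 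Hence she must spend on both chores, which is only possible when the ratio in layer $k+1$ sits exactly at the opposite extreme $(1+\alpha_{k+1})/(1-\alpha_{k+1})$. Iterating $K$ times (with $K$ even) amplifies $\alpha_1 = 1/n^{3c}$ up to $\alpha_K$ with $n^c\alpha_1 < \alpha_K \le 1/n^c$, which is the precision needed to separate $x^T\M_{*,2i}$ and $x^T\M_{*,2i-1}$ by $1/n$.

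Given the five properties, the final step is routine. Assuming $x^T\M_{*,2i} > x^T\M_{*,2i-1} + 1/n$, the total earnings of the coalition $\{a^K_{j,2i}\}_{j} \cup \{a'_{2i}\}$, computed via fixed earning and substitution $p(b^K_j) = 2\alpha_K x_j + (1-\alpha_K)$, come to $2\alpha_K x^T\M_{*,2i} + 2n(1-\alpha_K)$. If this money were spent entirely on $b^1_{2i}$, the resulting price ratio $p(b^1_{2i})/p(b^1_{2i-1})$ would exceed $(1+\alpha_1)/(1-\alpha_1)$, violating price regulation in layer $1$; hence the ratio must saturate at $(1-\alpha_1)/(1+\alpha_1)$. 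Reverse ratio amplification across the even $K$ layers then forces $p(b^K_{2i-1}) = 1-\alpha_K$, giving $x_{2i-1}=0$. The symmetric implication is identical, so $x$ is a valid Nash equilibrium of $I$, completing the reduction and establishing Theorem~\ref{PPADhard}.
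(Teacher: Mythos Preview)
Your proposal is correct and follows essentially the same approach as the paper: the same polynomial-time reduction from Normalized Polymatrix Game, the same $K$-layer construction with disutilities $1\pm\alpha_k$ and $\alpha_{k+1}=\tfrac{3}{2}\alpha_k$, the same verification of Conditions~1 and~2, the same five structural properties (pairwise equal endowments, price equality, fixed earning, price regulation, reverse ratio amplification) proved by the same arguments, and the same extraction $x_i = (p(b^K_i)-(1-\alpha_K))/(2\alpha_K)$. The paper's proof differs only in spelling out the two-case analysis in the reverse-ratio-amplification step (Lemma~\ref{rev-ratio-amplification1}) more explicitly than your outline.
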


\begin{proof}
	We bring all the points together. Normalized polymatrix game is PPAD-hard~\cite{chen2017complexity}. Given an instance $I$ of the normalized polymatrix game, in polynomial time we can determine the instance $E(I)$. $E(I)$ satisfies the sufficiency conditions mentioned in Section~\ref{sufficiency} and therefore admits a competitive equilibrium. Given the equilibrium prices for $E(I)$, in polynomial time we can determine the equilibrium strategy vector for the polymatrix game. Therefore, chore division is PPAD-hard even on instances that satisfy the sufficiency conditions in Section~\ref{sufficiency}.
\end{proof}


\newcommand{\etalchar}[1]{$^{#1}$}

\end{document}